\documentclass[12pt]{article}    
\pdfoutput=1

\usepackage[vmargin={1in, 1in},hmargin={1in, 1in}]{geometry}

\usepackage{latexsym}
\usepackage{amssymb}
\usepackage{amsmath}
\usepackage{amsthm}
\usepackage{mathtools}
\usepackage{amsfonts}
\usepackage{graphicx}
\usepackage{enumerate}
\usepackage{xcolor}
	\definecolor{MyDarkBlue}{rgb}{0,0.08,0.45}
\usepackage{hyperref}
	\hypersetup{linkcolor=MyDarkBlue, citecolor=MyDarkBlue, colorlinks=true}
\usepackage{setspace}
\usepackage{booktabs}
\usepackage{pgfplots}
\pgfplotsset{compat=1.18}
\usetikzlibrary{arrows.meta}
\usetikzlibrary{positioning}

\usepackage{natbib}
\newcommand\cites[1]{\citeauthor{#1}'s \citeyearpar{#1}}
\usepackage{hypernat}
\usepackage{booktabs}
\usepackage{caption}
\usepackage{bbm}

\usetikzlibrary{calc,arrows.meta}

\definecolor{paperred}{RGB}{255,55,55}
\definecolor{papergreen}{RGB}{75,115,0}

\tikzset{
  simplex/.style={draw=black,line width=0.9pt},
  redline/.style={draw=paperred,line width=0.45pt},
  greenline/.style={draw=papergreen,line width=0.45pt},
  vector/.style={->,>=Latex,draw=black,line width=1pt},
  pt/.style={circle,fill=black,inner sep=1.5pt},
  lab/.style={font=\footnotesize,inner sep=1pt},
  region/.style={font=\large\color{papergreen},inner sep=1pt}
}

\newtheorem{theorem}{Theorem}
\newtheorem*{theorem*}{Theorem}

\newtheorem{proposition}{Proposition}
\newtheorem{lemma}{Lemma}
\newtheorem{corollary}{Corollary}

\theoremstyle{definition}
\newtheorem{definition}{Definition}

\newtheorem{condition}{Condition}
\newtheorem{example}{Example}
\newtheorem{axiom}{Axiom}
\newtheorem*{axiom*}{Axiom}

\newcommand{\finex}{\leavevmode\unskip\penalty9999 \hbox{}\nobreak\hfill\quad\hbox{$\blacktriangle$}}
\newcommand{\finexhere}{\begingroup \let\mathqed\math@qedhere
    \let\qed@elt\setQED@elt \blacktriangle@stack\bRelax\bRelax \endgroup}

\newcommand{\cD}{\mathcal{D}}

\newcommand{\cU}{\mathcal{U}}

\newcommand{\bE}{\mathbb{E}}

\newcommand{\bP}{\mathbb{P}}

\newcommand{\bR}{\mathbb{R}}
\newcommand{\bZ}{\mathbb{Z}}

\newcommand{\supp}{\mathrm{supp\,}}

\newcommand{\ext}{\mathrm{ext\,}}

\newcommand{\fosd}{\geq_{\textrm{FOSD}}}

\newcommand{\Fosd}{>_{\textrm{FOSD}}}

\newcommand{\conv}{\mathrm{conv\,}}
\newcommand{\cone}{\mathrm{cone\,}}
\newcommand{\cones}{\mathrm{cone}_s}

\DeclareMathOperator{\1}{\mathbbm{1}}
\newcommand{\spimp}[0] {\hspace{10pt}  \implies \hspace{10pt}}
\newcommand{\spiff}[0] {\hspace{10pt}  \iff \hspace{10pt}}
\DeclareMathOperator{\spann}{span}
\DeclareMathOperator{\rank}{rank}

\expandafter\def\expandafter\normalsize\expandafter{%
    \normalsize%
    \setlength\abovedisplayskip{4pt}%
    \setlength\belowdisplayskip{4pt}%
    \setlength\abovedisplayshortskip{4pt}%
    \setlength\belowdisplayshortskip{4pt}%
}

\title{\vspace*{-2em}Axioms and Anomalies with Finite Data\thanks{We thank Joyee Deb, Ryota Iijima, Lasse Mononen, Ashesh Rambachan, and Weijie Zhong, the audience at the MIT-Harvard theory seminar, and Refine.ink for helpful comments. We thank Josh Peterson and Tom Griffiths for sharing the full individual-level \texttt{choices13k} dataset.}}
\author{Cheaheon Lim and Tomasz Strzalecki\thanks{Department of Economics, Harvard University. Date: \today.}}
\date{}

\begin{document}
	
\maketitle
\vspace*{-1em}
\begin{abstract}
	The classical expected utility (EU) axioms are not sufficient for finite datasets. There are a number of anomalies (violations of EU) where axioms are satisfied. This paper studies axioms that are immune to this problem and definitively delineate between EU and non-EU. We discuss implications for experimental design and explore the automatic generation of anomalies.
\end{abstract}

\section{Introduction}   
This paper studies situations where the analyst observes a finite dataset of pairwise choices between lotteries. There  are datasets of this form for which there is a gap between axioms and the representation: all of the \cite{vNM} axioms are satisfied, but there is no expected utility (EU) representation. The \cite{Allais53} paradox and more generally \cites{Machina87} common consequence effect  are examples of such datasets. In this sense, the classical axioms of decision theory are not sufficient: simply checking the axioms on the dataset does not detect all the possible violations of the theory. We explain this in detail in Section \ref{sec:setup}.

This problem was studied by \cite{Fishburn75}, who proposed a necessary and sufficient condition for EU on finite datasets. In Section \ref{sec:EU} we review this approach and develop our strong independence axiom. Based on this framework, in Section \ref{sec:REU}, we focus on the setting where recorded choices are stochastic (such as when the experimenter only reports population averages, not individual choice data). The elegant axiomatic characterization of \cite{GP06} requires an infinite amount of data. We develop axioms that are necessary and sufficient on small  datasets (our Theorem \ref{thm:REU}). 

Our work has several implications for experimental work, which we discuss in Section \ref{sec:applications}. First, some researchers purposely tie their hands and draw lottery pairs continuously at random.\footnote{For example, the experimental protocol of \cite{Erevetal17} randomly draws lottery pairs. The largest collection of pairwise choices between lotteries, the \texttt{choices13k} dataset, follows this protocol \citep{13k}.} Our Corollary \ref{cor:random-erev} shows that on such datasets, EU is satisfied with probability one (irrespective of the actual choices of the subjects). Thus, they are not useful for testing EU. The experimenter needs to design the experiment carefully to set up special traps for the subject, as \cite{Allais53} and \cite{KT79} did.\footnote{We also show that adding the requirement that the Bernoulli utility be monotone or concave does have bite even on randomly designed experiments. But we note that the classic anomalies such as the Allais paradox and certainty effect are violations of pure EU and cannot be detected using this data.}

Second, experimental economists often use a model of choice which is EU plus logit shocks. It is well known that such models can spuriously   generate Allais-style anomalies that are normally associated with non-EU preferences.  This is a confound that is taken seriously in the literature  because it means that we cannot distinguish truly non-EU behavior from EU plus noise \citep{ballinger1997decisions,Loomes05,hey2005we,blavatskyy2007stochastic,Wilcox08,blavatskyy2010reverse,bhatia2017noisy,Mcgranaghan24}. We show that testing a version of our strong linearity axiom entirely removes this confound. We design an experiment where the only violations of the axiom come from true non-EU behavior and the axiom is satisfied both for regular EU and EU plus logit noise agents.

We are also motivated by the recent development of  computational methods to automatically generate and classify anomalies  \citep{MR}. We show how our axioms relate to such anomalies, and  develop computational methods which find many anomalies that escape their classification. We think of our method as complementary to theirs because we impose a grid on the space of outcomes, whereas they search the bigger space of distributions. 

Finally, a common belief is that we need axioms because directly testing a theory is computationally impossible due to the infinite number of utility functions to check. Yet, we  illustrate that EU theory can be tested directly   in polynomial time. This is done exactly, without any approximations such as splines. Moreover, we show that on finite datasets it is actually easier to check the representation directly, rather than checking the axioms.

In the Appendix, we show how to extend our methods to axiomatize probability weighting, the key component of  \cites{KT79} prospect theory. We also discuss axioms for EU with increasing and/or concave utility. All proofs are in the Appendix. 


Our work connects to the literature on the empirical content of finite datasets \citep{Chambersetal14,mononen2024empirical}, which assume that the analyst observes both the strict and weak preferences. We do not allow for indifferences because then any finite dataset of pairwise choices can be rationalized (by complete indifference). This paper also relates to  the literature on incomplete preferences between lotteries (because a finite dataset can be seen as an incomplete relation). However, the classic independence axiom  \citep{aumann1962utility,DMR} is very demanding and implies that the dataset is infinite.


%
%
\section{Setup}\label{sec:setup}
Let  $\Delta(Z)$ be the set of lotteries (probability distributions) over some finite set $Z$.  We will use the common notation for mixing lotteries $\alpha p + (1-\alpha) r$, where $p, r\in \Delta(Z)$ and $\alpha\in [0, 1]$.

\begin{axiom}[Independence]\label{ax:ind}
	For all $\alpha\in (0,1)$ and $r\in \Delta(Z)$, if $p^1\succ q^1$ and
	\begin{align*}
		p^2&=\alpha p^1 + (1-\alpha) r\\
		q^2&=\alpha q^1 + (1-\alpha) r,
	\end{align*}
	then $p^2\succ q^2$.
\end{axiom}

The celebrated von Neumann–Morgenstern theorem says that if choices between all pairs of lotteries are observed, then independence (alongside weak order and continuity) guarantees that choices are represented by expected utility, 
$$U(p)=\sum_{z\in Z}u(z)p(z)$$
for some $u:Z\to \bR$. However, if the analyst only observes a finite number of choices, these axioms are not strong enough. A classic example of this is the Allais Paradox. 

\begin{example}[Allais Paradox]\label{ex:Allais}
	Suppose that 
	\begin{align*}
	p^1=\alpha \hat p +(1-\alpha) r^1, &\quad q^1=\alpha  \hat q +(1-\alpha) r^1,\\
	p^2=\alpha \hat p +(1-\alpha) r^2,&\quad  q^2=\alpha  \hat q +(1-\alpha) r^2, 		
	\end{align*}
	for some $\alpha\in (0, 1)$ and $\hat p, \hat q,r^1,r^2\in \Delta(Z)$. 
	
	If the analyst only observes the choices $p^1\succ q^1$ and $q^2\succ p^2$, then the independence axiom is satisfied (vacuously). However, an EU representation does not exist, because if it did, then 
	$$U(p^1)-U(q^1)=\alpha[U(\hat p)-U(\hat q)]=U(p^2)-U(q^2),$$
so these two utility differences would have the same sign. The key is that choices between  $\hat p$ and $\hat q$ are not observed by the analyst and thus not a part of the dataset.

Note that this is a whole family of anomalies. The actual Allais paradox corresponds to specific numerical values of those lotteries. Machina's common consequence effect is a sub-family, where some restrictions are made on $\hat p, \hat q, r^1, r^2$.  \finex
\end{example}

\begin{figure}[h!]
\centering
\begin{tikzpicture}[scale=1.5]
  \tikzset{
    simplex/.style={draw=black,line width=0.9pt},
    redline/.style={draw=red,line width=0.45pt},
    greenline/.style={draw=green!35!black,line width=0.45pt},
    pt/.style={circle,fill=black,inner sep=1.45pt},
    lab/.style={font=\footnotesize,inner sep=1pt}
  }


  \begin{scope}
    \coordinate (A) at (0,0);
    \coordinate (B) at (4.25,0);
    \coordinate (C) at (2.125,3.10);

    \coordinate (R)    at (2.43,0.15);
    \coordinate (Pone) at (1.25,1.76);
    \coordinate (Qone) at (2.43,2.18);
    \coordinate (Ptwo) at ($(Pone)!0.70!(R)$);
    \coordinate (Qtwo) at ($(Qone)!0.70!(R)$);

    \draw[simplex] (A)--(C)--(B)--cycle;

    \draw[redline] ($(Pone)!-0.25!(Qone)$)--($(Pone)!1.35!(Qone)$);
    \draw[redline] ($(Ptwo)!-0.45!(Qtwo)$)--($(Ptwo)!1.80!(Qtwo)$);

    \draw[greenline] (Pone)--(R);
    \draw[greenline] (Qone)--(R);

    \foreach \p in {Pone,Qone,Ptwo,Qtwo,R}
      \node[pt] at (\p) {};

    \node[lab,anchor=east] at ($(Pone)+(-0.12,0.07)$) {$p^1$};
    \node[lab,anchor=west] at ($(Qone)+(-0.2,0.2)$) {$q^1$};
    \node[lab,anchor=north east] at ($(Ptwo)+(-0.06,-0.07)$) {$p^2$};
    \node[lab,anchor=west] at ($(Qtwo)+(0.10,-0.1)$) {$q^2$};
    \node[lab,anchor=west] at ($(R)+(0.12,-0.02)$) {$r$};
  \end{scope}

  \begin{scope}[xshift=5.15cm]
    \coordinate (A) at (0,0);
    \coordinate (B) at (4.25,0);
    \coordinate (C) at (2.125,3.10);

    \coordinate (Phat) at (1.45,1.88);
    \coordinate (Qhat) at (2.18,2.48);
    \coordinate (Rone) at (1.25,0.70);
    \coordinate (Rtwo) at (3.38,0.42);

\coordinate (Pone) at ($(Rone)!0.35!(Phat)$);
\coordinate (Qone) at ($(Rone)!0.35!(Qhat)$);
\coordinate (Ptwo) at ($(Rtwo)!0.55!(Phat)$);
\coordinate (Qtwo) at ($(Rtwo)!0.55!(Qhat)$);

    \draw[simplex] (A)--(C)--(B)--cycle;

    \draw[greenline] (Phat)--(Rone);
    \draw[greenline] (Qhat)--(Rone);
    \draw[greenline] (Phat)--(Rtwo);
    \draw[greenline] (Qhat)--(Rtwo);

    \draw[redline] ($(Phat)!-0.25!(Qhat)$)--($(Phat)!1.25!(Qhat)$);
    \draw[redline] ($(Pone)!-0.35!(Qone)$)--($(Pone)!1.35!(Qone)$);
    \draw[redline] ($(Ptwo)!-0.35!(Qtwo)$)--($(Ptwo)!1.35!(Qtwo)$);

    \foreach \p in {Phat,Qhat,Rone,Rtwo,Pone,Qone,Ptwo,Qtwo}
      \node[pt] at (\p) {};

    \node[lab,anchor=east] at ($(Phat)+(-0.12,0.10)$) {$\hat p$};
    \node[lab,anchor=west] at ($(Qhat)+(-0.1,0.2)$) {$\hat q$};

    \node[lab,anchor=east] at ($(Pone)+(-0.10,0.02)$) {$p^1$};
    \node[lab,anchor=west] at ($(Qone)+(0.09,0.02)$) {$q^1$};

    \node[lab,anchor=north east] at ($(Ptwo)+(-0.08,-0.04)$) {$p^2$};
    \node[lab,anchor=west] at ($(Qtwo)+(0.10,0.02)$) {$q^2$};

    \node[lab,anchor=north] at ($(Rone)+(0,-0.10)$) {$r^1$};
    \node[lab,anchor=west] at ($(Rtwo)+(0.10,-0.02)$) {$r^2$};
  \end{scope}
\end{tikzpicture}
\caption{The lotteries in the independence axiom (left) and the Allais paradox (right).}
\label{fig:independence-allais}
\end{figure}


While the above choices of $p^1$ over $q^1$ and $q^2$ over $p^2$ are not a direct violation of any of the vNM axioms, they violate these axioms \emph{plus their logical implications}. Suppose we wrote down a computer program that formally checks whether each of the vNM axioms holds on a given dataset. This program would say that independence is vacuously satisfied on the dataset above. To rule these choices out based on the vNM axioms, we would need to write down a different computer program that  checks not only  the premise of each axiom, but also the logical deductions that follow from the axioms. That would be a much harder program to write given that there is a continuum of possible deductions to make.\footnote{This would be formally equivalent to checking whether the original incomplete preference can be extended to a full preference over $\Delta(Z)$ that satisfies all the axioms.} To avoid these difficulties, it would be much more convenient to have axioms that can be checked directly, with all possible logical deductions already built in.

In the deterministic case, we say  $D=\big((p_i, q_i, y_i)\big)_{i=1}^n$ is a \emph{dataset} if $p_i\neq q_i$ are both in $\Delta(Z)$, and 
$$y_i=\begin{cases}
	1 &\text{ if }p_i \text{ chosen}\\
	0 &\text{ if }q_i \text{ chosen}.
\end{cases}$$
We interpret $y_i=1$ as $p_i\succ q_i$ as opposed to $p_i\succsim q_i$. According to the latter interpretation, any dataset can be rationalized as  EU with a constant utility function.\footnote{Some experimental protocols allow the agent to express indifference explicitly. In this case we would denote indifference as $y_i=0.5$. We expect that most of our results would extend to this setting and that indifference would add additional uniqueness properties to our representations. We do not follow this route in the paper because most experiments force the agent to make a choice.}

We say that $D$ is \emph{EU-consistent} if there exists $u\in \bR^Z$ such that $$( 2y_i-1) \langle u, p_i-q_i \rangle>0$$ for all $i=1, \ldots, n$, where $\langle \cdot, \cdot \rangle$ denotes the inner product in $\bR^Z$. Otherwise, $D$ is a \emph{counterexample}. We say that $D$ is an \emph{anomaly} if it is a minimal counterexample in the sense of set-inclusion. \cite{MR} call them ``logical anomalies'' because, for many of them, we do not have actual data on human behavior;  one could also call them ``potential anomalies.'' For simplicity, in this paper, we  refer to them as ``anomalies,'' with the hope that future experimental work will close this gap.

In the stochastic case, a \emph{dataset}  is  $D=\big((p_i, q_i, \varrho_i)\big)_{i=1}^n,$ where $p_i\neq q_i$ are both in $\Delta(Z)$, and $\varrho_i\in [0, 1]$ is the fraction of subjects who chose lottery $p_i$ over $q_i$.  We say that $D$ is \emph{REU-consistent} if there exists a Borel probability distribution  $\mu\in \Delta(\bR^Z)$ such that $\varrho_i=\mu(\{u\in \bR^Z: \langle u, p_i-q_i \rangle>0\})$ for all $i=1, \ldots, n$, and ties have $\mu$-zero probability.

\section{Deterministic Choice}\label{sec:EU}

We know that EU has linear and parallel indifference curves. The following axiom captures this idea precisely.

\begin{axiom}[Strong Independence]\label{ax:SI}
	For all $\lambda>0$, if  $p^2-q^2=\lambda (p^1-q^1)$ and $p^1\succ q^1$, then $p^2\succ q^2$.
\end{axiom}


This axiom picks up all possible cases of ``parallel'' in a way that Axiom \ref{ax:ind} does not. For example, consider  the lotteries in Figure \ref{fig:lambda}, where the lines given by $p^1-q^1$ and $p^2-q^2$ are parallel. Independence does not apply here because $r\in \bR^Z$ in Figure \ref{fig:lambda} is outside of the simplex, and  independence therefore holds \emph{vacuously}.  Another case is when $\lambda=1$, such that the two vectors are of equal length, and there does not exist any $r\in \bR^Z$ for which the premise of the independence axiom holds (this case will be important in the sequel).

\begin{figure}[h!]
\centering
\vspace{20pt}
\begin{tikzpicture}[scale=1.5]
  \begin{scope}
    \coordinate (A) at (0,0);
    \coordinate (B) at (4.25,0);
    \coordinate (C) at (2.125,3.10);
    \coordinate (R) at (2.62,-0.62);
    \coordinate (Pone) at (1.17,1.72);
    \coordinate (Qone) at (2.43,2.12);
    \coordinate (Ptwo) at ($(R)!0.45!(Pone)$);
    \coordinate (Qtwo) at ($(R)!0.45!(Qone)$);

    \draw[simplex] (A)--(C)--(B)--cycle;
    \draw[redline] ($(Pone)!-0.25!(Qone)$)--($(Pone)!1.35!(Qone)$);
    \draw[redline] ($(Ptwo)!-0.65!(Qtwo)$)--($(Ptwo)!1.85!(Qtwo)$);
    \draw[greenline] (Pone)--(R);
    \draw[greenline] (Qone)--(R);

    \foreach \p in {Pone,Qone,Ptwo,Qtwo,R}
      \node[pt] at (\p) {};

    \node[lab,anchor=east] at ($(Pone)+(-0.18,0.16)$) {$p^1$};
    \node[lab,anchor=west] at ($(Qone)+(-0.2,0.3)$) {$q^1$};
    \node[lab,anchor=east] at ($(Ptwo)+(-0.16,0.15)$) {$p^2$};
    \node[lab,anchor=west] at ($(Qtwo)+(0.1,0.3)$) {$q^2$};
    \node[lab,anchor=west] at ($(R)+(0.12,-0.03)$) {$r$};
  \end{scope}

  \begin{scope}[xshift=5.05cm]
    \coordinate (A) at (0,0);
    \coordinate (B) at (4.25,0);
    \coordinate (C) at (2.125,3.10);
    \coordinate (Pone) at (1.28,1.70);
    \coordinate (Qone) at (2.72,2.12);
    \coordinate (Ptwo) at (1.85,0.55);
    \coordinate (Qtwo) at (3.29,0.97);

    \draw[simplex] (A)--(C)--(B)--cycle;
    \draw[redline] ($(Pone)!-0.3!(Qone)$)--($(Pone)!1.3!(Qone)$);
    \draw[redline] ($(Ptwo)!-0.3!(Qtwo)$)--($(Ptwo)!1.3!(Qtwo)$);
    \draw[greenline] (Pone)--(Ptwo);
    \draw[greenline] (Qone)--(Qtwo);

    \foreach \p in {Pone,Qone,Ptwo,Qtwo}
      \node[pt] at (\p) {};

    \node[lab,anchor=east] at ($(Pone)+(-0.18,0.15)$) {$p^1$};
    \node[lab,anchor=west] at ($(Qone)+(0,0.3)$) {$q^1$};
    \node[lab,anchor=north east] at ($(Ptwo)+(-0.1,-0.1)$) {$p^2$};
    \node[lab,anchor=west] at ($(Qtwo)+(0,-0.20)$) {$q^2$};
  \end{scope}
\end{tikzpicture}
        \caption{The lines given by $p^1-q^1$ and $p^2-q^2$ are parallel. In the left panel $r\notin \Delta(Z)$. In the right panel $\lambda=1$.} \label{fig:lambda}
\end{figure}

While the dataset is indexed by subscripts $D=\{(p_1, q_1, y_1), (p_2, q_2, y_2)\}$, the lotteries in Axioms \ref{ax:ind} and \ref{ax:SI}  are indexed by superscripts because those axioms need to be checked for all possible \textit{enumerations} of the dataset. For example, we could have $(p^1, q^1, y^1)=(p_2, q_2, y_2)$ and $(p^2, q^2, y^2)=(q_1, p_1, 1-y_1)$. In other words, to formally check Axiom \ref{ax:ind} or Axiom \ref{ax:SI}, we need to quantify over all the lottery pairs, regardless of the order they appear in $D$.

For a bigger dataset, ruling out all size-2 anomalies may not be enough to guarantee EU. We need an axiom that rules out anomalies of size-3 and higher.

\begin{axiom}[$k$-Strong Independence]\label{ax:n-SI}$ $
	 If   $p^i\succ q^i$ for $i=1, \ldots, k-1$ and 
\begin{align*}		 
p^k-q^k&=\sum_{i=1}^{k-1} \lambda_i (p^i-q^i), \text{ for }\lambda_i\geq 0,
\end{align*}
then $p^k\succ q^k$.
\end{axiom}

Figure \ref{fig:nSI} illustrates the necessity of this axiom. Under EU, we have $p^i\succ q^i$ if the (normalized) utility function lies in the half space to which $q^i-p^i$ is a normal vector. If $p^1\succ q^1$, then the  utility function must lie in area $A\cup B\cup C$. If $p^2\succ q^2$, then the utility function must lie in area $C\cup D\cup E$. Taken together, the utility function must then lie in area $C$. If $q^3-p^3$ lies between $q^1-p^1$ and $q^2-p^2$, then area $C$ is a subset of the half space to which $q^3-p^3$ is a normal vector (which here is area $B \cup C \cup D$), so it follows that  $p^3\succ q^3$. 

\begin{figure}[h!]
\centering
\vspace{20pt}
\begin{tikzpicture}[scale=1.5]
  \coordinate (O) at (0,0);

  \draw[greenline,line width=0.65pt] (200:3.05)--(20:3.25);
  \draw[greenline,line width=0.65pt] (164:2.95)--(-16:3.10);
  \draw[greenline,line width=0.65pt] (128:2.85)--(-52:2.55);

  \draw[vector] (O)--(112:2.70);
  \draw[vector] (O)--(75:1.55);
  \draw[vector] (O)--(41:2.65);

  \node[lab,anchor=south] at (-0.55,2.5) {$q^2-p^2$};
  \node[lab,anchor=west] at (0.52,1.65) {$q^3-p^3$};
  \node[lab,anchor=west] at (2.10,1.9) {$q^1-p^1$};

  \node[region] at (-2.20,1.24) {$A$};
  \node[region] at (-2.35,-0.12) {$B$};
  \node[region] at (-0.38,-1.35) {$C$};
  \node[region] at (1.40,-1.05) {$D$};
  \node[region] at (1.72,-0.06) {$E$};
\end{tikzpicture}
\caption{Lotteries in Axiom \ref{ax:n-SI} with $n=k=|Z|=3$. The lottery differences all live in a 2-dimensional plane which extends the unit simplex (all vectors on this plane sum up to zero). Utilities are normalized such that they  are also points on this plane.} \label{fig:nSI}
\end{figure}

\vspace*{-1em}

\begin{theorem}\label{thm:EU}
 A dataset of size $n$ satisfies $n$-strong independence if and only if it is consistent with EU.	Moreover, if $m:=|Z|$ is such that $m<n$, then it is enough to check $m$-strong independence.
\end{theorem}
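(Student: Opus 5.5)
Let $x_i := (2y_i-1)(p_i-q_i)\in\bR^Z$ denote the ``oriented'' differences. Then $D$ is EU-consistent if and only if there is $u\in\bR^Z$ with $\langle u, x_i\rangle>0$ for all $i$. We plan to prove the theorem by applying a theorem of the alternative (Gordan's theorem) to the vectors $x_1,\dots,x_n$, and then using Carathéodory's theorem for the dimension refinement.

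Necessity of $n$-strong independence is immediate. If $u$ represents the data and $p^k-q^k=\sum_{i<k}\lambda_i(p^i-q^i)$ with $\lambda_i\ge 0$ and $p^i\succ q^i$, then $\langle u,p^k-q^k\rangle=\sum\lambda_i\langle u,p^i-q^i\rangle\ge 0$. This is strict unless all $\lambda_i=0$. In that case $p^k=q^k$, which datasets exclude. So the observed choice between $p^k$ and $q^k$ must be $p^k\succ q^k$. The axiom with fewer than $n$ premises is the special case where the unused $\lambda_i$ are set to zero, or the axiom is read as applying to any subfamily.

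For sufficiency, suppose no such $u$ exists. By Gordan's theorem there are $\mu_i\ge0$, not all zero, with $\sum_i\mu_i x_i=0$. Pick an index $k$ with $\mu_k>0$ and write
\[ -x_k=\sum_{i\ne k}(\mu_i/\mu_k)\,x_i .\]
Translating back, we enumerate the data so that the $k$-th pair is the one with $\mu_k>0$, and orient every other pair as the chosen-over-rejected pair. This gives $q^k-p^k$ (with $p^k\succ q^k$ observed) as a nonnegative combination of the other observed differences. Applying $n$-strong independence with premises $i\ne k$ and conclusion pair $(q^k,p^k)$ forces $q^k\succ p^k$, which contradicts the observed choice. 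This uses the fact that only one of $y=0,1$ is recorded per pair, and that $q^k\neq p^k$. Hence a $u$ exists.

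For the refinement when $m=|Z|<n$, we sharpen the alternative step. All $x_i$ lie in the hyperplane $H=\{x:\sum_z x(z)=0\}$, which has dimension $m-1$. The relation $\sum\mu_i x_i=0$ says that $-x_k\in\cone\{x_i:i\ne k\}\subset H$. By the conic Carathéodory theorem, $-x_k$ is a nonnegative combination of at most $m-1$ linearly independent $x_i$'s. That gives an instance of the axiom with at most $m-1$ premises, i.e. $m$-strong independence. The main point to handle carefully is this step, together with the bookkeeping that an instance with fewer premises is covered by $k$-strong independence (pad with zero coefficients, using any other data points; this requires $m\le n$, which holds by assumption). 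One could alternatively take a minimal dependency $\mu$ (a circuit), whose support is linearly dependent of size at most $\dim H+1=m$, and apply the same argument within that support.
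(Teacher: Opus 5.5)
Your proof is correct. For the first assertion you use the same tool as the paper, Gordan's lemma, but more directly. The paper passes through the intermediate Axiom~\ref{axiom_disc_mod} and needs the auxiliary observation $(\ast)$ to make the weight on $q^n-p^n$ strictly positive. You instead pick an index with $\mu_k>0$ and put its flipped pair last in the enumeration.

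For the second assertion your route is genuinely different.
\begin{itemize}
\item The paper argues on the utility side. It treats the open half-spaces $H^{(2y_i-1)(p_i-q_i)}$ as convex subsets of the $(m-1)$-dimensional space of normalized utilities. Helly's theorem then reduces EU-consistency of $D$ to EU-consistency of each $m$-element subdataset, and the first assertion is applied to each of these.
\item You argue on the side of lottery differences. Since the $x_i$ lie in the sum-zero hyperplane, conic Carath\'eodory shrinks the Gordan certificate to one supported on at most $m$ indices. That is directly a violated instance of $j$-strong independence with $j\le m$, which you then pad with zero weights.
\end{itemize}
The two arguments are dual. The paper's normalization of utilities and your sum-zero hyperplane are the same dimension count, and Helly for half-spaces through the origin amounts to Gordan plus Carath\'eodory.

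What each buys: your version is constructive. From any Gordan multipliers, a Carath\'eodory reduction by linear algebra exhibits the small violated instance, so every counterexample visibly contains one with at most $m$ pairs. It also makes explicit the zero-weight padding and the role of $m\le n$, which the paper leaves implicit. The paper's version reuses the first assertion as a black box, and it frames the result as a Helly-type property of the half-space system used for testing in Section~\ref{sec:checking}.
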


As mentioned above, Axioms \ref{ax:ind}, \ref{ax:SI}, and \ref{ax:n-SI}  need to be verified for all enumerations of the dataset. This seems computationally costly (we make this statement precise in Section \ref{sec:checking}). Consider instead,  the following condition, which  needs to be checked only once.

\begin{condition} \label{axiom_disc_basic}
    $0 \not \in \conv \{ (2 y_i -1) (p_i - q_i)\}_{i=1}^n$.
\end{condition}

We derive our Theorem \ref{thm:EU} as a corollary of the following result.

\begin{theorem}[\citealt{Fishburn75}]\label{thm:Fishburn1}
    A dataset of size $n$ satisfies Condition \ref{axiom_disc_basic} if and only if it is consistent with EU.	
\end{theorem}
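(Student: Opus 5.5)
Set $x_i := (2y_i-1)(p_i-q_i)\in\bR^Z$ for $i=1,\ldots,n$. By definition, $D$ is EU-consistent exactly when some $u\in\bR^Z$ satisfies $\langle u,x_i\rangle>0$ for every $i$. Theorem \ref{thm:Fishburn1} is therefore the finite-dimensional Gordan alternative: either the $x_i$ admit a strictly separating linear functional, or $0$ is a convex combination of them. I will prove the two directions separately, the second via a separating hyperplane argument.

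\emph{Necessity.} Suppose $u$ represents the data, and suppose for contradiction that $0=\sum_i \alpha_i x_i$ with $\alpha_i\ge 0$ and $\sum_i\alpha_i=1$. Taking the inner product with $u$ gives
\[
0=\sum_i \alpha_i \langle u,x_i\rangle .
\]
Every term on the right is nonnegative. At least one $\alpha_i$ is positive, and for that index $\langle u,x_i\rangle>0$, so the right-hand side is strictly positive. This is a contradiction, so Condition \ref{axiom_disc_basic} holds.

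\emph{Sufficiency.} Let $K:=\conv\{x_i\}_{i=1}^n$. As the convex hull of finitely many points, $K$ is a polytope, hence compact and convex. By assumption $0\notin K$.
\begin{enumerate}[(i)]
\item The continuous function $x\mapsto\|x\|^2$ attains its minimum on the compact set $K$ at some point $u^*$. Since $0\notin K$, we have $u^*\neq 0$.
\item The first-order (projection) condition for minimizing $\|\cdot\|^2$ over a convex set gives $\langle u^*,x-u^*\rangle\ge 0$ for every $x\in K$.
\item Hence $\langle u^*,x_i\rangle\ge\|u^*\|^2>0$ for each $i$, so $u=u^*$ is an EU representation.
\end{enumerate}
Equivalently, one could cite the strict separating hyperplane theorem for the point $0$ and the compact convex set $K$. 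That theorem yields a separating hyperplane that does not pass through $0$, which is what the strict inequalities require.

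No lottery structure is needed anywhere: the $x_i$ all lie in the hyperplane $\{v:\sum_z v(z)=0\}$, but the argument works in all of $\bR^Z$. The only step requiring care is sufficiency. One must obtain \emph{strict} inequalities for all $i$ simultaneously, not merely a weak separation. Compactness of $K$, which holds because the dataset is finite, is what delivers this.
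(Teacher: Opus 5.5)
Your proposal is correct and follows the paper's proof: the paper takes $v_i=(2y_i-1)(p_i-q_i)$ and invokes Gordan's lemma (Lemma~\ref{lem:Gordan}) directly. You do the same, except that you prove Gordan's alternative inline rather than citing it, using the inner-product argument for necessity and the minimum-norm-point separation over the compact hull for sufficiency.
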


While using axioms is more costly (because of the enumerations), it also has benefits: it is easy to compare Axiom \ref{ax:SI} to Axiom \ref{ax:ind}, while comparing Condition \ref{axiom_disc_basic} to Axiom \ref{ax:ind} seems harder.\footnote{\citet{gilboa2019axiomatizations} argue that axioms are valuable, in part, because they can make a decision rule normatively compelling by expressing it in terms of principles that are easier to understand and endorse. In the case of EU theory, Axiom \ref{ax:SI} serves this role better than Condition \ref{axiom_disc_basic}, as it states the relevant conditions in a more intuitive form.}
 In Section \ref{sec:logit} we show that this comparability has advantages because it helps us reason about experimental confounds.  In the next section, we show how to use this machinery to study new questions in stochastic choice.


\section{Stochastic Choice}\label{sec:REU}
%
%
%

%

Now a \emph{dataset}  is  $D=\left((p_i, q_i, \varrho_i)\right)_{i=1}^n,$ where $p_i\neq q_i$ are both in $\Delta(Z)$, and $\varrho_i\in [0, 1]$ is the fraction of subjects who chose lottery $p_i$ over $q_i$. Recall that  $D$ is \emph{REU-consistent} if there exists a Borel probability distribution  $\mu\in \Delta(\bR^Z)$ such that $\varrho_i=\mu(\{u\in \bR^Z: \langle u, p_i-q_i \rangle>0\})$ for all $i=1, \ldots, n$, and ties have $\mu$-zero probability.

\cites{GP06} axioms are necessary and sufficient for REU if the analyst observes choices from every finite menu of lotteries. Their  main axiom restricted to size-2 datasets says the following.\footnote{The convention here that the superscripts are different enumerations of the dataset, such as $(p^1, q^1, \varrho^1)=(p_2, q_2, \varrho_2)$ and $(p^2, q^2, \varrho^2)=(q_1, p_1, 1-\varrho_1)$.}

\begin{axiom}[Linearity]\label{ax:lin}
		For all $\alpha\in (0,1)$ and $r\in \Delta(Z)$, if 
		\begin{align*}
			p^2&=\alpha p^1 + (1-\alpha) r\\
			q^2&=\alpha q^1 + (1-\alpha) r,
		\end{align*}
		then $\varrho^1=\varrho^2$.
\end{axiom}

For example, \cites{KT79} certainty effect, as originally stated in the paper, is a violation of linearity: the authors report that $\varrho^1=20\%$ and $\varrho^2=65\%$. 

However, Axiom \ref{ax:lin} does not capture the whole strength of REU on finite datasets. We need to strengthen it to the following condition.

\begin{axiom}[Strong Linearity]\label{ax:slin}
For all $\lambda>0$, if $p^2-q^2=\lambda (p^1-q^1)$, then $\varrho^1=\varrho^2$.
\end{axiom}

At the level of the axioms, strong linearity is very similar to strong independence, but at the level of the representations, REU is a more complicated object than  EU  (a distribution over utility functions vs. a utility function). Nevertheless, as our Theorem \ref{thm:REU} shows,  strong linearity is equivalent to REU-consistency for datasets of size-2.

For datasets of size-3, we need the following axiom. 

\begin{axiom}[Triangulation]\label{ax:triang}
If $p^3-q^3=\lambda^1 (p^1-q^1) + \lambda^2 (p^2-q^2) $ for $\lambda^1, \lambda^2\geq 0$, then 
$$\varrho^1+\varrho^2 -1 \leq \varrho^3 \leq \varrho^1+\varrho^2.$$
\end{axiom}


Necessity of Triangulation can be seen in Figure \ref{fig:nSI}, which we previously used to illustrate necessity of $3$-Strong Independence. Under REU, we have 
\begin{align*}
	\varrho^1&=\mu(A) + \mu(B) + \mu(C) \\
	\varrho^2&=\mu(C) + \mu(D) + \mu(E)\\	
	\varrho^3&=\mu(B) + \mu(C) +\mu(D).
\end{align*}
Because $\mu$ is a probability measure, it follows immediately that
$$\varrho^1+\varrho^2-1 \leq \varrho^3\leq \varrho^1+\varrho^2.$$
As our Theorem \ref{thm:REU} shows, a dataset of size 3  is REU-consistent if and only if it satisfies Axioms \ref{ax:slin} and \ref{ax:triang}. One would hope that if we generalize  triangulation, we can deal with bigger datasets. We will show that the following axiom is necessary.

\begin{axiom}[$k$-Triangulation]\label{ax:n-triang}
    If $p^k - q^k = \sum_{i=1}^{k-1} \lambda_i (p^i - q^i)$ for $\lambda_i \geq 0$, then
    \begin{align}
        \sum_{i=1}^{k-1} \varrho( p^i, q^i) - (k-2) \leq \varrho(p^k, q^k) \leq \sum_{i=1}^{k-1} \varrho(p^i, q^i). \nonumber 
     \end{align}
\end{axiom}

Note that $n$-strong independence implies $k$-strong independence for $k\leq n$ (by setting the corresponding $n-k$ values of $\lambda$ to zero). But this is no longer true for triangulation. For example it is easy to see that $3$-triangulation does not imply $2$-triangulation.

Our next result shows that checking $k$-triangulation for all $k\leq n$ is sufficient for REU, but only as long as $n\leq 4$. 

\begin{theorem}\label{thm:REU}
    Suppose $D$ is a size-$n$ dataset. If $D$ is REU-consistent, then it satisfies $k$-triangulation for all $k \in \{2,...,n\}$. The converse is true for $n\leq 4$, but not for $n>4$. 
\end{theorem}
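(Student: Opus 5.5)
Write $x_i := p_i - q_i$, and let $H_i:=\{u: \langle u, x_i\rangle>0\}$ be the set of utilities that choose $p_i$. REU-consistency says exactly that there is a probability $\mu$ with $\mu(H_i)=\varrho_i$ and no mass on any hyperplane $x_i^\perp$. Reversing a pair replaces $x_i$ by $-x_i$ and $\varrho_i$ by $1-\varrho_i$, which corresponds to passing to the complement of $H_i$.

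\emph{Necessity.} Fix an enumeration with $x^k=\sum_{i<k}\lambda_i x^i$ and $\lambda_i\ge 0$. If $u\in H^k$, then $\sum_i\lambda_i\langle u,x^i\rangle>0$, so $u\in H^i$ for some $i<k$. Hence $H^k\subseteq\bigcup_{i<k}H^i$ up to ties, and the union bound gives $\varrho^k\le\sum_i\varrho^i$. Conversely, if $u\in\bigcap_{i<k}H^i$, then $u\in H^k$ whenever $x^k\neq0$, which holds because $p^k\neq q^k$. Taking complements, $\bigcap_{i<k} H^i\subseteq H^k$ gives $1-\varrho^k\le\sum_i(1-\varrho^i)$, which is the lower bound. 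So $k$-triangulation holds for every $k\le n$.

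\emph{Sufficiency for $n\le4$.} I would reduce the problem to a finite one. Since ties have measure zero, $\mu$ only matters through the weights it puts on the nonempty open cells $C_s=\{u:\operatorname{sign}\langle u,x_i\rangle=s_i \ \forall i\}$, for sign vectors $s\in\{\pm1\}^n$. Realizability by cells is equivalent to the following linear feasibility problem: find $w\ge0$ on the set $S$ of realizable sign vectors with $\sum_s w_s=1$ and $\sum_{s:s_i=+}w_s=\varrho_i$. By Theorem \ref{thm:Fishburn1}, $s\in S$ exactly when $0\notin\conv\{s_ix_i\}$. Equivalently, the forbidden sign patterns are generated by the minimal nonnegative dependencies among the $\pm x_i$ (circuits of the oriented matroid). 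By Farkas' lemma, the feasible set is the polytope $\conv\{\1_s: s\in S\}$ (projected onto the $+$ coordinates), and I must show that its facets are cut out by the $k$-triangulation inequalities together with $0\le\varrho_i\le1$. Every forbidden pattern is determined by a circuit $s_kx_k\in\cone\{s_ix_i\}_{i<k}$, and by Carathéodory in the span this involves at most $n$ vectors. The plan is to go through the possible oriented matroids of rank $\le n$ on $n\le4$ vectors up to reorientation. For each, compute the 0/1 polytope, verify that every facet inequality is either a box constraint or of the form $\sum_{i\in I}\varrho_i-\varrho_k\ge -(\cdots)$ matching a triangulation axiom for a subset $I\cup\{k\}$ that is a circuit, and handle parallel vectors via $2$-triangulation, which is strong linearity. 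The key observation for $n\le 3$ is that the polytope is the cube cut by one inequality per circuit, and disjoint circuits do not interact. For $n=4$ this is a finite case check; the main obstacle is organizing it, for instance for four vectors in general position in $\bR^3$, or in the plane with two crossing circuits, and confirming that no facet mixes two circuits. I would do this by listing the excluded vertices and using the fact that the cube minus one or two vertices, or minus a face, has only the expected facets.

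\emph{Failure for $n>4$.} I would exhibit five difference vectors in the plane $\{\sum_z v_z=0\}$ with $|Z|=3$, for example in cyclic angular order, where the realizable sign vectors are the $2\cdot5$ arcs. The realizable polytope is then a cyclic-type polytope with a facet involving all five coordinates at once, say $\sum_{i\text{ odd}}\varrho_i-\sum_{i\text{ even}}\varrho_i\le c$, that is not implied by the pairwise and triple constraints. I would choose $\varrho$ near $(\tfrac12,\dots,\tfrac12)$, perturbed to violate this facet while satisfying every $k$-triangulation inequality (each has slack about $\tfrac12$ at the center), and confirm infeasibility via a Farkas certificate $\sum_i c_i(\1_{H_i}-\varrho_i)$ that is nonnegative on every arc but has negative expectation. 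Lotteries realizing these differences can be taken near the barycenter of the simplex.
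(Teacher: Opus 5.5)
\textbf{Necessity and the counterexample.} Your necessity argument is correct and more direct than the paper's. The paper passes through Proposition \ref{prop:1} and the bound on $\conv(G)$ in Lemma \ref{lemma_convex_closure}; you get both inequalities from $H^k\subseteq\bigcup_{i<k}H^i$ and $\bigcap_{i<k}H^i\subseteq H^k$ plus the union bound.

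Your counterexample is also the right one in substance. The paper uses five pairwise non-parallel differences in the plane of a three-prize simplex. Its point $\varrho=\frac13(1,2,2,2,1)$, after flipping $d_3$ and sorting the vectors by angle ($d_5,d_4,d_1,d_2,-d_3$), becomes $s=(\frac13,\frac23,\frac13,\frac23,\frac13)$. This violates exactly your alternating inequality $s_1-s_2+s_3-s_4+s_5\ge 0$, which takes only the values $0,1$ on the ten arc patterns. Your Farkas-type certificate is a cleaner proof of REU-inconsistency than the paper's fractional-vertex argument.

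Two things need to be made precise, though. First, the alternating form is valid only in these reoriented, angle-sorted coordinates. Second, ``near the center'' cannot mean a small perturbation. The point $(\frac12,\dots,\frac12)$ is the barycenter of the complement-closed set $Y_5$, so it lies in the interior of $\conv(Y_5)$. What works is a perturbation $\delta(-1,1,-1,1,-1)$ with $\frac1{10}<\delta\le\frac16$:
\begin{itemize}
\item the alternating form moves by $5\delta$ against slack $\frac12$, so it is violated;
\item a $k$-triangulation form moves by at most $k\delta$ against slack $\frac{k-2}{2}$, so it still holds;
\item $\delta=\frac16$ gives exactly the paper's point.
\end{itemize}

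\textbf{The gap: sufficiency for $n\le 4$.} You describe this direction but do not carry it out, and the tools you propose do not settle the decisive case.

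Take four differences that are coplanar and pairwise non-parallel (Case 3b in the paper). Then every triple is a circuit, so there are four circuits, not two. After reorientation, $Y_4$ consists only of the eight threshold patterns $0000,1000,1100,1110,1111,0111,0011,0001$. So $\conv(Y_4)$ is not ``the cube minus one or two vertices or a face.'' You must show that the box together with the four pairs of triple inequalities has no fractional vertex.

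That is precisely the statement that fails for five coplanar vectors, by your own counterexample. So it cannot be treated as the expected outcome of a routine check. The paper proves it by enumerating the vertices of that polytope and finding exactly these eight points.

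The other $n=4$ cases also need arguments you only gesture at:
\begin{itemize}
\item In Case 3a, you must show that a fourth vector outside the span of a $3$-circuit enters no premise of $3$- or $4$-SI.
\item In the rank-3 general-position case, you must show that the only excluded patterns are one antipodal pair (the paper's Lemma \ref{lemma_signs} and Corollary \ref{cor_4SI}). Only then does your cube-minus-two-vertices fact apply.
\end{itemize}
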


Because $n$-triangulation is not strong enough for $n>4$, we need to take a different approach that works directly with the deterministic preferences consistent with EU (in fact, our proof of Theorem \ref{thm:REU} also relies on the following construction). This approach has been taken by \cite{kitamura2018nonparametric} and \cite{smeulders2021nonparametric} in the context of consumer choice. Take the $n$ lottery pairs in the dataset $\left((p_i, q_i)\right)_{i=1}^n$. Fix $u\in \bR^Z$ such that $\langle u, p_i-q_i\rangle \neq 0$ for all $i$ (recall that ties have $\mu$-zero probability). Now imagine a vector of dimension $n$ that equals 1 if $\langle u, p_i-q_i\rangle>0$ and zero otherwise. Let $Y_n\subseteq\{0, 1\}^n$ denote this set of EU-consistent choice vectors. By Theorem \ref{thm:EU}, the set $Y_n$ consists of all vectors $y\in \{0, 1\}^n$ such that  Axiom \ref{ax:n-SI} holds on dataset $\left( (p_i, q_i, y_i)\right)_{i=1}^n$. Define the vector $\varrho:=(\varrho_1, \ldots, \varrho_n)$. 

\begin{proposition}\label{prop:1}
    $D$ is REU-consistent iff $\varrho\in \textrm{conv}(Y_n)$. 
\end{proposition}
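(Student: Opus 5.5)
The plan is to prove both directions by decomposing $\bR^Z$ into the cells of the hyperplane arrangement $\{u: \langle u, p_i-q_i\rangle = 0\}_{i=1}^n$. For $u$ off all hyperplanes, define the choice vector $y(u)\in\{0,1\}^n$ by $y_i(u)=1$ iff $\langle u, p_i-q_i\rangle>0$. By construction $y(u)\in Y_n$. Conversely, every $y\in Y_n$ arises this way: by definition (equivalently, by Theorem \ref{thm:EU} or Theorem \ref{thm:Fishburn1}) there is some $u^y$ with $(2y_i-1)\langle u^y,p_i-q_i\rangle>0$ for all $i$. Thus $Y_n$ is exactly the image of the map $u\mapsto y(u)$ on the complement $G$ of the union of the hyperplanes.

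For necessity, I would take $\mu$ as in the definition of REU-consistency. Ties have $\mu$-probability zero, so $\mu(G)=1$. The map $y:G\to Y_n$ is Borel, since each fiber $C_y=\{u\in G: y(u)=y\}$ is an intersection of finitely many open half-spaces. Set $\pi_y:=\mu(C_y)$. Because $Y_n$ is finite and the $C_y$ partition $G$, the $\pi_y$ are nonnegative and sum to one. Then
$$\varrho_i=\mu(\{u:\langle u,p_i-q_i\rangle>0\})=\sum_{y\in Y_n:\,y_i=1}\pi_y=\Big(\sum_{y\in Y_n}\pi_y\, y\Big)_i,$$
so $\varrho\in\conv(Y_n)$.

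For sufficiency, I would write $\varrho=\sum_{y\in Y_n}\pi_y y$ as a convex combination. For each $y$ I pick a witness $u^y$ as above and define the finitely supported measure $\mu:=\sum_y \pi_y\delta_{u^y}$. This is a Borel probability measure on $\bR^Z$. Ties have zero probability, since each atom $u^y$ lies off every hyperplane. Moreover,
$$\mu(\{u:\langle u,p_i-q_i\rangle>0\})=\sum_{y:\,y_i=1}\pi_y=\varrho_i,$$
so $D$ is REU-consistent.

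The only step that needs care is identifying $Y_n$ with the set of sign patterns realized on $G$. The statement defines $Y_n$ both via vectors $u$ and, through Theorem \ref{thm:EU}, via Axiom \ref{ax:n-SI}, so I would make explicit that these two descriptions coincide. I would also check measurability of the cells. Both points are routine, so I do not expect a real obstacle. The crux is simply that REU on a finite set of pairs only depends on the induced distribution over finitely many cells.
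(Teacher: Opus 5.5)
Your proposal is correct and follows essentially the same argument as the paper. For necessity the paper likewise pushes $\mu$ forward through the sign-pattern map $\Psi$ (your $y(\cdot)$) to obtain weights $\beta(y)=\mu(\{u:\Psi(u)=y\})$, and for sufficiency it places point masses $\beta(y)\delta_{\bar u_y}$ on a witness utility $\bar u_y$ for each $y\in Y_n$; your version is slightly more careful, since you take the witnesses directly from the definition of $Y_n$ and explicitly check measurability and the no-ties condition, which the paper leaves implicit.
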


Thus, checking  REU consistency is equivalent to checking a convex hull condition; in fact, \cite{kitamura2018nonparametric} show that this can be weakened to checking a convex cone condition. We discuss the computational aspects of these procedures in Section \ref{sec:checking}. In Appendix \ref{app:ARSP} we discuss the Axiom of Revealed Stochastic Preference (ARSP) of \cite{McFaddenRichter}, which is derived from the convex hull condition.

As always in stochastic choice theory, axioms are meant to be imposed on limiting choice frequencies. If we have only a few repetitions of each question, an axiom can be rejected even if the ground truth is REU because of the randomness of sampling (and it can be accepted if the ground truth is not). We will not be focusing on ``standard errors'' for axioms in this paper. A bootstrap approach has been developed by \cite{kitamura2018nonparametric}.   A Bayesian approach for testing REU has been developed by \cite{mccausland2020testing}.

\section{Applications}\label{sec:applications}

\subsection{Experiments with random lottery pairs}\label{sec:13k}

A growing literature ties the hands of the experimentalist by drawing at random the lottery pairs that are offered to the subjects in the experiment. For example, \cite{Erevetal17}  put a finite grid on the probability values and draw prizes from a continuous distribution over $Z=\bR$. The \texttt{choices13k} dataset \citep{13k} follows this protocol.

It has been well understood that asking random questions instead of carefully tailoring the experiment has a price because the resulting test loses statistical power. As we make it formal below, the power goes all the way down to zero. The resulting dataset will satisfy EU with probability one, \emph{regardless of the answers given by the subjects}! This result pertains not only to \texttt{choices13k} and similar datasets, but also to any machine-learning models trained on this data: if the training data cannot distinguish EU from non-EU, how could the model?

\begin{corollary}\label{cor:random-erev}
	Suppose that each lottery in the experiment is drawn at random: its support size  according to some discrete distribution, its outcomes according to a continuous distribution over $\mathbb{R}$, and its probability values according to an arbitrary distribution. Then with probability one, the dataset is EU-consistent, regardless the actual choice pattern (the values of $y_i$), and the number of lottery pairs  $n$. 
\end{corollary}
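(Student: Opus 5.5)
We will show that, with probability one, the difference vectors $d_i := p_i - q_i$, $i=1,\ldots,n$, are linearly independent when viewed in $\bR^Z$, where $Z$ is the (finite, random) union of all outcomes appearing in the experiment. Granting this, Theorem \ref{thm:Fishburn1} gives the conclusion for every choice pattern. Suppose $0=\sum_i \alpha_i (2y_i-1) d_i$ with $\alpha_i\ge 0$ and $\sum_i\alpha_i=1$. Linear independence forces all $\alpha_i(2y_i-1)=0$. Since $2y_i-1=\pm1$, this means all $\alpha_i=0$, contradicting $\sum_i\alpha_i=1$. Hence Condition \ref{axiom_disc_basic} holds for every $y\in\{0,1\}^n$, and $D$ is EU-consistent. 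Equivalently, one can solve the system $\langle u, d_i\rangle = 2y_i-1$ directly, which is possible because the $d_i$ are independent.

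\textbf{Step 1: the outcomes are almost surely distinct.} Outcomes are drawn from a continuous distribution, so with probability one all prizes appearing in all $2n$ lotteries are pairwise distinct. We condition on the support sizes and the probability values, and we use only the continuity of the outcome draws. Note that two draws from the same lottery could coincide only with probability zero, so distinctness within a lottery is also almost sure.

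\textbf{Step 2: the difference vectors are linearly independent.} On the event of Step 1, each lottery $p_i$ or $q_i$ places positive probability only on its own prizes, and no other lottery uses those prizes. Pick a prize $z_i$ with $p_i(z_i)>0$; such a prize exists because $p_i$ is a probability vector. Then the $z_i$-coordinate of $d_j$ is zero for every $j\neq i$, while the $z_i$-coordinate of $d_i$ equals $p_i(z_i)>0$. So the $n\times n$ submatrix on the coordinates $z_1,\ldots,z_n$ is diagonal with a nonzero diagonal, and the $d_i$ are linearly independent. Every assertion here holds on a full-measure event that does not depend on $y$ or on the probability values, which gives the claim ``regardless of the choice pattern.'' Since the number of pairs $n$ is arbitrary and the argument works for each $n$, the claim holds for every $n$.

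\textbf{Main obstacle.} The delicate point is matching the corollary to the protocol. If the design reused prizes across lotteries, for example a common safe option or a grid of prizes, Step 2 would fail and we would need a genericity argument instead. In that case the plan would be to show that the set of outcome vectors making the $d_i$ linearly dependent is the zero set of a nontrivial polynomial, and therefore has Lebesgue measure zero. Under the stated hypothesis, however, every lottery draws its own continuous outcomes, so the disjoint-support argument suffices. We only need to check that the measurability of the events and the countable union over support sizes cause no trouble; they do not, because there are countably many support-size configurations and each contributes a null exceptional set.
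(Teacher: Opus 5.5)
Your proof is correct and is essentially the paper's argument: almost-sure distinctness of prizes gives disjoint supports, hence linearly independent $d_i$. Your diagonal-submatrix step also makes precise the paper's terse ``distinct supports'' claim, since what is actually needed is disjointness. The only difference is the last step: you conclude via Fishburn's Condition~\ref{axiom_disc_basic}, or by solving $\langle u,d_i\rangle=2y_i-1$ directly, whereas the paper notes that the premise of $n$-strong independence never holds and invokes Theorem~\ref{thm:EU}, which it derives from Theorem~\ref{thm:Fishburn1} anyway.
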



Intuitively, this result follows because the premise of Axiom \ref{ax:n-SI} requires the $n$ vector of lottery differences to be linearly dependent. With a fixed outcome set $Z$, this would hold for a high enough $n$ because the space $\bR^Z$ would get ``saturated.'' But when prizes are drawn from a continuous distribution, the dimensionality of the space grows with $n$. 

Note that in any real experiment, the probability values and  outcomes are rounded (perhaps at the second decimal). In this case, the space of lottery differences could get saturated, such that the  probability of drawing a dataset that violates Axiom \ref{ax:n-SI} is not exactly zero. In practice, however, this is not a concern: as we report in the first row of Table \ref{tab:zeroprob}, the number of violations of EU equals zero in the \texttt{choices13k} dataset \citep{13k}.

\begin{table}[h!]
\centering
\begin{tabular}{l|cc|cc}
\toprule
  & \multicolumn{2}{c|}{deterministic preference} & \multicolumn{2}{c}{stochastic preference} \\
\hline
Violations of ...      & actual & max   & actual & max   \\
\hline
EU                     & 0      & 0     & 0      & 0     \\
Monotone EU            & 92     & 1,024 & 414    & 2,035 \\
Monotone \& concave EU & 616    & 2,008 & 1,791  & 4,335 \\
\hline
$N$ & \multicolumn{2}{c}{2,183} & \multicolumn{2}{c}{4,741} \\
\bottomrule
\end{tabular}
\caption{For each subject, the dataset consists of all the questions she answered. $N$ is the number of subjects that have answered at least two questions. The  ``deterministic preference''  column restricts attention to questions that subjects have had consistent answers to across all 5 trials. The ``stochastic preference'' column defines a subject's deterministic choice as the option chosen more often.  The column ``actual'' is the actual number of violations in the data, and ``max'' is the maximum possible number of violations. In Supplementary Appendix \ref{app:13k} we discuss the details of this exercise and present additional  results.}
\label{tab:zeroprob}
\end{table}

The second and third rows of Table \ref{tab:zeroprob} report the number of violations of EU with the additional restrictions that $u$ is increasing and concave. With these additional restrictions (discussed in detail in Appendix \ref{app:monotandconcave}), EU does have empirical bite even in datasets with randomly generated lottery pairs. For example, 2,035 of the 4,741 subjects faced lottery pairs satisfying the premise of our monotone-EU axiom, and 414 of these 2,035 subjects actually violated EU with monotone utility. However, note that the motivating anomalies, such as the Allais paradox and certainty effect, are violations of pure EU (without any added restrictions), so they cannot be detected using datasets with random lottery pairs.

We remark that in  the \texttt{choices13k} dataset, each subject faced the same lottery pair 5 times. Taken literally, this violates the continuity assumption of Corollary \ref{cor:random-erev} and makes it possible to violate EU by choosing different answers on different repetitions of same the question. In Table \ref{tab:zeroprob}, we address this by (a) restricting the sample to questions that subjects have
had consistent answers to across all 5 trials (deterministic preference), or (b) looking at the whole sample but defining a subject’s deterministic choice as the option chosen more often (defining a preference relation  this way has been a common approach at least since \cite{Tversky69}; we call this relation the stochastic preference). Another approach would be to define $\varrho_i$ as the frequency with which the subject chose the first option, then test REU-consistency on the stochastic dataset $D^*=\left((p_i, q_i, \varrho_i)\right)_{i=1}^n$. This leads to the same conclusion as above.

\begin{corollary}\label{cor:2}
    Suppose that each lottery in a stochastic dataset $D^*$ is drawn at random as in Corollary \ref{cor:random-erev}. Then with probability one, $D^*$ is REU-consistent.
\end{corollary}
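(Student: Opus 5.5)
The plan is to combine Proposition \ref{prop:1} with the linear-independence argument behind Corollary \ref{cor:random-erev}. By Proposition \ref{prop:1}, $D^*$ is REU-consistent if and only if $\varrho\in\conv(Y_n)$, where $Y_n\subseteq\{0,1\}^n$ is the set of EU-consistent choice vectors for the lottery pairs $\left((p_i,q_i)\right)_{i=1}^n$. The goal is to show that, with probability one, $Y_n=\{0,1\}^n$. Then $\conv(Y_n)=[0,1]^n$ contains every vector of frequencies $\varrho$, whatever the subjects' choices were.

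To show $Y_n=\{0,1\}^n$ almost surely, I would reuse the event on which Corollary \ref{cor:random-erev} holds. That corollary says that, with probability one, every choice vector $y\in\{0,1\}^n$ gives an EU-consistent deterministic dataset. This is exactly the statement $Y_n=\{0,1\}^n$. Since there are finitely many $y$, intersecting the corresponding probability-one events gives a probability-one event as well. There is one point to check: the corollary is stated \emph{for the realized} $y_i$, "regardless" of their values. I should therefore confirm that its proof yields a single full-measure event on which all $y$ work simultaneously, not a separate event for each $y$.

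The natural such event, following the intuition stated after Corollary \ref{cor:random-erev}, is that the differences $p_i-q_i$ are linearly independent. I expect this to be the main obstacle. Every difference lies in the hyperplane $\{x:\sum_z x(z)=0\}$ of $\bR^{Z}$, where $Z$ is the union of the drawn supports, so linear independence requires that this hyperplane have dimension at least $n$. Because outcomes are drawn from a continuous distribution, distinct lotteries almost surely have disjoint supports apart from ties of measure zero. Each pair $(p_i,q_i)$ then introduces new prizes, and the difference $p_i-q_i$ is the only vector in the family that puts mass on those prizes.

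Linear independence fails only in a degenerate case. If $p_i$ and $q_i$ shared no new prize with anything else, then $p_i-q_i$ is supported on prizes that appear in no other difference, so the family is linearly independent. The exception is a coincidence such as $p_i=q_i$, which is excluded by assumption, or $p_i=q_j$; the latter has probability zero under continuous outcome draws. On the event of linear independence, for any $y$ one can solve $\langle u,p_i-q_i\rangle=2y_i-1$ for $u$, so every $y$ lies in $Y_n$ at once. Equivalently, $0\notin\conv\{(2y_i-1)(p_i-q_i)\}$, so Condition \ref{axiom_disc_basic} holds, and Theorem \ref{thm:Fishburn1} gives EU-consistency. This completes the argument.
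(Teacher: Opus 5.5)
Your proposal is correct and follows the paper's proof. Apply Corollary \ref{cor:random-erev} to each of the finitely many $y\in\{0,1\}^n$ to get $Y_n=\{0,1\}^n$ almost surely, and then conclude via Proposition \ref{prop:1}, since $\conv(Y_n)=[0,1]^n$. Your extra steps make explicit what the paper leaves implicit: intersecting the finitely many full-measure events, and exhibiting one event on which every $y$ is solvable at once, namely that the differences have disjoint supports and are therefore linearly independent.
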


\subsection{Spurious anomalies}\label{sec:logit}
Experimental economists often use a model of choice that is EU with i.i.d.\ noise. In this model, there is a fixed, deterministic Bernoulli utility $u\in \bR^Z$, and the probability of choosing the lottery $p^i$ over $q^i$ equals $\varrho^i=\text{Prob}(\langle u, p^i \rangle +\epsilon_p^i > \langle u, q^i \rangle +\epsilon_q^i)$, where $\epsilon_p^i, \epsilon_q^i$ are i.i.d.\ random variables.

It is well known that such models can spuriously generate violations of Axiom \ref{ax:lin} \citep{ballinger1997decisions,Loomes05,hey2005we,blavatskyy2007stochastic,Wilcox08,blavatskyy2010reverse,bhatia2017noisy}. Intuitively, this happens because if $p^2=\alpha p^1+ (1-\alpha) r$ and $q^2=\alpha q^1+ (1-\alpha)  r$, then the expected utility difference between $p^2$ and $q^2$ is smaller (by a factor of $\alpha$) than the expected utility difference between $p^1$ and $q^1$, so it is more easily swamped by the $\epsilon$ noise.

This is a serious confound because we cannot tell whether the observed violations reflect “true” non-EU preferences or arise mechanically, as in an EU-plus-noise model.  \cite{Mcgranaghan24} proposed a solution to this problem based on eliciting certainty equivalents of lotteries. We propose an alternative solution: even when looking only at pairwise lottery choices, we design tests that are satisfied if the ground truth is EU plus noise, but fail if the ground truth is non-EU.

To see how this can be done, consider the special case of our strong linearity axiom with $\lambda=1$, i.e., a size-2 dataset where $p^1-q^1=p^2-q^2$. In this case, because expected utility is linear in probabilities, the expected utility differences are the same
\begin{align*}
\langle u, p^1-q^1\rangle &=\langle u, p^2-q^2\rangle\\[-1em] 
\shortintertext{ and therefore }
\text{Prob}(\langle u, p^1 \rangle +\epsilon_p^1 > \langle u, q^1 \rangle +\epsilon_q^1)&=\text{Prob}(\langle u, p^2 \rangle +\epsilon_p^2 > \langle u, q^2 \rangle +\epsilon_q^2),
\end{align*}
or equivalently, $\varrho^1=\varrho^2$. Therefore, a test based on strong linearity with $\lambda=1$  is passed if the ground truth is EU plus i.i.d.\ noise (and more generally for any Fechnerian EU ground truth).  

\begin{example}
Let  $Z=\{0, 3, 4\}$ and consider the lotteries presented in Table \ref{tab:lambda} and Figure \ref{fig:new-lambda}.



\newlength{\panelheight}
\setlength{\panelheight}{0.3\textheight} 

\begin{figure}[!h]
    \centering

    \begin{minipage}[t]{0.48\textwidth}
        \centering
        \begin{minipage}[c][\panelheight][c]{\linewidth}
            \centering
            \begin{tabular}{cccccc}
                \toprule
                 & $p^1$ & $\hat p^1$ & $q^1$ & $p^2$ & $q^2$\\
                \midrule
                $z=0$ & .20 & .05 & 0 & .80 & .75\\
                $z=3$ & 0   & .75 & 1 & 0   & .25\\
                $z=4$ & .80 & .20 & 0 & .20 & 0\\
                \bottomrule
            \end{tabular}
        \end{minipage}

        \captionof{table}{Lotteries in the certainty effect and the modified certainty effect.}
        \label{tab:lambda}
    \end{minipage}
    \hfill
    \begin{minipage}[t]{0.45\textwidth}
        \centering
        \begin{minipage}[c][\panelheight][c]{\linewidth}
            \centering
\begin{tikzpicture}[scale=1.5]
  \tikzset{
    simplex/.style={draw=black,line width=0.9pt},
    redline/.style={draw=red,line width=0.45pt},
    greenpt/.style={circle,fill=green!35!black,inner sep=1.6pt},
    lab/.style={font=\footnotesize,inner sep=1pt}
  }

  \coordinate (L) at (0,0);
  \coordinate (R) at (4.25,0);
  \coordinate (T) at (2.125,3.10);

  \coordinate (p1)    at (0.85,0);        
  \coordinate (q1)    at (T);             
  \coordinate (phat1) at (1.80625,2.325); 
  \coordinate (p2)    at (3.40,0);        
  \coordinate (q2)    at (3.71875,0.775); 

  \draw[simplex] (L)--(T)--(R)--cycle;

  \draw[redline] (p1)--(q1);
  \draw[redline] (p2)--(q2);

  \foreach \p in {p1,q1,phat1,p2,q2}
    \node[greenpt] at (\p) {};

  \node[lab,anchor=east] at ($(L)+(-0.16,-0.02)$) {$4$};
  \node[lab,anchor=south] at ($(T)+(0,0.15)$) {$3$};
  \node[lab,anchor=west] at ($(R)+(0.18,-0.02)$) {$0$};

  \node[lab,black!35!black,anchor=north] at ($(p1)+(0,-0.15)$) {$p^1$};
  \node[lab,black!35!black,anchor=west] at ($(phat1)+(0.12,0.00)$) {$\hat p^1$};
  \node[lab,black!35!black,anchor=west] at ($(q1)+(0.18,0.00)$) {$q^1$};
  \node[lab,black!35!black,anchor=north] at ($(p2)+(0,-0.15)$) {$p^2$};
  \node[lab,black!35!black,anchor=west] at ($(q2)+(0.16,0.02)$) {$q^2$};
\end{tikzpicture}

        \end{minipage}

        \caption{Lotteries in the certainty effect and the modified certainty effect.}
        \label{fig:new-lambda}
    \end{minipage}

\end{figure}

The lottery pairs $\{(p^1, q^1), (p^2, q^2)\}$ correspond to \cites{KT79} original certainty effect. That experiment is designed to test Axiom \ref{ax:lin} (with $\alpha=.25$). We propose to modify the lottery pairs to $\{(\hat p^1, q^1), (p^2, q^2)\}$ and test Axiom \ref{ax:slin} (with $\lambda=1$). Assuming  Bernoulli utility $u(z) = z$, we see that
\begin{align*}
    \langle u, \hat p^1 - q^1 \rangle &  = 3.05 - 3 = 0.05 \\
    \langle u,  p^1 - q^1 \rangle & = 3.20 - 3 = 0.20 \\
    \langle u,  p^2 - q^2 \rangle & = 0.80 - 0.75 = 0.05.
\end{align*}
Under i.i.d. noise, the test of  Axiom \ref{ax:lin} on  $\{(p^1, q^1), (p^2, q^2)\}$ fails, while the test of Axiom \ref{ax:slin} on $\{(\hat p^1, q^1), (p^2, q^2)\}$ succeeds. 

Thus, the test passes if the ground truth is EU plus noise.  Conversely, if the test fails, then the ground truth cannot be EU plus noise (nor can it be REU, by Theorem \ref{thm:REU}).  This does not mean, however, that the test fails for every non-EU ground truth: even non-EU preferences may happen to agree with EU on  $\{(\hat p^1, q^1), (p^2, q^2)\}$. Nevertheless, as we show in  Supplementary Appendix \ref{app:fail-test}, the test does fail for typical parameterizations of non-EU, and repeating the test with perturbed lottery values can ensure that it fails for all parameters. \finex
\end{example}

\subsection{Automatic Generation of Anomalies}\label{sec:MR}

In this section, we show how our insights can be used for the automatic generation of anomalies. Pioneered by \cite{MR}, the general idea works as follows. Take a \emph{ground truth} utility function (outside of the EU class) and generate choices for a given  $2n$-tuple of lotteries to form a dataset $D$. Check if $D$ is EU-consistent. If not, then $D$ contains an anomaly. Ideally, if we search thoroughly across the space of all $2n$-tuples of lotteries and the space of all possible ground truths, we could uncover all possible anomalies. 

To implement this idea practically, \cite{MR} (henceforth MR),  take a parametric specification of non-cumulative prospect theory,  where individuals evaluate lotteries using the probability weighting function  $\pi_z(p ; \delta, \gamma)=\frac{\delta p(z)^\gamma}{\delta p(z)^\gamma+\sum_{z' \neq z} p(z')^\gamma}$, and calibrate parameter values $\theta = (\delta, \gamma)$ as $\theta_1=(0.726,0.309)$, $\theta_2=(0.926,0.377)$, and $\theta_3=(1.063,0.451)$ following  \cite{lattimore1992influence}.\footnote{MR test for EU with monotone utility, so it is important that the ground truths allow for violations of FOSD-monotonicity, as is the case with non-cumulative prospect theory.}

Given a ground truth $\theta$ and a $2n$-tuple of lotteries $\tau$, MR simulate the stochastic choice function $\varrho_\theta(\tau)$ by adding logit noise to each computed prospect-theory value (similar to EU plus noise from Section \ref{sec:logit}). To check EU-consistency, they search over two parameterized families of monotone Bernoulli utility functions (using splines or polynomials), and again add logit shocks to generate stochastic choice $\varrho_\psi(\tau)$, where $\psi$ is the parametrization of EU.   To search over  lottery tuples, MR use an \emph{adversarial} procedure, $\max_\tau \min_\psi L(\varrho_\psi(\tau), \varrho_\theta(\tau)),$ where $L$ is a loss function. This selects datasets which are most ``anomalous'' according to $L$. MR also consider a different \emph{example morphing} procedure that operates similarly. Both procedures search continuously over the space of prizes and probabilities.

MR classify the most ``anomalous'' datasets (of size $n=2$) into three categories: Dominated Consequence Effect, Reverse Dominated Consequence Effect, and Strict Dominance Effect (we discuss the definitions of these categories in Appendix \ref{app:monotandconcave}). MR report  that 96-98\% of anomalies fall into these categories.

However, we now show that within the domain of deterministic choices without logit shocks, there are many anomalies that escape the MR classification for the same ground truths. It is important to look at the deterministic case because prospect theory and EU are inherently theories of deterministic choice. Moreover, adding randomness generates concerns about spurious anomalies that we discussed in Section \ref{sec:logit}, so looking at the deterministic case should give us a cleaner view.

Specifically, for each ground truth $\theta$ and a given $2n$-tuple of lotteries, we compute the vector $y$ of deterministic choices predicted by the ground truth utility function $U_{(\delta, \gamma)} (p) = \sum_z z \,  \pi_z(p; \delta, \gamma)$. This yields  a dataset $D$. We can now check if $D$ is EU-consistent (or, more precisely, consistent with monotone-EU), as described in Section \ref{sec:checking}. We then do a grid search over the space of $2n$-tuples of lotteries.

Table \ref{tab:MR_combined} presents our results and we will describe it in parts. The first three columns correspond to the original MR ground truths, and the first three rows correspond to the MR categories; the last row is the total number of anomalies. The fact that none of them belongs to the MR classes may suggest that our deterministic procedure is too coarse. The next four columns of Table  \ref{tab:MR_combined}  shows that this is not the case. These columns correspond to ground truth parameterizations that optimize the number of generated anomalies, and we see that the MR anomalies do appear, but  only consist of about 10\% of all the anomalies. 

It is important to note that MR operate in the context of EU with \emph{monotone} Bernoulli utility functions. On the full domain, this class is characterized by the vNM axioms plus FOSD-monotonicity (Axiom \ref{ax:dom}). However, as seen in Table \ref{tab:MR_combined}, this axiom and strong independence (Axiom \ref{ax:SI}) are rarely or never violated. On the finite domain, all counterexamples are caught by Axiom \ref{ax:m-2-cancellation}, which, as we discuss in Appendix \ref{app:monotandconcave}, is equivalent to EU with monotone utility on finite datasets. In Appendix \ref{app:monotandconcave} we also discuss the details of the MR classification and propose a broader categorization to capture  ``uncategorized'' anomalies.

\begin{table}[h!]
\centering
\begin{tabular}{l|ccc|cccc}
\toprule
 & \multicolumn{3}{c|}{MR parameters} & \multicolumn{4}{c}{optimized parameters} \\
 & $\theta_1$ & $\theta_2$ & $\theta_3$ & $\theta_4$ & $\theta_5$ & $\theta_6$ & $\theta_7$ \\
\hline
Dominated Consequence              & 0    & 0    & 0    & 258  & 235  & 172  & 0    \\
Reverse Dominated Consequence      & 0    & 0    & 0    & 268  & 257  & 211  & 0    \\
Strict Dominance                   & 0    & 0    & 0    & 250  & 227  & 166  & 0    \\
\hline
Violations of Axiom \ref{ax:SI}                & 0    & 0    & 0    & 0    & 0    & 0    & 0    \\
Violations of Axiom \ref{ax:dom}              & 0    & 0    & 0    & 82   & 98   & 93   & 0    \\
Violations of Axiom \ref{ax:m-2-cancellation}  & 834  & 521  & 293  & 2391 & 2702 & 2488 & 2098 \\
\hline
Total anomalies                   & 834  & 521  & 293  & 2391 & 2702 & 2488 & 2098 \\
\bottomrule
\end{tabular}
\caption{Number of anomalies out of 749{,}700 size-2 datasets with MR ground truths (parameters $\theta_1$--$\theta_3$) and optimized ground truths ($\theta_4$--$\theta_7$), using random gridding of probability values (described in Supplementary Appendix \ref{app:MRanomaliestwo}).}
\label{tab:MR_combined}
\end{table}

The main difference between MR's approach and ours is the addition of logit shocks, which mechanically creates  a spectrum of ``anomalousness,'' whereas for us $D$ either is or is not an anomaly. By looking at only the ``most anomalous'' datasets, MR looks at a biased sample. Another difference is the search algorithm over lottery tuples, but this perhaps plays less of a role because MR's optimization problem is nonconvex, so ultimately they rely on random seeding, which is similar to our grid search. Yet another difference is that we are checking EU exactly, while MR resort to parametric approximations; we do not know to what extent this plays a role.

\subsection{Computational aspects of testing}\label{sec:checking}

In this section, we show that it is not that hard to test expected utility directly: it boils down to solving a linear program.  In fact, it is easier to test EU directly, rather than testing its axioms.   By definition, a dataset $D$ is consistent with EU iff 
\begin{equation}\label{eq:EU-repr}
	\exists_{u\in \bR^Z} \forall_{i=1, \ldots, n} (2y_i-1)\langle u, p_i-q_i\rangle >0.
\end{equation}

For any pair of lotteries $p, q$, denote the half-space 
$$H^{p-q}:=\{u\in \bR^Z: \langle u, p-q\rangle > 0\}.$$
This is the set of Bernoulli utility functions that rationalize the choice of $p$ over $q$. Notice that $D$ is consistent with EU iff the intersection of its induced half-spaces is nonempty
\begin{equation}\label{eq:intersection1}
\bigcap_{i=1}^n H^{(2y_i-1)(p_i-q_i)}\neq \emptyset.	
\end{equation}


Checking if this intersection is nonempty is a linear program and efficient algorithms can solve this problem in polynomial time.\footnote{Since these are strict inequalities, we need to turn each strict inequality to a weak one with a slack variable instead of zero on the right hand side, and then maximize over the slack variable. This is a linear program. Let $n$ be the size of $D$ and $m$ be the total number of prizes used in $D$ (the cardinality of the union of the supports of all lotteries in $D$) and $L$ the number of bits per variable. The \cite{Karmarkar} algorithm, for instance, requires $O(m^{\frac{3}{2}}n^{2}L)$ operations to solve our linear program.}  Alternatively, we can  check Fishburn's Condition \ref{axiom_disc_basic}. This also corresponds to a linear program, with $n$ variables and $m+1$ equality constraints, since $0 \in \conv\{v_i\}_{i=1}^n$ if and only if $\exists \lambda_i \geq 0$ such that $\sum_{i=1}^n \lambda_i v_i =0$ and $\sum_{i=1}^n \lambda_i = 1$.






The same idea extends to monotone EU. Label the outcome space $Z = \{z_1,...,z_m\}$ such that $z_1 < z_2 < ...  < z_m$, and let $e_i$ correspond to the degenerate lottery that places probability 1 on $z_i$. Observe that a dataset $D$ is consistent with monotone EU iff   $\exists_{u\in \bR^Z} \forall_{i=1, \ldots, n} (2y_i-1)\langle u, p_i-q_i\rangle >0$ and $\forall_{i>j} \langle u, e_i - e_j\rangle > 0$.   In terms of half spaces, the condition is then
\begin{align*}
    \left( \bigcap_{i=1}^n H^{(2y_i-1)(p_i-q_i)} \right) \cap \left( \bigcap_{i > j} H^{ e_i - e_j} \right) \neq \emptyset,
\end{align*}
which is the natural analog of \eqref{eq:intersection1}.\footnote{Similarly, the analog to Condition \ref{axiom_disc_basic} is $0 \not \in \conv(\{(2y_i - 1)(p_i -q_i)\}_{i=1}^n \cup \{e_i - e_j\}_{i > j})$, which adds $\binom{m}{2}$ more variables to the linear program that checks Condition \ref{axiom_disc_basic}.} Hence, we can also verify monotone EU directly in polynomial time.


As a side note, we consider how costly it is to verify whether our axioms hold on $D$. Theorems \ref{thm:EU} and \ref{thm:REU} require us to check the axioms for all enumerations of $D$. For this reason, testing the axioms is actually computationally harder than directly testing the theories. For instance, to check $n$-strong independence, we would need to check all $2^n n!$ enumerations.  Of course, we are not advocating for checking our axioms in this manner. The axioms were formulated this way so that they are comparable with the classic axioms of decision theory, while helping us  reason about the properties of various models, such as  those discussed in Section \ref{sec:logit}.

The case of stochastic datasets is more complicated. Proposition \ref{prop:1} says that $D$ is REU-consistent if $\varrho\in \conv(Y_n)$, where $Y_n$ is the set of all EU-consistent deterministic choices between the lottery pairs of $D$. Given the set $Y_n$, this condition can be verified by linear programming. However, computing the set $Y_n$ itself is costly because this set can be massive. For example, if the lottery pairs are generated as in Section \ref{sec:13k}, then with probability one the set $Y_n$ has $2^n$ elements (of course computing the convex hull is easy in this case). 


By contrast, a researcher with control over the experimental design can choose the questions so that the set $Y_n$ is known by construction. For instance, if we restrict attention to binary menus  $\{p_i, q_i\}$ such that $p_i - q_i$ and $p_j - q_j$ lie on the same ray for all $i,j$, then $2$-Strong Independence implies that $Y_n = \{ (1,...,1), (0,...,0)\}$. 



A similar issue arises in the literature on abstract choice. Suppose $\tilde Y_n \supseteq Y_n$ is the collection of vectors $y \in \{0,1\}^n$ such that the dataset $D= ((p_i, q_i, y_i))_{i=1}^n$ corresponds to a  linear order.  \cite{kitamura2018nonparametric} provides a non-parametric test of the condition $\varrho \in \conv(\tilde Y_n)$. Assume, for simplicity, that all $p_i,q_i$ are distinct, in which case $|\tilde Y_n| = 2^n$. The bulk of the computational burden in \cites{kitamura2018nonparametric} procedure originates from how $\tilde Y_n$ scales with $n$, and alternative procedures that are more  computationally tractable   have been proposed \citep{smeulders2021nonparametric, turansick2025alternativeapproachnonparametricanalysis}. In the context of Proposition \ref{prop:1}, we can then interpret Theorem \ref{thm:EU}  as illustrating the \emph{potential} computational gains that arise from imposing the additional structure of  REU.\footnote{When the premise of $m$-Strong Independence holds for various subsets of $D$ of size $m \leq n$, we have $|Y_n| < |\tilde Y_n|$. By contrast, if the premise of $n$-Strong Independence fails on $D$, then $|Y_n| = |\tilde Y_n|$.
}

%

\section{Conclusion}

We extended \cites{Fishburn75} approach to axiomatize Random Expected Utility on finite datasets. We  developed a number of implications for ongoing experimental work, including datasets with random lottery pairs, spurious anomalies generated by logit shocks, and automatic generation of anomalies. We also discussed the computational aspects of testing.

Our approach relies on linearity, so directly extending it to testing non-EU theories will likely be difficult (with some exceptions, as we show in Appendix \ref{app:Yaari}). However, since EU and REU are the baseline theories of our profession, understanding how to test them properly is important, and we hope that our work contributes to this understanding. 

We conclude the paper with a series of appendices. In Appendix \ref{app:monotandconcave}, we discuss EU-consistency with monotone and/or concave Bernoulli utility functions. We also expand on MR's classification of anomalies. In Appendix \ref{app:Yaari}, we present a finite-data axiomatization of \cites{Yaari87} version of cumulative prospect theory. Appendix \ref{app:ARSP} discusses the Axiom of Revealed Stochastic Preference. Appendix \ref{app:EU} contains the proofs in the deterministic case. Appendix \ref{app:REU} contains the proofs in the stochastic case. Additional details for tables and computational procedures are in the Supplemental Appendix (Appendix \ref{app:supp})


\appendix

\section*{Appendices}

\section{Monotone and Concave Utility}\label{app:monotandconcave}
In Section \ref{app:monotandconcave-axioms}, we discuss axioms that are necessary and sufficient for EU with additional restrictions. In Section \ref{app:monotandconcave-MR}, we show how to coarsen the MR classification to capture more anomalies (by our count, we get up from less than 10\% to around 98\% or 100\% for some ground truths).

\subsection{Necessary and Sufficient Axioms}\label{app:monotandconcave-axioms}

Let $Z \subset \bR$. In this section, we show what happens if we add the requirement that the Bernoulli utility function $u$ be monotone or concave. If we observe the entire preference relation over lotteries, this can be axiomatized by adding the following axiom to the vNM axioms. We write $p\fosd q$ if $p$ first-order stochastically dominates $q$ and $p\Fosd q$ if $p\fosd q$ and $p\neq q$.

\begin{axiom}[FOSD-monotonicity]\label{ax:dom}
	If $p\Fosd q$, then $p\succ q$.
\end{axiom}

However, as in the pure-EU case, there are anomalies that satisfy all these axioms.

\begin{example}\label{ex:blackboard}
	Suppose that $p^1\succ q^1$ and $q^2\succ p^2$ where 
	\begin{align*}
	p^2&=\alpha p^1 + (1-\alpha) r\\
	\alpha q^1 + (1-\alpha) r&\Fosd q^2.
	\end{align*}

As in the Allais Paradox, if the analyst only observes the choices between $p^1$ and $q^1$ and between $p^2$ and $q^2$, independence is satisfied (vacuously) and FOSD-monotonicity as well (as long as $p^2 \not \geq_{FOSD} q^2$), yet a monotone EU representation does not exist. If it did, then by independence we must have $p^2\succ \alpha q^1 + (1-\alpha) r$ and by FOSD-monotonicity $\alpha q^1 + (1-\alpha) r \succ q^2$, so by transitivity $p^2\succ q^2$. The key is that choice comparisons with lottery $\alpha q^1 + (1-\alpha) r$ are not contained in the dataset.\finex
\end{example}

To rule out such examples, we need an axiom that combines independence and FOSD-monotonicity. Consider the following one.

\begin{axiom}[Cancellation]\label{ax:m-2-cancellation}
	For all $\alpha\in [0, 1]$, if $p^1\succ q^1$ and 
	$$\alpha q^1+ (1-\alpha) p^2 \geq_{FOSD} \alpha p^1+ (1-\alpha) q^2$$
	then $p^2\succ q^2$.
\end{axiom}

If $p^1 \succ q^1$, mixing any lottery with $q^1$ is ``worse" than doing so with $p^1$. Cancellation says that if $p^2$ is so much ``better" than $q^2$ such that the mixture $\alpha q^1+ (1-\alpha) p^2$ first-order stochastically dominates $\alpha p^1+ (1-\alpha) q^2$ (i.e., offsetting the initial disadvantage of $q^1$ relative to $p^1$), then  it must be that $p^2$ is strictly preferred to $q^2$. 

Cancellation extends to size-$n$ datasets as follows. 

\begin{axiom}[$n$-Cancellation]\label{ax:m-cancellation}
	For all $\alpha \in \Delta(n)$, if $p^i\succ q^i$ for $i =1, \ldots, n-1$ and 
	$$\sum_{i=1}^{n-1} \alpha^i q^i + \alpha^n p^n \geq_{FOSD} \sum_{i=1}^{n-1} \alpha^i p^i + \alpha^n q^n,$$
	then $p^n\succ q^n$.
\end{axiom}

Recall that $p\geq_{FOSD} q$ if the expected utility of $p$ is weakly higher than of $q$ for all strictly increasing utility functions. In general, we could imagine an incomplete relation $\geq_{\cU}$ given by a convex set of utility functions $\cU$. For example, second order stochastic dominance is given by the class of all increasing and concave utility functions. The following axiom extends the cancellation idea along these lines.

\begin{axiom}[$\cU$-Cancellation]\label{ax:U-cancellation}
	For all $\alpha \in \Delta(n)$, if $p^i\succ q^i$ for $i =1, \ldots, n-1$ and 
	$$\sum_{i=1}^{n-1} \alpha^i q^i + \alpha^n p^n \geq_{\cU} \sum_{i=1}^{n-1} \alpha^i p^i + \alpha^n q^n,$$
	then $p^n\succ q^n$.
\end{axiom}

The following result is an implication of Corollary 3(a) of \cite{Fishburn75}.

\begin{theorem}\label{thm:EU-2}
A dataset of size $n$	satisfies $\cU$-Cancellation if and only if it is consistent with EU from class $\cU$. 
\end{theorem}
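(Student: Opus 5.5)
Throughout, $\cU$ is a convex cone of utility functions in $\bR^Z$, and $p\geq_{\cU} q$ means $\langle u,p-q\rangle\geq 0$ for all $u\in\cU$. "Consistent with EU from class $\cU$" means there is $u\in\cU$ with $(2y_i-1)\langle u,p_i-q_i\rangle>0$ for all $i$. For FOSD, $\cU$ is the strictly increasing utilities, which is not closed; I return to this below. Write $v_i:=(2y_i-1)(p_i-q_i)$, so that $y_i$ corresponds to the observed strict preference.

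I plan to mimic the route from Condition \ref{axiom_disc_basic} to Theorem \ref{thm:EU}. The first step is a Fishburn-type condition: $D$ is $\cU$-consistent iff
$$0\notin \conv\{v_i\}_{i=1}^n + \cU^*,$$
where $\cU^*:=\{w: \langle u,w\rangle\geq 0\ \forall u\in\cU\}$ is the dual cone. Equivalently, there are no $\beta\in\Delta(n)$ and $w\in\cU^*$ with $\sum_i\beta_i v_i=-w$.

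\emph{Necessity} is direct. If $u\in\cU$ works and $\sum\beta_iv_i=-w$, then
$$0<\sum_i\beta_i\langle u,v_i\rangle=-\langle u,w\rangle\leq 0,$$
a contradiction. \emph{Sufficiency} is a separation argument. The set $K:=\conv\{v_i\}+\cU^*$ is convex, and it is closed because it is a polytope plus a closed convex cone (it is polyhedral when $\cU^*$ is polyhedral, as for FOSD and SOSD on finite $Z$). If $0\notin K$, strict separation gives $u$ with $\langle u,k\rangle>0$ on $K$. Taking $k=v_i+tw$ with $t\to\infty$ forces $\langle u,w\rangle\geq0$ for all $w\in\cU^*$, so $u\in\cU^{**}=\overline{\cU}$, and also $\langle u,v_i\rangle>0$ for each $i$.

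The main obstacle is that this $u$ lies only in the closure of $\cU$. For the FOSD case I would fix this by perturbation. The strict inequalities $\langle u,v_i\rangle>0$ are open conditions, so $u+\varepsilon u_0$ still satisfies them for small $\varepsilon>0$. If $u_0$ is in the relative interior of $\cU$ (for example $u_0(z)=z$), then $u+\varepsilon u_0\in\cU$ because $\cU$ is convex and contains its relative interior. The same trick works for any convex $\cU$ with nonempty relative interior.

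The second step translates the condition into the axiom. Suppose the condition fails, so $\sum_i\beta_iv_i\leq_{\cU^*}0$ with $\beta\in\Delta(n)$. Pick an index $k$ with $\beta_k>0$, and enumerate the data so that pair $k$ is the $n$-th pair and every other pair is oriented as $p^i\succ q^i$. Then
$$\sum_i\beta_iv_i=\sum_{i<n}\beta_i(p^i-q^i)+\beta_n(p^n-q^n),$$
where $p^n\succ q^n$ is the observed choice. Nonpositivity in the dual-cone order means
$$\sum_{i<n}\beta_iq^i+\beta_nq^n\geq_{\cU}\sum_{i<n}\beta_ip^i+\beta_np^n.$$
Now relabel the $n$-th pair with its roles reversed, taking $\tilde p^n=q^n$ and $\tilde q^n=p^n$. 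This gives exactly the premise of $\cU$-Cancellation with $\alpha=\beta$, and the axiom then demands $\tilde p^n\succ\tilde q^n$, i.e.\ $q^n\succ p^n$, contradicting the data. Hence the axiom implies the condition, which gives EU consistency.

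Conversely, suppose $u\in\cU$ represents the data. If the premise of the axiom holds but the conclusion fails, then $q^n\succ p^n$. In that case $\sum_{i<n}\alpha^i\langle u,q^i-p^i\rangle+\alpha^n\langle u,p^n-q^n\rangle$ is strictly negative whenever some $\alpha^i>0$. But the premise says this same sum is nonnegative, which is a contradiction. Degenerate weight patterns, such as $\alpha$ concentrated on a single coordinate, are handled directly. Finally, if some entries of $\beta$ vanish, the corresponding pairs are given zero weight in the enumeration.
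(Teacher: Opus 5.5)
Your proof is correct, but it takes a different and more self-contained route than the paper. The paper gives no argument of its own: it simply records the statement as an implication of Corollary~3(a) of Fishburn (1975).

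You re-derive the needed theorem of the alternative yourself, in three steps:
\begin{enumerate}
\item Strictly separate $0$ from the closed convex set $\conv\{v_i\}+\cU^*$, which is closed because it is a polytope plus a closed cone. This yields $u\in\cU^{**}=\overline{\cU}$ with $\langle u,v_i\rangle>0$ for all $i$.
\item Perturb $u$ to push it back into $\cU$.
\item Rewrite a failure of the condition as an instance of the axiom, with $\alpha=\beta$ and the last pair flipped.
\end{enumerate}
This is the same template the paper uses for pure EU (Gordan's lemma, then Condition~\ref{axiom_disc_basic}, then Axiom~\ref{axiom_disc_mod}). Indeed, your intermediate condition reduces to Condition~\ref{axiom_disc_basic} when $\cU=\bR^Z$, since then $\cU^*=\{0\}$.

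The citation buys brevity. Your argument buys generality and transparency: it covers an arbitrary convex class $\cU$, polyhedral or not. It also isolates the one delicate point. Classes such as the strictly increasing utilities are not closed, so separation alone only delivers $u\in\overline{\cU}$.

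Two details to tighten.

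First, in the perturbation step, ``$\cU$ is convex and contains its relative interior'' is not by itself why $u+\varepsilon u_0\in\cU$, because $u$ lies only in the closure. Use the line-segment principle instead. For $u\in\overline{\cU}$ and $u_0$ in the relative interior of $\cU$, the point $\frac{1}{1+\varepsilon}u+\frac{\varepsilon}{1+\varepsilon}u_0$ lies in the relative interior of $\cU$, hence in $\cU$. Then rescale by $1+\varepsilon$ using the cone property. For FOSD this is immediate: a weakly increasing $u$ plus $\varepsilon z$ is strictly increasing.

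Second, the paper only assumes $\cU$ is a convex set, not a cone. State explicitly that passing to the cone it generates changes neither $\geq_{\cU}$ nor consistency. Suppose a rationalizing $u=\sum_j\lambda_ju_j$ with $\lambda_j\geq 0$ and $u_j\in\cU$. It is nonzero, so it is a positive multiple of a convex combination of the $u_j$. That combination lies in $\cU$ and rationalizes the same strict choices.
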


Because cancellation-like axioms rule out all anomalies, they are not very useful for classifying them into various types. For that purpose, we develop a handful of weaker axioms in the following section

\subsection{Axiomatic Classification of Anomalies}\label{app:monotandconcave-MR}
Consider the following axiom, designed precisely to rule out Example \ref{ex:blackboard}.

\begin{axiom}[Right Combination]\label{ax:Left}
	For all $\alpha\in [0, 1]$ and $r\in \Delta(Z)$, if $p^1\succ q^1$ and
	\begin{align*}
	p^2&=\alpha p^1 + (1-\alpha) r\\
	\alpha q^1 + (1-\alpha) r&\Fosd q^2,
	\end{align*}
	then $p^2\succ q^2$.	
\end{axiom}

This axiom combines one application of independence and one application of FOSD-monotonicity.  One could also imagine its mirror image. 

\begin{axiom}[Left Combination]\label{ax:Right}
	For all $\alpha\in [0, 1]$ and $r\in \Delta(Z)$, if $p^1\succ q^1$ and    
	\begin{align*}
	p^2&\Fosd \alpha p^1 + (1-\alpha) r\\
	\alpha q^1 + (1-\alpha) r&= q^2, 	
	\end{align*}
 then $p^2\succ q^2$.		
\end{axiom}

There are also other possibilities, such as applying FOSD-monotonicity twice. 

\begin{axiom}[Double Combination]\label{ax:Middle}
    If $p^1\succ q^1$ and
   	\begin{align*}
   	p^2&\Fosd p^1 \\
   	 q^1 &\Fosd q^2,
   	\end{align*}
   	then $p^2\succ q^2$.
\end{axiom}

MR's three categories of anomalies are special kinds of violations of the above three axioms: Axiom \ref{ax:Left} rules out the Dominated Consequence Effect by setting  $r=\delta_x$ and  $q^2=\beta q^1 + (1-\beta) \delta_y$ for some $\beta<\alpha$ and $y=\min q^1 < x$, ensuring a stark form of FOSD.  Likewise, Axiom \ref{ax:Right} rules out the Reverse Dominated Consequence Effect by setting  $r=\delta_y$ and $p^2=\beta p^1 + (1-\beta) \delta_x$ for some $\beta<\alpha$ and $y<x=\max p^1$. Finally, Axiom \ref{ax:Middle} rules out the Strict Dominance Effect by setting  $p^2=\alpha p^1+ (1-\alpha)\delta_x $  and $q^2=\beta q^1 + (1-\beta) \delta_y$ for some $\alpha,\beta$ and $x=\max p^1, y=\min q^1$.

While our Table \ref{tab:MR_combined} shows that MR's three categories are not very prevalent, one could hope that the generalizations of these categories suggested by the three above axioms be more exhaustive. The following axiom lumps all of three of the above axioms together.

\begin{axiom}[Combination]\label{ax:LMR}
	For all  $\alpha\in [0,1]$ and $r\in \Delta(Z)$, if $p^1\succ q^1$ and 
		\begin{align*}
			p^2\fosd\alpha p^1 &+ (1-\alpha) r\\
			\alpha q^1 &+ (1-\alpha) r\fosd q^2,
		\end{align*}
	then if either one of the $\fosd$ is strict or $\alpha>0$, then $p^2\succ q^2$.
\end{axiom}

As Table \ref{tab:MR_combined_2} shows, even this generalized axiom is far from ruling out all the anomalies. However, the following strengthening of Axiom \ref{ax:LMR} along the lines of strong independence (Axiom \ref{ax:n-SI}) rules out almost all anomalies. 

\begin{axiom}[Strong Combination]\label{ax:strong-LMR}
For all  $\lambda\geq 0$, if $p^1\succ q^1$ and 
	\begin{align*}
		p^2\fosd &\hat p\\
		&\hat q\fosd q^2
	\end{align*}
for $\hat p -\hat q=\lambda (p^1-q^1)$, then if either one of the $\fosd$ is strict or $\lambda>0$, then $p^2\succ q^2$. 	
\end{axiom}

Our exploration in this section illustrates that the key to having a more exhaustive classification of anomalies lies in the same idea behind the way strong independence  (Axiom \ref{ax:SI}) strengthens the independence axiom.

\begin{table}[h!]
\centering
\begin{tabular}{l|ccc|cccc}
\toprule
 & \multicolumn{3}{c|}{MR parameters} & \multicolumn{4}{c}{optimized parameters} \\
 & $\theta_1$ & $\theta_2$ & $\theta_3$ & $\theta_4$ & $\theta_5$ & $\theta_6$ & $\theta_7$ \\
\hline
Dominated Consequence              & 0   & 0   & 0   & 258  & 235  & 172  & 0    \\
Reverse Dominated Consequence      & 0   & 0   & 0   & 268  & 257  & 211  & 0    \\
Strict Dominance                   & 0   & 0   & 0   & 250  & 227  & 166  & 0    \\
\hline
Violations of Axiom \ref{ax:Left}              & 21  & 24  & 14  & 994  & 1069 & 962  & 149  \\
Violations of Axiom \ref{ax:Right}             & 39  & 21  & 6   & 1045 & 1195 & 1064 & 204  \\
Violations of Axiom \ref{ax:Middle}            & 0   & 0   & 0   & 673  & 584  & 423  & 0    \\
Violations of Axiom \ref{ax:LMR}               & 62  & 39  & 18  & 1661 & 1957 & 1846 & 394  \\
Violations of Axiom \ref{ax:strong-LMR}        & 828 & 521 & 293 & 2391 & 2702 & 2488 & 2044 \\
Violations of Axiom \ref{ax:m-2-cancellation}  & 834 & 521 & 293 & 2391 & 2702 & 2488 & 2098 \\
\hline
Total anomalies                                & 834 & 521 & 293 & 2391 & 2702 & 2488 & 2098 \\
\bottomrule
\end{tabular}
\caption{Number of anomalies out of 749{,}700 size-2 datasets with \cite{MR} ground truths (parameters $\theta_1$--$\theta_3$) and optimized ground truths ($\theta_4$--$\theta_7$), using random gridding of probability values.}
\label{tab:MR_combined_2}
\end{table}

\newpage
\section{Probability Weighting}\label{app:Yaari}

Let the prize space $Z$ be some finite subset of $[0,1]$, normalizing the prizes to the unit interval without loss. Write typical elements of $\Delta(Z)$ as $p,q,r$, and their corresponding survival functions (i.e., one minus the  CDF) as $G_p, G_q, G_r$. In our finite support setting, $ G_p(t) = \bP_p(Z > t)  = \sum_{ \{ i : z_i > t \} } p(z_i)$ for any $t \in [0,1]$. Recall that we can write the $p$-expectation of $Z$ using the survival function as
\begin{align*}
    \bE_p[Z] = \int_0^1 G_p(t) \, dt.
\end{align*}

The simplest model of probability weighting, due to \cite{Yaari87}, involves a linear utility function and a nonlinear probability weighting function, such that the utility of lottery $p$ is given by 
\begin{equation}
V_\phi(p):=\int_0^1 \phi(G_p(t)) \, dt     \label{eq:Yaari}
\end{equation}
for some non-decreasing $\phi: [0, 1] \to \bR$. We say that $D$ is Yaari-consistent if there exists a non-decreasing function $\phi:[0, 1]\to \bR$ such that $y_i=1$ iff $V_\phi(p_i) > V_\phi(q_i)$.

Define the inverse of the survival function as 
\begin{align*}
    G_p^{-1}(y) := \inf \bigg\{ t \in [0,1] : G_p(t) \leq y  \bigg\},
\end{align*}
with $G_p^{-1}(1) = 0$. The inverse function $G_p^{-1}: [0,1] \to [0,1]$ is non-increasing and right-continuous, implying that $G_p^{-1}$ is also a survival function with some corresponding lottery that we refer to as $p^{-1}$. Moreover, we have $(p^{-1})^{-1}=p$ for all $p$. 

This allowed \cite{Yaari87} to introduce his dual mixture  $\alpha p \boxplus (1-\alpha) q$ for $\alpha\in [0, 1]$ as the lottery $\left( \alpha p^{-1} + (1-\alpha) q^{-1}\right)^{-1}$. Using this mixture operation, his dual independence axiom is stated as follows.   

\begin{axiom}[Dual Independence]\label{ax:DI}
    For all $\alpha \in [0, 1]$, if $p^1\succ q^1$ and 
    \begin{align*}
        p^2&= \alpha p^1 \boxplus (1-\alpha) r\\    
        q^2&= \alpha q^1 \boxplus (1-\alpha) r,
    \end{align*}
then $p^2\succsim q^2$.         
\end{axiom}

\cite{Yaari87} showed that on the full domain,  dual independence (Axiom \ref{ax:DI}) together with FOSD-monotonicity (Axiom \ref{ax:dom}) and continuity is equivalent to a representation by \eqref{eq:Yaari}. This result holds essentially because the function $V_\phi$ is linear and monotone in the ``dual'' mixures~$\boxplus$. Recall from Appendix \ref{app:monotandconcave} that cancellation (Axiom \ref{ax:m-cancellation}) combines FOSD-monotonicity and independence. Consider now the following axiom, which combines FOSD-monotonicity and dual independence.

\begin{axiom}[Dual Cancellation]\label{ax:DualCancelation}
    If $p^1 \succ q^1$ and
    \begin{align*}
        \alpha q^1 \boxplus  (1-\alpha) p^2 \geq_{FOSD} \alpha p^1 \boxplus (1-\alpha) q^2, \label{eq_neg_cancellation}
    \end{align*}
    then $p^2 \succ q^2$.
\end{axiom}

\begin{theorem}\label{thm:Yaari}
     A size-2 dataset satisfies Axiom \ref{ax:DualCancelation} if and only if it can be rationalized by \eqref{eq:Yaari}.
\end{theorem}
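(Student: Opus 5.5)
The plan is to linearize the Yaari functional in the dual (quantile) coordinates and reduce Theorem \ref{thm:Yaari} to the monotone-EU case, i.e.\ to Theorem \ref{thm:EU-2} with $\cU$ the class of increasing utilities (Cancellation, Axiom \ref{ax:m-2-cancellation}).

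\textbf{Step 1: rewrite $V_\phi$ as an expected utility in dual coordinates.} By Fubini,
$$V_\phi(p)=\int_0^1\phi(G_p(t))\,dt=\int_0^1 G_p^{-1}(y)\,d\phi(y),$$
up to boundary terms and an additive constant coming from $\phi(0)$; the constant is irrelevant for strict comparisons. Since $G_p^{-1}$ is the survival function of $p^{-1}$, this says
$$V_\phi(p)=\bE_{p^{-1}}[\psi]$$
for a non-decreasing $\psi$ on the support of the dual lotteries, essentially $\psi(y)=\phi(y)$.

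Because $p\mapsto p^{-1}$ is an involution carrying $\boxplus$ to ordinary mixtures, $V_\phi$ is linear in $\boxplus$. Moreover, the FOSD order is preserved (possibly reversed) by inversion: $p\fosd q$ iff $G_p\ge G_q$ iff $G_p^{-1}\ge G_q^{-1}$. So FOSD on dual lotteries is the same as FOSD on the original lotteries.

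\textbf{Step 2: translate the dataset.} Map $D=\{(p^1,q^1,y^1),(p^2,q^2,y^2)\}$ to the dual dataset $D^{-1}=\{(p^{i,-1},q^{i,-1},y^i)\}$, whose outcome space is the finite set of jump points (probability levels) appearing in the four survival functions.

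By Step 1, $D$ is Yaari-consistent iff $D^{-1}$ is consistent with EU for some non-decreasing $\psi$. Dual Cancellation on $D$ is exactly ordinary Cancellation on $D^{-1}$, since
$$\bigl(\alpha q^1\boxplus(1-\alpha)p^2\bigr)^{-1}=\alpha q^{1,-1}+(1-\alpha)p^{2,-1}$$
and FOSD is preserved. Theorem \ref{thm:EU-2} (size $2$, with $\cU$ the increasing utilities) then gives the equivalence.

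\textbf{Main obstacle: weak versus strict monotonicity.} Theorem \ref{thm:EU-2} with FOSD concerns strictly increasing utilities, while $\phi$ need only be non-decreasing, and the axiom's FOSD hypothesis is weak. I will handle this as follows.

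For \emph{necessity}, apply $V_\phi$ to the dominance relation and use linearity in $\boxplus$:
$$\alpha\bigl[V(q^1)-V(p^1)\bigr]+(1-\alpha)\bigl[V(p^2)-V(q^2)\bigr]\ge 0.$$
With $V(p^1)>V(q^1)$ this forces $V(p^2)>V(q^2)$, provided $\alpha<1$. The edge case $\alpha=1$ must be checked separately: there the hypothesis reads $q^1\fosd p^1$, which contradicts $p^1\succ q^1$ under any non-decreasing $\phi$, so the axiom holds vacuously.

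For \emph{sufficiency}, the separating argument (Fishburn's Corollary 3(a)) yields a strictly increasing $\psi$ on the finite grid of jump points. I then extend it to a non-decreasing $\phi$ on $[0,1]$ by a step or linear interpolation, and verify that the boundary terms in the Fubini identity are handled correctly. That verification, together with getting the sign conventions of the inverse right, is the delicate part.
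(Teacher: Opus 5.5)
Your reduction is the same as the paper's. Lemma \ref{lemma_Yaari_tech1} identifies $V_\phi(p)$ with $\bE_{p^{-1}}[\phi]$ by a change of variables; the proof then passes to the dual dataset $D^*$ of inverted lotteries, applies Theorem \ref{thm:EU-2} for monotone EU, and reads Cancellation on $D^*$ back as Dual Cancellation on $D$. Your attention to non-decreasing versus strictly increasing utility addresses a point the paper glosses over. Your sufficiency step is also fine: take a strictly increasing utility on the finite level set and extend it to a non-decreasing $\phi$ on $[0,1]$.

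There is one real slip, in necessity, and it sits at $\alpha=0$, not $\alpha=1$. At $\alpha=0$ your displayed inequality is just $V(p^2)-V(q^2)\ge 0$. The term carrying $p^1\succ q^1$ has weight zero, and a merely non-decreasing $\phi$ does produce ties under strict dominance. Take $\phi(y)=0$ for $y\le 1/2$ and $\phi(y)=1$ for $y>1/2$. Then $p_2=.6\delta_1+.4\delta_0\Fosd q_2=.55\delta_1+.45\delta_0$, yet both get $V_\phi=1$, while $V_\phi(\delta_1)=1>0=V_\phi(\delta_0)$. So the claim that $\alpha<1$ forces $V(p^2)>V(q^2)$ is false as stated.

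The case is rescued only by reading Yaari-consistency as tie-free, $(2y_i-1)\bigl(V_\phi(p_i)-V_\phi(q_i)\bigr)>0$, exactly as in the EU definition. Then $V(p^2)\ge V(q^2)$ already rules out the recorded choice $q^2\succ p^2$, which handles every $\alpha\in[0,1)$ at once. You should say this explicitly. Under the paper's literal one-sided definition ($y_i=1$ iff $V_\phi(p_i)>V_\phi(q_i)$), the dataset $\{(\delta_1,\delta_0,1),(p_2,q_2,0)\}$ is Yaari-consistent yet violates Dual Cancellation at $\alpha=0$.

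The boundary terms you flag as delicate do not actually arise. Let $T$ be uniform on $[0,1]$. Right-continuity of $G_p$ gives $\{t:G_p(t)>y\}=[0,G_p^{-1}(y))$, so $G_p(T)$ has survival function $G_p^{-1}$, i.e.\ law $p^{-1}$. Hence $V_\phi(p)=\bE[\phi(G_p(T))]=\bE_{p^{-1}}[\phi]$ holds exactly for every non-decreasing $\phi$, with no additive constant. The trouble at common jump points comes only from the Stieltjes form $\int G_p^{-1}\,d\phi$, so avoid it. With the exact identity, linearity in $\boxplus$ and the extension in your sufficiency step need no further checking. Also, inversion always preserves FOSD ($G_p\ge G_q$ iff $G_p^{-1}\ge G_q^{-1}$); it never reverses it, so drop the hedge.
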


The following result restates the main part of Yaari's original proof. Fix some $\succsim$ and define the induced preference $\succsim^*$ as $p \succsim^* q \iff p^{-1} \succsim q^{-1}$. 

\begin{lemma} \label{lemma_Yaari_tech1}
    The relation $\succsim$ is represented by \eqref{eq:Yaari} if and only if $\succsim^*$ is represented by monotone~EU.
\end{lemma}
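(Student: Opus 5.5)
The plan is to show that the Yaari functional evaluated at $p$ equals an expected utility evaluated at the ``inverse'' lottery $p^{-1}$, with the probability weighting function $\phi$ serving as the Bernoulli utility. Once this is established, the lemma follows from the involution property $(p^{-1})^{-1}=p$. Concretely, I aim for the identity
\begin{equation*}
V_\phi(p)=\int_0^1 \phi(G_p(t))\,dt=\sum_{y} \phi(y)\,p^{-1}(y)=\bE_{p^{-1}}[\phi],
\end{equation*}
where the sum runs over the finite support of $p^{-1}$. This support is contained in the set of values taken by $G_p$, because $G_{p^{-1}}=G_p^{-1}$ by construction.

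The first step is a layer-cake (Fubini) swap. Since $Z$ is finite, $G_p$ is a non-increasing step function with finitely many levels $1=g_0> g_1>\dots>g_k=0$. Write $\ell_j$ for the Lebesgue measure of the set $\{t\in[0,1]:G_p(t)=g_j\}$. Then
\begin{equation*}
\int_0^1\phi(G_p(t))\,dt=\sum_j \phi(g_j)\,\ell_j .
\end{equation*}
I then compute $p^{-1}$ directly from the definition $G_p^{-1}(y)=\inf\{t: G_p(t)\le y\}$. For $y\in[g_{j},g_{j-1})$, this infimum equals the Lebesgue measure of $\{t: G_p(t)>y\}$, which is $\ell_0+\dots+\ell_{j-1}$. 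Reading off the jumps of $G_p^{-1}$ shows that $p^{-1}$ places mass exactly $\ell_j$ on the prize $g_j$; the convention $G_p^{-1}(1)=0$ takes care of the top level. This gives $V_\phi(p)=\bE_{p^{-1}}[\phi]$ for every non-decreasing $\phi$, with $\phi$ viewed as a Bernoulli utility on $[0,1]$.

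The second step transfers this identity to the preference relations. Suppose first that $\succsim$ is represented by $V_\phi$. For any lotteries $r,s$, apply the definition of $\succsim^*$ and then the identity to $r^{-1}$ and $s^{-1}$:
\begin{equation*}
r\succsim^* s \iff r^{-1}\succsim s^{-1}\iff V_\phi(r^{-1})\ge V_\phi(s^{-1})\iff \bE_{r}[\phi]\ge \bE_{s}[\phi].
\end{equation*}
The last equivalence uses $(r^{-1})^{-1}=r$. Hence $\succsim^*$ is monotone EU with utility $\phi$. Conversely, suppose $\succsim^*$ is represented by a monotone $u$. Define $\phi$ as a non-decreasing extension of $u$ to $[0,1]$, for instance a step extension; only the values of $\phi$ on the finitely many relevant prizes matter. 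Running the same chain of equivalences backwards then shows that $V_\phi$ represents $\succsim$.

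I expect the main obstacle to be the bookkeeping in the first step: matching the inf/right-continuity conventions in $G_p^{-1}$ so that the mass of $p^{-1}$ sits at exactly the level $g_j$, rather than at a neighbouring value, and treating the endpoint levels $g_0=1$ and $g_k=0$ correctly. A second point to check is that ``monotone'' in monotone EU means non-decreasing, to match $\phi$ non-decreasing, and not strictly increasing. If the paper's convention is strict monotonicity, I would need to argue that $\phi$ can be perturbed to a strictly increasing function without changing any of the finitely many relevant comparisons.
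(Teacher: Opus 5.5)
Your proposal is correct and is essentially the paper's proof: the paper derives the same identity $V_\phi(p)=\bE_{p^{-1}}[\phi]$ by the change of variables $y=G_p(t)$, reading $-\int_0^1\phi(y)\,dG_p^{-1}(y)$ as an expectation under $p^{-1}$, and then obtains the equivalence of the two representations from the definition of $\succsim^*$ together with $(p^{-1})^{-1}=p$; your step-function computation just carries out that change of variables explicitly. One caveat: your fallback of perturbing $\phi$ to a strictly increasing function is valid for finitely many strict comparisons (as in Theorem~\ref{thm:Yaari}), but not for the full relations in this lemma, where it would destroy indifferences; since the identity passes $\phi$ unchanged to the Bernoulli utility, ``monotone'' simply has to be read in the same non-decreasing sense as $\phi$.
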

\begin{proof}

    \begin{align}
         p \succsim q & \spiff \int_0^1 \phi(G_p(t)) \, dt \geq \int_0^1 \phi(G_q(t)) \, dt \nonumber  \\
         & \spiff - \int_0^1 \phi(y) \, dG_p^{-1}(y) \geq - \int_0^1 \phi(y) \, dG_q^{-1}(y) \tag{by change of variable: $y = G_p(t)$, $y = G_q(t)$} \\
         & \spiff - \int_0^1 \phi(y) \, dG_{p^{-1}}(y) \geq - \int_0^1 \phi(y) \, dG_{q^{-1}}(y)   \tag{by definition of $p^{-1}$} \\
        & \spiff \bE_{p^{-1}}[\phi(Z)] \geq \bE_{q^{-1}}[\phi(Z)] \tag{by definition of EU with a survival function} \\
        & \spiff p^{-1} \succsim^* q^{-1} \nonumber 
    \end{align}

\end{proof}

\begin{proof}[Proof of Theorem \ref{thm:Yaari}]
    Enumerate, without loss, the dataset so that $D = \{(p_1, q_1, 1), (p_2, q_2, 1)\}$. Define the induced dataset $D^* := \{(p_1^{-1}, q_1^{-1}, 1), (p_2^{-1}, q_2^{-1}, 1)\}$. By Lemma \ref{lemma_Yaari_tech1}, we know that $D$ is Yaari-consistent if and only if $D^*$ is rationalizable by monotone EU. By Theorem \ref{thm:EU-2}, we know that the latter statement holds if and only if $D^*$ satisfies Axiom \ref{ax:m-cancellation}. But this is exactly equivalent to $D$ satisfying dual cancellation. 
\end{proof}

The size-$n$ generalization follows analogously.

\section{Axiom of Revealed Stochastic Preference}\label{app:ARSP}

We now show how to adapt \cites{McFaddenRichter} Axiom of Revealed Stochastic Preference (ARSP) to the lottery setting. Intuitively, the original ARSP maximizes over the choices made by all preferences. In Axiom \ref{ax:ARSP-EU}, the maximum is over all EU-preferences (see also \cite{mcfadden2005revealed} for a general treatment with abstract classes of preferences).  While in our previous axioms, the enumerations $\left((p^j, q^j, \varrho^j)\right)$ were sequences of length at most $n=|D|$, ARSP requires us to allow for longer sequences, and for items to be repeated multiple times. Formally, we consider sequences $\left((p^j, q^j, \varrho^j)\right)_{j=1}^k$ such that $(p^j, q^j, \varrho^j)\in D$ or  $(q^j, p^j, 1-\varrho^j)\in D$ for all $j$, and take $i(j)$ as the corresponding index in $D$. Let $\cD$ be the set of all such sequences.

\begin{axiom}[ARSP-EU]\label{ax:ARSP-EU}
For all $\left((p^j, q^j, \varrho^j)\right)_{j=1}^k \in \cD$  
    $$\sum_{j=1}^k \varrho^j \leq \max_{y_*\in Y_*} \sum_{j=1}^k y^j_*.$$
\end{axiom}
Here,  the space $Y_*$ is the extension of the space $Y_n$ to longer sequences of enumerations. Formally, for any $y\in Y_n$ we define $y_*\in \{0, 1\}^k$ by
\begin{equation*}
    y^j_*:=\begin{cases}
        y_i &\text{ if }(p^j, q^j, \varrho^j)=(p_i, q_i, \varrho_i) \text{ for some }i =1, \ldots, n\\
        1-y_i &\text{ if }(p^j, q^j, \varrho^j)=(q_i, p_i, 1-\varrho_i) \text{ for some }i =1, \ldots, n.
    \end{cases}
\end{equation*}
The next proposition then follows by the same logic as in the case of ARSP for random utility.\footnote{For example, the proof of Theorem 1 in \cite{stoye2019revealed} applies directly after specializing Stoye’s collection of rationalizable choice functions to our class of EU-consistent binary relations.}

\begin{proposition}\label{prop:2}
   $D$ is REU-consistent iff it satisfies Axiom \ref{ax:ARSP-EU}.
\end{proposition}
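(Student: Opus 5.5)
The plan is to deduce Proposition \ref{prop:2} from Proposition \ref{prop:1}, which states that $D$ is REU-consistent iff $\varrho\in\conv(Y_n)$. It therefore suffices to show that Axiom \ref{ax:ARSP-EU} is equivalent to $\varrho\in\conv(Y_n)$. This is a finite-dimensional separation argument of the McFadden--Richter type, as in \cite{stoye2019revealed}.

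For necessity, suppose $\varrho=\sum_{y\in Y_n}\beta_y y$ with $\beta\in\Delta(Y_n)$. Take any sequence in $\cD$. Each term $\varrho^j$ equals either $\varrho_{i(j)}$ or $1-\varrho_{i(j)}$. In both cases $\varrho^j=\sum_y \beta_y y^j_*$, because the $\beta_y$ sum to one. Summing over $j$ gives $\sum_j\varrho^j=\sum_y\beta_y\sum_j y^j_*\le \max_{y_*}\sum_j y^j_*$, which is the axiom.

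For sufficiency, suppose $\varrho\notin\conv(Y_n)$. Since $\conv(Y_n)$ is a polytope, there is a separating vector $w\in\bR^n$ with $\langle w,\varrho\rangle>\max_{y\in Y_n}\langle w,y\rangle$. Because $Y_n$ is a finite set of rational (0/1) vectors, the separating inequality can be taken to be a facet inequality of $\conv(Y_n)$. Facet inequalities have rational coefficients, and the strict inequality survives scaling, so we may take $w\in\bZ^n$. Now build a sequence in $\cD$ as follows: for each $i$ with $w_i>0$, include $(p_i,q_i,\varrho_i)$ exactly $w_i$ times; for each $i$ with $w_i<0$, include the flipped item $(q_i,p_i,1-\varrho_i)$ exactly $|w_i|$ times. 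Let $K=\sum_{i:w_i<0}|w_i|$. Then for every $y\in Y_n$,
\[
\sum_j y^j_*=\langle w,y\rangle+K,\qquad \sum_j\varrho^j=\langle w,\varrho\rangle+K .
\]
Hence $\sum_j\varrho^j>\max_{y_*}\sum_j y^j_*$, so the axiom is violated. The degenerate case $w=0$ cannot occur, since it would give $0>0$.

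The main obstacle is the integrality step: we need a separating $w$ that can be chosen with integer entries, because the multiplicities in the sequence must be integers. I would handle it by citing the rationality of facets of a $0/1$ polytope. Alternatively, one can perturb $w$ to a nearby rational vector, which preserves the strict inequality because $Y_n$ is finite, and then clear denominators.
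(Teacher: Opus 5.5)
Your proposal is correct and follows essentially the paper's route. The paper gives no separate argument: it reduces to Proposition~\ref{prop:1} and invokes the proof of Theorem~1 in \cite{stoye2019revealed} specialized to $Y_n$, which is exactly your separation argument with integer weights realized as repetitions and flips of dataset entries. One small point: your perturbation-and-rescaling route to an integer $w$ is the one to rely on, because when $\conv(Y_n)$ is not full-dimensional the separating inequality may come from its affine hull rather than from a facet.
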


\section{Proofs in the Deterministic Case}\label{app:EU}
We make use of Gordan's lemma, which is a version of the separating hyperplane theorem; see Lemma 1 of \cite{Fishburn75}. 

 \begin{lemma}[Gordan]\label{lem:Gordan}
    Suppose $v_1,...,v_{n} \in \bR^Z$. Exactly one of the following statements hold.
    \begin{enumerate}
        \item There exists some $u \in \bR^Z$ such that $u \cdot v_i > 0$ for all $i \in \{1,...,n\}$. 
        \item  $\textbf{0} \in \conv\{v_1, ..., v_n\}$  
    \end{enumerate}
\end{lemma}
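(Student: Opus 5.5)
The plan is to prove the two halves separately: first that statements 1 and 2 cannot both hold, then that at least one of them must hold. The second half is a strict separation argument applied to the compact convex set $C:=\conv\{v_1,\ldots,v_n\}$ and the point $\textbf{0}$.

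\emph{At most one holds.} Suppose both hold. Write $\textbf{0}=\sum_{i=1}^n \lambda_i v_i$ with $\lambda_i\geq 0$ and $\sum_i\lambda_i=1$, and take $u$ as in statement 1. Taking the inner product with $u$ gives $0=u\cdot \textbf{0}=\sum_i \lambda_i (u\cdot v_i)$. The right-hand side is strictly positive, because every $u\cdot v_i>0$ and at least one $\lambda_i>0$. This is a contradiction.

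\emph{At least one holds.} Suppose $\textbf{0}\notin C$; I will construct $u$ directly. Since $C$ is the convex hull of finitely many points, it is compact and convex. Hence the continuous function $x\mapsto \|x\|^2$ attains its minimum on $C$ at some point $u\in C$, and $u\neq \textbf{0}$ by assumption.

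Fix any $x\in C$. By convexity, $u+t(x-u)\in C$ for $t\in[0,1]$. Minimality of $u$ then says that $g(t):=\|u+t(x-u)\|^2=\|u\|^2+2t\,u\cdot(x-u)+t^2\|x-u\|^2$ is minimized over $[0,1]$ at $t=0$. This forces $g'(0)=2\,u\cdot(x-u)\geq 0$, that is, $u\cdot x\geq \|u\|^2>0$. Applying this with $x=v_i\in C$ gives $u\cdot v_i>0$ for all $i$, which is statement 1.

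The only step needing care is the variational inequality $u\cdot(x-u)\geq 0$ at the minimum-norm point. The one-variable computation above handles it. Compactness is used to guarantee that the minimizer exists and to obtain strict rather than weak separation. Since the paper cites Lemma 1 of \cite{Fishburn75}, one could simply invoke that, but the self-contained argument above is short enough to include.
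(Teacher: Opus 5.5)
Your proof is correct: the exclusivity step is sound, and so is the minimum-norm-point construction, including the variational inequality $u\cdot x\geq \|u\|^2>0$ on $\conv\{v_1,\ldots,v_n\}$. The paper gives no proof of its own --- it cites Lemma 1 of \cite{Fishburn75} and calls the result a version of the separating hyperplane theorem --- so your self-contained argument is the standard separation proof the paper is relying on.
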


\subsection{Proof of  Theorem \ref{thm:Fishburn1}}

Let the dataset be $D \equiv \{(p_i, q_i, y_i)\}_{i=1}^n$, where $y_i = 1$ if $p_i \succ q_i$ and $y_i = 0$ if $p_i \prec q_i$. It follows directly from Lemma \ref{lem:Gordan} that the Condition \ref{axiom_disc_basic} is necessary and sufficient for $D$ to be EU-consistent. \qed

\subsection{Proof of the first assertion of Theorem \ref{thm:EU}}

We can make Condition \ref{axiom_disc_basic}  more ``axiom-like" by rewriting it as  follows. We superscript the lotteries to emphasize the fact that the axiom must hold for any labeling of the entries in the dataset.

\begin{axiom} \label{axiom_disc_mod}
    Suppose $p^1 \succ q^1$,..., $p^{n-1} \succ q^{n-1}$. If $\textbf{0} \in \conv\{p^1 - q^1, ..., p^{n-1} - q^{n-1}, q^n - p^n\}$, then $p^n \succ q^n$.
\end{axiom}

\begin{proposition} \label{thm_disc_iff_1}
    $D$ can be rationalized by EU if and only if it satisfies Axiom \ref{axiom_disc_mod}.
\end{proposition}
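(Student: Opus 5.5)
The plan is to reduce Proposition \ref{thm_disc_iff_1} to Theorem \ref{thm:Fishburn1}, already established via Gordan's lemma. Throughout, write $v_i := (2y_i-1)(p_i-q_i)$ for the oriented difference of the $i$-th entry of $D$. Under any enumeration $(p^j,q^j)$ with $p^j\succ q^j$ recorded in the data, $p^j - q^j$ is exactly the oriented vector $v_{i(j)}$.

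\emph{Necessity.} Suppose $D$ is EU-consistent with utility $u$. Take any enumeration with $p^j\succ q^j$ for $j<n$ and suppose $\textbf{0}\in\conv\{p^1-q^1,\dots,p^{n-1}-q^{n-1},q^n-p^n\}$. Write $\textbf{0}=\sum_{j<n}\lambda_j(p^j-q^j)+\lambda_n(q^n-p^n)$ with $\lambda\in\Delta(n)$.

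Suppose, for contradiction, that $q^n\succ p^n$. Then $\langle u,q^n-p^n\rangle>0$, and $\langle u,p^j-q^j\rangle>0$ for every $j<n$. Pairing $u$ with the combination gives $0=\sum_j\lambda_j(\text{positive})>0$, because the weights are nonnegative and not all zero. This is a contradiction, so $p^n\succ q^n$. No choice of $u$ is needed beyond the given one, so this direction is routine.

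\emph{Sufficiency.} Suppose $D$ satisfies Axiom \ref{axiom_disc_mod} but is not EU-consistent. By Theorem \ref{thm:Fishburn1}, $\textbf{0}\in\conv\{v_1,\dots,v_n\}$, so $\textbf{0}=\sum_i\lambda_i v_i$ with $\lambda\in\Delta(n)$. Some $\lambda_k>0$; enumerate the data so that entry $k$ comes last, oriented in reverse. That is, set $(p^n,q^n)$ equal to the observed pair written so that $q^n\succ p^n$ is the observed choice, which gives $q^n-p^n=v_k$. Orient the remaining entries so that $p^j\succ q^j$ and $p^j-q^j=v_{i(j)}$.

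Then the premise of Axiom \ref{axiom_disc_mod} holds: $\textbf{0}\in\conv\{p^1-q^1,\dots,p^{n-1}-q^{n-1},q^n-p^n\}$ with the same weights $\lambda$. The axiom therefore forces $p^n\succ q^n$, contradicting the observed $q^n\succ p^n$. Hence $D$ is EU-consistent.

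The only point needing care is this bookkeeping of orientations in the sufficiency step. The axiom's preference premises must match observed choices, and the reversed last vector must equal $v_k$ rather than $-v_k$. Choosing $k$ with $\lambda_k>0$ is not even strictly necessary, since any index works given that the convex combination already includes all $n$ vectors.
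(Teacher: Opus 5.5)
Your proof is correct and follows essentially the same route as the paper. Necessity is the easy direction of Gordan's lemma, which you carry out by pairing with $u$. Sufficiency places one entry last with its orientation reversed, so the axiom's premise becomes exactly $\textbf{0}\in\conv\{(2y_i-1)(p_i-q_i)\}_{i=1}^n$ and its conclusion contradicts the observed choice; the paper reverses the $n$-th entry, and your remark that any index works is right because the convex hull in the premise is the same set of $n$ vectors either way.
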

\begin{proof}
    Necessity follows immediately by contradiction from Gordan's lemma. As for sufficiency, suppose a size-$n$ dataset $D \equiv \{p_1 \succ q_1, ..., p_n \succ q_n\}$ satisfies the axiom. If $\textbf{0} \not \in \conv\{p_1 - q_1, ..., p_{n-1}-q_{n-1}, p_n -q_n\}$, then Gordan's lemma delivers the representation. If $\textbf{0}  \in \conv\{p_1 - q_1, ..., p_{n-1}-q_{n-1}, p_n -q_n\}$, the axiom  requires $q_n \succ p_n$, a contradiction.
\end{proof}

The following result says that Axiom \ref{axiom_disc_mod} is equivalent to $n$-strong independence, which is the first assertion of Theorem \ref{thm:EU}. 

\begin{proposition} \label{thm_nSI}
    $D$ can be rationalized by EU if and only if it satisfies Axiom \ref{ax:n-SI}.
\end{proposition}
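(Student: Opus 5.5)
The plan is to prove that Axiom \ref{ax:n-SI} (with $k=n$) is equivalent to Axiom \ref{axiom_disc_mod}, and then invoke Proposition \ref{thm_disc_iff_1}. Throughout, fix an arbitrary enumeration of $D$ and write $v_i := p^i - q^i$.

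\textbf{Necessity.} This follows directly from EU. If $u$ rationalizes $D$, then $\langle u, v_i\rangle>0$ for $i\le n-1$. Suppose $v_n=\sum_{i<n}\lambda_i v_i$ with $\lambda_i\ge 0$. Since $v_n = p^n-q^n\neq 0$, not all $\lambda_i$ vanish. Hence $\langle u, v_n\rangle = \sum_i \lambda_i\langle u,v_i\rangle>0$, which gives $p^n\succ q^n$.

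\textbf{Sufficiency.} I would show that $n$-strong independence implies Axiom \ref{axiom_disc_mod}. Assume $p^i\succ q^i$ for $i\le n-1$ and $\mathbf{0}\in\conv\{v_1,\dots,v_{n-1},-v_n\}$. Then there are weights $\mu_i\ge0$, summing to one, with $\sum_{i<n}\mu_i v_i = \mu_n v_n$. There are two cases.

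\emph{Case $\mu_n>0$.} Dividing by $\mu_n$ gives $v_n=\sum_{i<n}(\mu_i/\mu_n)v_i$ with nonnegative coefficients. Axiom \ref{ax:n-SI} then yields $p^n\succ q^n$, as required.

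\emph{Case $\mu_n=0$.} This is the main obstacle. Here $\mathbf 0\in\conv\{v_1,\dots,v_{n-1}\}$, so the premise of Axiom \ref{axiom_disc_mod} says nothing specific about $v_n$, and the conclusion must instead come from showing the premise $p^i\succ q^i$, $i\le n-1$, is itself inconsistent with the axiom.

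Pick an index $j\le n-1$ with $\mu_j>0$. Then
$$-v_j=\sum_{i<n,\,i\neq j}(\mu_i/\mu_j)v_i .$$
Now re-enumerate the data. Place the pairs $(p^i,q^i)$, $i\neq j$, $i\le n-1$, in the first $n-2$ slots. Put $(p^n,q^n)$, or its reverse, in another slot with coefficient $0$. In the last slot put the reversed pair $(q^j,p^j)$, whose difference is $-v_j$.

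Axiom \ref{ax:n-SI} applies to this enumeration, because zero coefficients are allowed. Its hypotheses hold, provided we orient the $n$-th pair according to whichever choice was actually made. It therefore forces $q^j\succ p^j$, contradicting $p^j\succ q^j$.

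So this case cannot arise under the axiom. Note that if $(p^j,q^j)$ were listed in the data only once, the axiom simply fails, which is what we want. The conclusion is that whenever the premise holds, the $\mu_n>0$ case applies.

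\textbf{Conclusion.} Combining the two directions shows that Axiom \ref{ax:n-SI} implies Axiom \ref{axiom_disc_mod}. Proposition \ref{thm_disc_iff_1} then gives the EU representation. The only delicate point is the bookkeeping of enumerations and orientations in the degenerate case, which I would handle exactly as described.
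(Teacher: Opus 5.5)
Your proof is correct, and its skeleton matches the paper's. You reduce to Axiom \ref{axiom_disc_mod} via Proposition \ref{thm_disc_iff_1}. You then convert between a convex combination $\sum_{i<n}\mu_i(p^i-q^i)+\mu_n(q^n-p^n)=\mathbf{0}$ and nonnegative cone coefficients by dividing by $\mu_n$. There are two differences.

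First, you prove necessity directly from the representation rather than through Axiom \ref{axiom_disc_mod}. This is shorter and loses nothing.

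Second, and more substantively, you handle the degenerate case $\mu_n=0$ differently. This is the paper's claim $(\ast)$, that $\mathbf{0}\notin\conv\{p^1-q^1,\dots,p^{n-1}-q^{n-1}\}$. The paper justifies $(\ast)$ by invoking Axiom \ref{axiom_disc_mod} itself. That is legitimate only in the direction Axiom \ref{axiom_disc_mod} $\Rightarrow$ Axiom \ref{ax:n-SI}, and there no case split is needed: a cone representation of $p^n-q^n$ automatically gives $\mathbf{0}$ in the hull with positive weight on $q^n-p^n$. In the direction that actually needs $(\ast)$, Axiom \ref{ax:n-SI} $\Rightarrow$ Axiom \ref{axiom_disc_mod}, the paper's justification assumes the axiom being derived. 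Your re-enumeration derives $(\ast)$ from $n$-strong independence alone: the reversed pair $(q^j,p^j)$ goes last, and the correctly oriented $n$-th pair sits among the premises with coefficient $0$. So your argument correctly fills a step the paper's written proof leaves circular.

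Your aside about the pair being listed only once is unnecessary. The axiom's conclusion concerns the recorded choice in the enumerated entry, so the contradiction with $p^j\succ q^j$ is immediate.
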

\begin{proof}
    By Proposition \ref{thm_disc_iff_1}, it suffices to show that when $p^1 \succ q^1, ..., p^{n-1} \succ q^{n-1}$,   $\textbf{0} \in \conv\{p^1 - q^1, ..., p^{n-1} - q^{n-1}, q^n - p^n\} \iff  p^n - q^n = \sum_{i=1}^{n-1} \lambda_i (p^i - q^i)$ for some $\lambda^i \geq 0$. Note that 
    \begin{align}
        \textbf{0} \not \in \conv\{p^1 - q^1, ..., p^{n-1} - q^{n-1}\}, \tag{$\ast$}
    \end{align}
    because if else, both $\textbf{0}  \in \conv\{p^1 - q^1, ..., p^{n-1} - q^{n-1}, q^n- p^n\}$ and $\textbf{0}  \in \conv\{p^1 - q^1, ..., p^{n-1} - q^{n-1}, p^n- q^n\}$ hold, and Axiom \ref{axiom_disc_mod} then requires $p^n \succ q^n$ and $q^n \succ p^n$.
    
    The desired claim then follows from noting that
    \begin{align}
         \sum_{i=1}^{n-1} \alpha_i (p^i - q^i) + \left( 1- \sum_{i=1}^{n-1}\alpha_i \right) (q^n - p^n)  = \textbf{0} &  \iff p^n - q^n = \sum_{i=1}^{n-1} \frac{\alpha_i}{ 1- \sum_{j=1}^{n-1}\alpha_j} (p^i - q^i), \nonumber  
    \end{align}
    where $1- \sum_{j=1}^{n-1}\alpha_j > 0$ by $(\ast)$.
\end{proof}

\subsection{Proof of the second assertion of Theorem \ref{thm:EU}}

We make use of the following mathematical fact, see, e.g. Theorem I.4.2 of \cite{Barvinok}.
\begin{lemma}[Helly]\label{lem:Helly}
Let $A_1, \ldots, A_n$ be a finite family of convex subsets of $\bR^d$. Suppose that $n>d$ and every $d+1$ of the sets have a common point:
\begin{equation}
    A_{i_1}\cap \cdots \cap A_{i_{d+1}}\neq \emptyset.\label{eq:Helly-1}
\end{equation}
Then all the sets have a common point:
\begin{equation}
A_{1}\cap \cdots \cap A_{n}\neq \emptyset.    \label{eq:Helly-2}
\end{equation}
\end{lemma}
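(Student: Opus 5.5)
We will prove Lemma \ref{lem:Helly} by induction on the number of sets $n$, using Radon's theorem as the main tool.

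\textbf{Radon's theorem.} Any $d+2$ points $x_1,\ldots,x_{d+2}\in\bR^d$ (repetitions allowed) admit a partition of the index set into $I$ and $J$ with
$$\conv\{x_i\}_{i\in I}\cap\conv\{x_j\}_{j\in J}\neq\emptyset.$$
This is the one step with genuine content. To prove it:
\begin{enumerate}
\item Consider the $d+2$ vectors $(x_i,1)\in\bR^{d+1}$. They are linearly dependent, so there are coefficients $\beta_i$, not all zero, with $\sum_i\beta_i x_i=\textbf{0}$ and $\sum_i\beta_i=0$.
\item Put $I=\{i:\beta_i>0\}$ and $J=\{i:\beta_i\le 0\}$, and let $s=\sum_{i\in I}\beta_i$. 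Since the $\beta_i$ sum to zero and are not all zero, $s>0$, and $\sum_{j\in J}(-\beta_j)=s$.
\item The point $x=\sum_{i\in I}(\beta_i/s)x_i$ equals $\sum_{j\in J}(-\beta_j/s)x_j$. It is therefore a convex combination of points on both sides, so it lies in both hulls.
\end{enumerate}
If there are more than $d+2$ points, the same argument applies, since $n\ge d+2$ vectors in $\bR^{d+1}$ are still linearly dependent.

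\textbf{Base case.} Take $n=d+1$. Then the conclusion \eqref{eq:Helly-2} is just the hypothesis \eqref{eq:Helly-1} applied to all the sets.

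\textbf{Inductive step.} Take $n\ge d+2$ and assume the lemma holds for families of $n-1$ sets. Since $n-1\ge d+1$, the induction hypothesis applies: every $d+1$ sets of any $n-1$ subfamily already intersect. Hence, for each $k\in\{1,\ldots,n\}$, the family with $A_k$ removed has a common point $x_k\in\bigcap_{i\neq k}A_i$. These are $n\ge d+2$ points in $\bR^d$. By Radon's theorem, there is a partition $I\cup J=\{1,\ldots,n\}$ and a point $x$ lying in both $\conv\{x_i\}_{i\in I}$ and $\conv\{x_j\}_{j\in J}$.

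Now fix any $k$.
\begin{enumerate}
\item If $k\in I$, then every $j\in J$ satisfies $j\neq k$, so $x_j\in A_k$. Since $A_k$ is convex, $\conv\{x_j\}_{j\in J}\subseteq A_k$, and therefore $x\in A_k$.
\item If $k\in J$, the symmetric argument with the roles of $I$ and $J$ swapped gives $x\in A_k$.
\end{enumerate}
Thus $x\in A_1\cap\cdots\cap A_n$, which is \eqref{eq:Helly-2}.

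The main obstacle is the affine-dependence construction behind Radon's theorem. Beyond that, the only care needed is to check that the induction hypothesis genuinely applies to the subfamilies of size $n-1$, which holds because $n-1\ge d+1$.
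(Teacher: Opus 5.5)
The paper does not prove this lemma; it imports Helly's theorem as a known fact (citing Theorem I.4.2 of \cite{Barvinok}), so there is no in-paper argument to compare against. Your proof is the standard textbook argument and is correct as written: Radon's theorem from the affine dependence of the lifted points $(x_i,1)\in\bR^{d+1}$, then induction on $n$ using points $x_k\in\bigcap_{i\neq k}A_i$. The small details also check out: both parts of the Radon partition are nonempty because the $\beta_i$ sum to zero without all vanishing, repeated points cause no trouble, and the induction hypothesis applies since $n-1\geq d+1$.

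Your argument uses only convexity, with no closedness or compactness. That is exactly what the paper needs, because it applies the lemma to the open half-spaces $H^{(2y_i-1)(p_i-q_i)}$, intersected with the hyperplane of normalized utilities.
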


Recall that $D$ satisfies condition \eqref{eq:intersection1} in Section \ref{sec:checking} iff $D$ is EU-consistent. Notice that condition \eqref{eq:intersection1} can be written as \eqref{eq:Helly-2} with $A_i=H^{(2y_i-1)(p_i-q_i)}$. Letting $d:=|Z|$, by Lemma \ref{lem:Helly} this is equivalent to \eqref{eq:Helly-1}
 for any $i_1, \ldots, i_{d+1}$. By Proposition \ref{thm_nSI} for any subset of $D$ of size $d+1$ condition \eqref{eq:Helly-1} is equivalent to $d+1$-strong independence. The last step is to notice that the sets $H^{p-q}$ can be treated as sets of normalized utilities $\{u\in \bR^Z: \sum_{z\in Z} u(z)=0\}$ and therefore  are isomorphic to subsets of $\bR^{d-1}$, in which case it suffices to check $d$-strong independence.\footnote{One could normalize utilities twice, taking $\{u\in \bR^Z: \sum_{z\in Z} u(z)=0, \sum_{z\in Z} u^2(z)=1\}$ as in  \cite{AhnSarver}, but this second normalization is nonlinear and does not preserve convexity of sets.}

\subsection{Proof of Corollary \ref{cor:random-erev}}

Let $Z$ be the set of all prizes of all the lotteries in the experiment. Because the distribution over prizes is continuous, with probability one all prizes will be distinct. Since all lotteries have distinct supports, all the vectors $d_i:=p_i-q_i$ have distinct supports. Thus, those $n$ vectors are linearly independent, so the premise of Axiom \ref{ax:n-SI} is not satisfied for any enumeration. In other words, the axiom is satisfied (vacuously) for every enumeration, regardless of the actual choices. By Theorem \ref{thm:EU}, the dataset is then EU-consistent with probability one.\qed

\subsection{Proof of Corollary \ref{cor:2}}

Let $D^*=\left((p_i, q_i, \varrho_i)\right)_{i=1}^n$  be the stochastic dataset. For any vector $y\in \{0, 1\}^n$ consider the dataset $D_y$ given by $\left((p_i, q_i, y_i)\right)_{i=1}^n$. By Corollary \ref{cor:random-erev}, $D_y$ is EU-consistent with probability one. Thus, the set of EU-consistent choices is $Y_n=\{0, 1\}^n$, and by Proposition \ref{prop:1}, $D^*$ is REU-consistent for any vector $\varrho$. \qed



\section{Proofs in the Stochastic Case}\label{app:REU}

A size-$n$ dataset is now $\left((p_i, q_i, \varrho_i)\right)_{i=1}^n$. Let $Y_n\subseteq \{0, 1\}^n$ be the set of EU-consistent deterministic choices between those lotteries. That  is, $y\in Y_n$ if there exists $u\in \bR^Z$ such that $\langle u, (2y_i-1)(p_i-q_i)\rangle>0$ for all $i=1, \ldots, n$.

\subsection{Proof of Proposition \ref{prop:1}}

\begin{proof}[Proof of Necessity]
    Suppose there exists some $\mu \in \Delta(\bR^Z)$ such that 
    \begin{align*}
    \varrho_i &= \mu\left(\{ u \in \bR^Z: \langle u, p_i \rangle > \langle u , q_i \rangle \} \right) \text{ for all }i \in \{1,...,n\}\\
    1&= \mu\left(\{ u \in \bR^Z: \langle u , p_i \rangle \neq  \langle u, q_i \rangle \text{ for all } i \in \{1,...,n\} \}\right).        
    \end{align*}
    Define the set
    $$A:=\supp(\mu)\cap \{u\in \bR^Z:  \langle u , p_i \rangle -  \langle u, q_i \rangle \neq 0 \text{ for all }i \in \{1,...,n\}\}.$$
    From  above,  we know that $\mu(A)=1$. Define the mapping $\Psi: A \to Y_n$ given by $\Psi(u) = y$ iff $\langle u,  (2y_i-1)(p_i - q_i)\rangle > 0$ for all $i \in \{1,...,n\}$. Define $\beta(y) := \mu\{u \in \bR^z : \Psi(u) = y\}$, and observe that
    \begin{align*}
        \varrho_i    & = \mu\{u \in \bR^Z : \langle u,  p_i - q_i \rangle >0\} \\
& = \mu \bigg \{u \in \bR^Z : \Psi(u) \in \{ y \in Y_n : y_i = 1\}  \bigg \}   \\
        & = \sum_{y \in Y_n: y_i = 1} \mu\{u \in \bR^Z : \Psi(u) = y\}  \\    
        & = \sum_{y \in Y_n: y_i = 1} \beta(y)  \\    
        &=\sum_{y \in Y_n} \beta(y) y_i,
    \end{align*}
    so $\rho\in \conv(Y_n)$ because $\beta\in\Delta(Y_n)$.
    
\noindent \emph{Proof of Sufficiency.} Suppose there exists some $\beta\in \Delta(Y_n)$ such that  $\varrho_i=\sum_{y \in Y_n} \beta(y) y_i$ for all $i=1, \ldots, n$. For each $y \in Y_n$, pick an arbitrary element $\bar{u}_y\in \Psi^{-1}(y)$ and define $\mu (\cdot) := \sum_{y \in Y_n} \beta(y) \delta_{\bar{u}_y}(\cdot)$, where  $\delta_{\bar{u}_y}(\cdot)$ is the point mass supported on $\{\bar{u}_y\}$. Observe that
\begin{align*}
    \mu \{ u \in \bR^Z: \langle u,  p_i - q_i \rangle >0\} = \sum_{\substack{y \in Y_n: \\  \langle \bar{u}_y,  p_i - q_i \rangle >0}} \beta(y)   = \sum_{y \in Y_n} \beta(y) y_i  = \varrho_i,
\end{align*}
where the second equality follows from the fact that $\langle \bar{u}_y, p_i - q_i \rangle >0 \iff y_i = 1$.
\end{proof}

\subsection{Proof of Necessity in Theorem \ref{thm:REU}}


\subsubsection{Preliminary Results}

In this subsection, we will develop some results for a single enumeration  in $k$-strong independence ($k$-SI). The next subsection will put them  together.  For the remainder of this subsection, we will fix an enumeration such that 
\begin{align*}		 
p^k-q^k&=\sum_{l=1}^{k-1} \lambda^l (p^l-q^l), \text{ for }\lambda_l\geq 0.
\end{align*}
In the dataset $D$, this corresponds to 
$p_i - q_i = \sum_{j \in J} \lambda_j (p_j - q_j)$ for some set $J$ with cardinality $k-1$ and $\lambda_j \in \bR \setminus \{0\}$. The weights $\lambda_j$ are negative when enumeration  flips $p$ and $q$. Define $J^+ \equiv \{j \in J : \lambda_j > 0\}$ and $J^- \equiv \{j \in J: \lambda_j < 0\}$.

For this particular enumeration, we will characterize the types of points $R$ that $k$-SI removes from $\{0,1\}^n$, and illustrate that removing $R$ from $\{0,1\}^n$ corresponds to imposing two linear constraints. 

\begin{definition}\label{def:R_m}
    Let $R$ be the deterministic choices that the application of Axiom \ref{ax:n-SI}  to the enumeration above removes from $\{0,1\}^n$. Let $G:=\{0,1\}^n\setminus  R$. 
\end{definition}
 
\begin{lemma} \label{lemma_prelim_SI1}
    $R  = R_1 \cup R_2$, where
    \begin{align*}
        R_1&:=\bigg\{ y \in \{0,1\}^n :  y_i = 0, \,\, y_j = \1_{ \{j \in J^+\} }  \,\, \forall j \in J \bigg\}\\
        R_2&:=\bigg\{ y \in \{0,1\}^n :  y_i = 1, \,\, y_j = \1_{ \{j \in J^-\} } \,\, \forall j \in J \bigg\}
    \end{align*}
\end{lemma}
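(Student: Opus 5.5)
We need to identify which choice vectors $y\in\{0,1\}^n$ the single application of Axiom \ref{ax:n-SI} (to the fixed enumeration) rules out. The plan is to translate the enumeration back into the indices of $D$ and read off exactly when the premise holds and the conclusion fails.

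\textbf{Setup.} In the enumeration, the $l$-th entry $(p^l,q^l)$ with $l\le k-1$ corresponds to some $j\in J$, either as $(p_j,q_j)$ when $\lambda_j>0$, or as the flipped pair $(q_j,p_j)$ when $\lambda_j<0$. The last entry $(p^k,q^k)$ corresponds to index $i$. Because the enumeration is fixed, its orientation is fixed, so the last entry is either $(p_i,q_i)$ or $(q_i,p_i)$. I would handle the second case by replacing every $\lambda_j$ with $-\lambda_j$, i.e.\ by noting that $q_i-p_i=\sum_j(-\lambda_j)(p_j-q_j)$. This means the single enumeration is really the pair of enumerations $\pm$. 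I therefore read the lemma as describing the removal produced by the axiom applied to this enumeration and its sign-flipped twin, which is what produces the two pieces $R_1$ and $R_2$. I would state this convention explicitly at the start.

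\textbf{Case 1: the last entry is $(p_i,q_i)$.} The premise $p^l\succ q^l$ says, in terms of $y$:
\begin{itemize}
\item $y_j=1$ when $\lambda_j>0$, since then $p_j\succ q_j$;
\item $y_j=0$ when $\lambda_j<0$, since then $q_j\succ p_j$.
\end{itemize}
Together this is $y_j=\1_{\{j\in J^+\}}$ for all $j\in J$. The conclusion $p_i\succ q_i$ is $y_i=1$. So the axiom removes exactly the vectors that satisfy the premise pattern and have $y_i=0$, which is $R_1$. Indices outside $J\cup\{i\}$ are unconstrained, consistent with the definition of $R_1$.

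\textbf{Case 2: the flipped twin.} Here the relation is $q_i-p_i=\sum_j(-\lambda_j)(p_j-q_j)$, so the roles of $J^+$ and $J^-$ swap. The premise becomes $y_j=\1_{\{j\in J^-\}}$, the conclusion becomes $y_i=0$, and the removed vectors are those with $y_i=1$, which is $R_2$.

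\textbf{Conclusion and main obstacle.} Combining the two cases gives $R=R_1\cup R_2$. There is no real mathematical difficulty, since everything is a direct unpacking of the axiom. The main point needing care is the bookkeeping of orientations: showing that a negative $\lambda_j$ corresponds precisely to a flipped pair, and justifying that $R$ collects the violations from both orientations of the conclusion pair. A secondary check is that $i\notin J$, which holds because the enumeration lists distinct entries of $D$.
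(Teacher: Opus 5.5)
Your proposal is correct and follows essentially the same route as the paper. You rewrite $p_i-q_i$ with nonnegative coefficients by flipping the pairs in $J^-$, apply strong independence to that relation to remove $R_1$, and apply it to the negated relation to remove $R_2$. Your explicit convention that $R$ collects the removals from both orientations is the one the paper uses implicitly when it states that ``$n$-SI then implies'' both implications.
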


\begin{proof}
       Observe that we can write
       \begin{align*}
           p_i - q_i = \sum_{j \in J} \lambda_j (p_j - q_j) = \sum_{j \in J^+} \lambda_j (p_j - q_j) + \sum_{j \in J^-} |\lambda_j| (q_j - p_j),
       \end{align*}
       such that all coefficients are positive.  $n$-SI then implies
       \begin{align*}
           p_j \succ q_j \text{ for all } j \in J^+ \,\,\, \text{and} \,\,\,  p_j \prec q_j \text{ for all } j \in J^- & \spimp p_i \succ q_i \\
            p_j \prec q_j \text{ for all } j \in J^+ \,\,\, \text{and} \,\,\,   p_j \succ q_j \text{ for all } j \in J^- &  \spimp p_i \prec q_i.
       \end{align*}
       The first scenario rules out any $y \in \{0,1\}^n$ satisfying $y_j = \1_{ j \in J^+ }$ for all $j \in J$ and $y_i = 0$. The second scenario rules out any $y \in \{0,1\}^n$ satisfying $y_j = \1_{ j \in J^-}$ for all $j \in J$ and $y_i = 1$. 
    \end{proof}

\begin{lemma} \label{lemma_prelim_SI2}
    \begin{align}
     G \subseteq \left\{ y \in \{0,1\}^n :  -|J^-| \leq \sum_{j \in J^+} y_j - \sum_{j \in J^-} y_j - y_i \leq |J^+| - 1   \right\}. \nonumber
    \end{align}
\end{lemma}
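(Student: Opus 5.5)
The claim concerns only the coordinates in $J\cup\{i\}$, so I would work with the integer-valued function
$$S(y):=\sum_{j \in J^+} y_j - \sum_{j \in J^-} y_j - y_i$$
and show it cannot leave the interval $[-|J^-|,\,|J^+|-1]$ at a point of $G$. By Lemma \ref{lemma_prelim_SI1}, $G=\{0,1\}^n\setminus(R_1\cup R_2)$, so it suffices to show that every $y\in\{0,1\}^n$ violating one of the two inequalities lies in $R_1\cup R_2$.

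First I would record the trivial range of $S$. The three terms satisfy
$$0\le \sum_{j\in J^+}y_j\le |J^+|,\qquad -|J^-|\le -\sum_{j\in J^-}y_j\le 0,\qquad -1\le -y_i\le 0.$$
Hence $S(y)\in[-|J^-|-1,\;|J^+|]$ for every binary $y$. A violation of the upper bound therefore means $S(y)=|J^+|$, and a violation of the lower bound means $S(y)=-|J^-|-1$.

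Second, I would identify the extreme cases. The value $S(y)=|J^+|$ is attained exactly when every term sits at its maximum: $y_j=1$ for $j\in J^+$, $y_j=0$ for $j\in J^-$, and $y_i=0$. That is precisely the condition $y_j=\1_{\{j\in J^+\}}$ for all $j\in J$ together with $y_i=0$, i.e. $y\in R_1$. Symmetrically, $S(y)=-|J^-|-1$ forces $y_j=0$ on $J^+$, $y_j=1$ on $J^-$, and $y_i=1$. Equivalently $y_j=\1_{\{j\in J^-\}}$ and $y_i=1$, i.e. $y\in R_2$.

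Therefore any $y\in G$ satisfies $-|J^-|\le S(y)\le |J^+|-1$, which proves the inclusion. The argument needs the index $i$ to be distinct from every element of $J$; this holds because the enumeration uses $k$ distinct pairs of the dataset. There is no real obstacle here. The only care needed is keeping the signs straight when matching the extreme configurations to the definitions of $R_1$ and $R_2$.
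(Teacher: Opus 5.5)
Your proof is correct and is essentially the paper's argument. The paper also argues by contradiction that exceeding the upper bound forces $y_j=1$ on $J^+$, $y_j=0$ on $J^-$ and $y_i=0$ (a point of $R_1$), while falling below the lower bound forces $y_j=0$ on $J^+$, $y_j=1$ on $J^-$ and $y_i=1$ (a point of $R_2$); your explicit range $S(y)\in[-|J^-|-1,\,|J^+|]$ just makes the paper's ``can only happen if'' step precise, and you correctly have $y_i=1$ where the paper's text has the typo ``$y_j=1$''.
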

\begin{proof}
Suppose $y \in G$, and assume, for a contradiction, that $y \not \in$ RHS. This implies that either
\begin{align}
    \sum_{j \in J^+} y_j - \sum_{j \in J^-} y_j - y_i \leq   -|J^-|-1 \hspace{10pt} \text{or} \hspace{10pt} \sum_{j \in J^+} y_j - \sum_{j \in J^-} y_j - y_i  \geq |J^+| \nonumber
\end{align}

The former scenario can only happen if $y_j = 0$ for all $j \in J^+$, $y_j = 1$ for all $j \in J^-$, and $y_j = 1$. This is a contradiction, since all such points are in $R_2$.  The latter scenario can only happen if $y_j = 1$ for all $j \in J^+$, $y_j = 0$ for all $j \in J^-$ and $y_i = 0$. This is a contradiction, since all such points are in $R_1$. This proves $G \subseteq RHS$, as desired. 
\end{proof}
 
\begin{lemma} \label{lemma_convex_closure}
\begin{equation*}
        \conv(G) \subseteq \left\{ \phi \in [0,1]^n :- | J^-| \leq \sum_{j \in J^+} \phi_j - \sum_{j \in J^-} \phi_j - \phi_{i} \leq |J^+| - 1      \right\}
\end{equation*}
\end{lemma}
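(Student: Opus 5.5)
The inclusion follows from Lemma \ref{lemma_prelim_SI2} together with the elementary fact that a set cut out by weak linear inequalities is convex. Write
\[
S:=\left\{ \phi \in [0,1]^n :- | J^-| \leq \sum_{j \in J^+} \phi_j - \sum_{j \in J^-} \phi_j - \phi_{i} \leq |J^+| - 1 \right\}.
\]
Let $\ell(\phi):=\sum_{j \in J^+} \phi_j - \sum_{j \in J^-} \phi_j - \phi_{i}$, which is a linear functional on $\bR^n$. Then $S$ is the intersection of three closed half-spaces or slabs:
\begin{itemize}
\item the cube $[0,1]^n$, which is itself an intersection of half-spaces;
\item the half-space $\{\ell \geq -|J^-|\}$;
\item the half-space $\{\ell \leq |J^+|-1\}$.
\end{itemize}
Each of these sets is convex, so $S$ is convex.

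Next, $G \subseteq S$. Every $y\in G$ lies in $\{0,1\}^n\subseteq[0,1]^n$. By Lemma \ref{lemma_prelim_SI2}, every such $y$ also satisfies both inequalities on $\ell$.

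Since $\conv(G)$ is the smallest convex set containing $G$, and $S$ is a convex set containing $G$, we get $\conv(G)\subseteq S$.

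To spell this out directly, take $\phi=\sum_k \beta_k y^k$ with $y^k\in G$, $\beta_k\ge 0$ and $\sum_k\beta_k=1$. Then
\[
\ell(\phi)=\sum_k\beta_k\,\ell(y^k),
\]
and each $\ell(y^k)$ lies in $[-|J^-|,\,|J^+|-1]$, so their weighted average does too. Likewise, each coordinate of $\phi$ is a weighted average of values in $\{0,1\}$, so it lies in $[0,1]$.

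There is no real obstacle here, since all the substantive work was done in Lemmas \ref{lemma_prelim_SI1} and \ref{lemma_prelim_SI2}.
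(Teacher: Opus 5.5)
Your proof is correct and follows the paper's argument: you use Lemma \ref{lemma_prelim_SI2} to show every $y\in G$ satisfies the two linear inequalities, and then note that convex combinations preserve those inequalities and stay in $[0,1]^n$. You spell out the box constraint and the convexity of the constraint set more explicitly than the paper does, but the route is the same.
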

\begin{proof}   
    Every $y \in G$ satisfies the constraint  $-|J^-| \leq \sum_{j \in J^+} y_j - \sum_{j \in J^-} y_j - y_i \leq |J^+| - 1$ , and the convex combination of any such $y$ continues to satisfy the constraint.
\end{proof}

It can be proved that the set inclusion in Lemma \ref{lemma_convex_closure} is actually an equality. 
 We show this in Section \ref{app:proofsplus} of the Online Appendix.  However, as we show later, this does not help us prove sufficiency of $n$-triangulation.
 



\subsubsection{The actual proof of necessity}

If a given size $n$-dataset $D$ is REU-consistent, Proposition \ref{prop:1} implies that $\varrho \in \conv(Y_n)$. The premise of $k$-triangulation for every $k \in \{2,...,n\}$ is satisfied a finite number of times; call this number $M$. For every $m \in \{1,...,M\}$, let $i_m$  and $J_m$ denote the corresponding index and index set, such that $p_{i_m} - q_{i_m} = \sum_{j \in J_m} \lambda_j(p_j - q_j)$ for some $\lambda \in (\bR\setminus \{0\})^{ J_m} $. For each $m$, define the set $G_m$ as in Definition \ref{def:R_m}. Because $n$-SI implies $k$-SI for $k\leq n$, by Proposition \ref{thm_nSI} we have that
\begin{align*} 
    Y_n & = \bigcap_{m =1}^M G_m.
\end{align*}

Recall  that for sets $A,B$ in Euclidean space,   $\conv(A \cap B)\subseteq \conv(A) \cap \conv(B)$. Thus, if $D$ is REU-consistent, then  $\varrho \in \conv(Y_n)$, and thus 
$$\varrho\in \conv(G_m)$$
 for all $m=1, \ldots, M$.

By Lemma \ref{lemma_convex_closure}, this implies that for each $m$, 
   \begin{align*}
         - | J_m^-| \leq \sum_{j \in J^+_m} \varrho_j - \sum_{j \in J^-_m} \varrho_j - \varrho_{i_m} & \leq |J_m^+| - 1  \\
        \shortintertext{iff}
          - | J_m^-| \leq \sum_{j \in J^+_m} \varrho(p_j, q_j) - \sum_{j \in J^-_m} \varrho(p_j, q_j) - \varrho(p_{i_m}, q_{i_m} ) & \leq |J_m^+| - 1  \\
         \shortintertext{iff}
         0 \leq \sum_{j \in J^+_m} \varrho(p_j, q_j)  + \sum_{j \in J^-_m}    \varrho(q_j, p_j) - \varrho(p_{i_m}, q_{i_m} ) &\leq |J_m| - 1\\
                 \shortintertext{iff}
          \varrho(p_{i_m}, q_{i_m} ) \leq \sum_{j \in J^+_m} \varrho(p_j, q_j)  + \sum_{j \in J^-_m}    \varrho(q_j, p_j) & \leq  \varrho(p_{i_m}, q_{i_m} ) + |J_m| - 1,
                 \shortintertext{iff}
         \sum_{j \in J^+_m} \varrho(p_j, q_j)  + \sum_{j \in J^-_m}    \varrho(q_j, p_j)  - (|J_m| -1)   \leq  \varrho(p_{i_m}, q_{i_m}) & \leq \sum_{j \in J^+_m} \varrho(p_j, q_j)  + \sum_{j \in J^-_m}    \varrho(q_j, p_j),  
   \end{align*}
   where the second step follows by adding $|J_m^-|$ to each side of the inequality and substituting $\varrho(q_j, p_j)$ for $1-\varrho(p_j, q_j)$. The final relation is exactly what $|J_m|+1$-triangulation requires in the case where $p_{i_m}-q_{i_m} = \sum_{j\in J_m^+} \lambda_j (p_j -q_j) + \sum_{j\in J_m^-} |\lambda_j| (q_j -p_j)$, so we have the desired result.

\subsection{Proof of Sufficiency in Theorem \ref{thm:REU} ($n \leq 4$)}




\subsubsection{Equivalent datasets}

Let $D$ be a dataset of size $n$. For any $k$-enumeration of $D$, we can construct a new dataset $\hat D= \{ (p^1, q^1, \varrho^1), \ldots, (p^k, q^k, \varrho^k)  \}$. We say that the enumeration is \textit{exact} if  $k=n$ and every pair of lotteries is featured the same number of times in $D$ and $\hat D$ (possibly flipping the order). 

If the enumeration is exact, we say that sets $D$ and $\hat D$ are \textit{equivalent}. Any two datasets $D$ and $D'$ are equivalent if there is a exact enumeration such that $D'=\hat D$. Note that if two sets are equivalent, then they are both either REU-consistent or REU-inconsistent (this follows because ties have $\mu$-zero probability).

\subsubsection{Proof of Sufficiency for $n=2$}\label{sec:suff-2}

There are two cases to consider:

\paragraph{Case 1} The premise of 2-SI does not hold for any enumeration of the dataset. Then by Theorem \ref{thm:EU}, $Y_2 = \{0,1\}^2$ and $\varrho \in \conv\{Y_2\} = [0,1]^2$, as desired.

\paragraph{Case 2} The premise of 2-SI holds for some enumeration, such that $\hat p^2 - \hat q^2 = \lambda (\hat p^1 - \hat q^1)$ with $\lambda > 0$. Notice that then $\hat p^2 - \hat q^2$ cannot be a positive multiple of $\hat q^1 - \hat p^1$. Consider the equivalent dataset $\hat D:= \{ (\hat p^1, \hat q^1, \hat \varrho^1) , (\hat p^2, \hat q^2, \hat \varrho^2) \}$. The associated set $\hat Y_2$ equals $\{(0, 0),  (1,1)\}$, and  2-triangulation  requires $\hat \varrho^1 = \hat \varrho^2$. Since $\conv(\hat Y_2) = \{\hat \varrho\in [0,1]^2 : \hat \varrho^1 = \hat \varrho^2 \}$, it follows that $\hat \varrho \in \conv(\hat Y_2)$, so by Proposition \ref{prop:1} the set $\hat D$ is REU-consistent. Since $D$ and $\hat D$ are equivalent, the dataset $D$ is also REU-consistent.\\

For completeness, we should check what happens when the premise of 2-SI holds for some enumeration, such that $\hat p^2 - \hat q^2 = \lambda (\hat p^1 - \hat q^1)$ with $\lambda = 0$. But this implies that $\hat p^2=\hat q^2$, which is ruled out by the requirement that $p_i\neq q_i$ for all $i=1, \ldots, n$. Thus, the two above cases are exhaustive, completing the proof.\qed

\subsubsection{Facts about cones}\label{sec:cones}

To deal with bigger data sets, it is helpful to first develop some preliminary results. Given the dataset $\{(p_i, q_i, \varrho_i)\}_{i=1}^n$, define $d_i \equiv p_i - q_i$.  Let $\cones\{\cdot\}$ refer to the strict conic inclusion relation, such that $v \in \cones\{v', v''\}$ implies $v = \alpha v' + \beta v''$ with $\alpha, \beta > 0$.

\begin{lemma} \label{lemma_signs}
    For any $n \geq 2$ and $d_1,...,d_n \in \bR^m$, if  $d_n \in \cones\{ d_{i} \}_{i=1}^{n-1}$ and $\rank(d_1,...,d_n) = n-1$, then the only conic inclusions of the form $\pm d_i \in \cones\{\pm d_j : j \neq i \}$ are exactly the following $2n$ inclusions:
    \begin{align*}
        d_n \in \cones\{d_1,...,d_{n-1}\}, & \hspace{20pt} -d_n \in \cones\{-d_1,...,-d_{n-1}\} \\
        d_i \in \cones\{ (-d_j)_{j \neq n, i}, d_n \}, & \hspace{20pt } -d_i \in \cones\{(d_j)_{j \neq n, i}, -d_n \}.
    \end{align*}
\end{lemma}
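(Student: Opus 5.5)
The plan is to reduce everything to the uniqueness of the linear dependence among $d_1,\dots,d_n$. Since $\rank(d_1,\dots,d_n)=n-1$, the space of vectors $c\in\bR^n$ with $\sum_{i} c_i d_i=0$ is one-dimensional. The hypothesis $d_n=\sum_{i=1}^{n-1}\alpha_i d_i$ with all $\alpha_i>0$ gives one nonzero relation, $c^*:=(\alpha_1,\dots,\alpha_{n-1},-1)$. So every linear relation among the $d_i$ is a scalar multiple $t c^*$, and in particular every relation has full support with sign pattern $(+,\dots,+,-)$ or $(-,\dots,-,+)$.

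First I will show that each of the $2n$ displayed inclusions holds. The first two are the hypothesis and its negation. For the other two, fix $i\neq n$ and solve $c^*$ for $d_i$:
$$d_i=\tfrac{1}{\alpha_i}d_n+\sum_{j\neq n,i}\tfrac{\alpha_j}{\alpha_i}(-d_j).$$
All coefficients are strictly positive, so $d_i\in\cones\{(-d_j)_{j\neq n,i},d_n\}$. Negating both sides gives the fourth family.

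Next I will show there are no other inclusions. Suppose $s_i d_i\in\cones\{s_j d_j:j\neq i\}$ for some signs $s_1,\dots,s_n\in\{\pm1\}$. Writing $s_i d_i=\sum_{j\neq i}\beta_j s_j d_j$ with $\beta_j>0$ yields a relation $c$ whose $i$-th coordinate is $-s_i$ and whose $j$-th coordinate is $\beta_j s_j$ for $j\neq i$. By uniqueness, $c=tc^*$ for some $t\neq0$. Reading off signs, $\operatorname{sign}(c_j)=s_j$ for $j\neq i$ and $\operatorname{sign}(c_i)=-s_i$. The sign vector of $c$ must be $\pm(+,\dots,+,-)$, so the signs $s$ are determined by $i$ and the sign of $t$.

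The case split is then as follows:
\begin{itemize}
\item If $i=n$, then either $s_j=+$ for all $j<n$ and $s_n=+$, or everything is flipped. These give the first two inclusions.
\item If $i<n$, then $s_j=-$ for $j\neq i,n$, $s_n=+$ and $s_i=+$, or everything is flipped. These give the other two families.
\end{itemize}
Since the signs are forced, the inclusion is unique up to the choice of $t$'s sign. That gives exactly two inclusions for each $i$, hence $2n$ in total.

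The main obstacle is a subtlety about whether $\cones$ of a set requires \emph{all} listed generators to have strictly positive weight, as the definition with $\alpha,\beta>0$ suggests. I will adopt that reading, which is what makes the support argument work. I will also note that the rank hypothesis is what rules out relations with smaller support: such a relation would be a second, independent relation, which cannot exist. The rank hypothesis further guarantees $n\ge2$ and that no $d_j$ is zero, since any zero $d_j$ would give a relation supported on a single coordinate, contradicting full support.
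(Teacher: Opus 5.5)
Your proof is correct and follows essentially the same route as the paper. Both arguments use the rank hypothesis to conclude that $(\alpha_1,\dots,\alpha_{n-1},-1)$ spans the space of linear relations. Both then match the sign pattern of the relation induced by a putative inclusion $\pm d_i\in\cones\{\pm d_j : j\neq i\}$ against $t(\alpha_1,\dots,\alpha_{n-1},-1)$, splitting on $i=n$ versus $i<n$ and on the sign of $t$. Your explicit check that all $2n$ inclusions actually hold, by solving the relation for $d_i$, only spells out what the paper leaves implicit in its ``if $t>0$ / if $t<0$'' step.
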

\begin{proof}
    Since $d_n \in \cones\{d_i\}_{i=1}^{n-1}$, there exists $\alpha_i > 0$ such that
    \begin{align*}
        \sum_{i=1}^{n-1} \alpha_i d_i - d_n = 0.
    \end{align*}
    Since $\rank(d_1,...,d_n) = n-1$, we know that $( \alpha_1, ..., \alpha_{n-1}, -1)$ is the unique set of coefficients  (up to a scalar multiple) for which the above relation holds. For $\epsilon_i \in \{-1,1\}$, suppose that we have
    \begin{align*}
        \epsilon_i d_i \in \cones\{ \epsilon_j d_j\}_{j \neq i} \spiff (- \epsilon_i )d_i + \sum_{j \neq i} ( \epsilon_j \gamma_j ) d_j = 0
    \end{align*}
    for $\gamma_j > 0$. If $i = n$, the above result implies that
    \begin{align*}
        (   \epsilon_1 \gamma_1,  \epsilon_2 \gamma_2 , ...,  - \epsilon_n) = t (\alpha_1, ..., \alpha_{n-1}, -1)
    \end{align*}
    for some $t \neq 0$. Equivalently, the two  conic inclusions we have are  $d_n \in \cones\{d_1,...,d_{n-1}\}$  (if $t>0$) and $-d_n \in \cones \{-d_1, ..., -d_{n-1}\}$ (if $t < 0)$.

    If $i < n$ (take $i = n-1$ without loss), the above result implies that
    \begin{align*}
         (\epsilon_1 \gamma_1, ..., \epsilon_{n-2} \gamma_{n-2}, - \epsilon_{n-1}  ,\epsilon_n \gamma_n  ) = t (\alpha_1, ..., \alpha_{n-1}, -1)
    \end{align*}
    for some $t \neq 0$. Equivalently, the two conic inclusions we have are $d_{n-1} \in \cones\{(-d_j)_{j \leq n-2},  d_n \}$ (if $t <0)$ and  $-d_{n-1} \in \cones\{(d_j)_{j \leq n-2}, - d_n \}$ (if $t > 0$), as desired. 
\end{proof}

Here is a corollary that will be useful in the $n=3$ case.

\begin{corollary} \label{cor_3SI}
    If $\rank(d_1,d_2,d_3) = 2$ and the premise of 2-SI does not hold, then 
    \begin{align*}
        (1,1,0) \not \in Y_{3} \spimp Y_{3} = \{0,1\}^3 \setminus \{  (1,1,0) ,  (0,0,1) \}.
    \end{align*}
\end{corollary}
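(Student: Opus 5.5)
The plan is to combine Lemma \ref{lemma_signs} (with $n=3$) and Lemma \ref{lemma_prelim_SI1}. Since $(1,1,0)\notin Y_3$, Theorem \ref{thm:EU} tells us that some application of $k$-strong independence, $k\le 3$, removes $(1,1,0)$. The premise of 2-SI fails by hypothesis, so it must be 3-SI. Concretely, there is an enumeration in which $\epsilon_i d_i \in \cones\{\epsilon_j d_j\}_{j\neq i}$ for some signs $\epsilon\in\{-1,1\}^3$.

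The inclusion must be strict conic, with both coefficients positive. If one coefficient were zero, we would get $\pm d_i = \lambda(\pm d_j)$ with $\lambda>0$, which is exactly a premise of 2-SI, and that has been excluded. Since $\rank(d_1,d_2,d_3)=2$, any strict conic inclusion $d_a\in\cones\{\pm d_b,\pm d_c\}$ can be relabeled as in Lemma \ref{lemma_signs}. That lemma then says the six possible strict conic inclusions all come from a single linear relation $\sum_i \alpha_i \sigma_i d_i = 0$ with $\alpha_i>0$ and a fixed sign pattern $\sigma$ (up to a global flip). The key observation is that all six inclusions are really one relation rewritten.

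Next I apply Lemma \ref{lemma_prelim_SI1} to any one of these inclusions. It removes exactly two points of $\{0,1\}^3$, namely $y$ with $y_i = \1\{\sigma_i=+1\}$ and its complement $\mathbf 1 - y$. To see this, rewrite the relation as $d_n\in\cones\{d_1,d_2\}$ after flipping signs. Then $R_1$ and $R_2$ are the points where $(d_1,d_2)$ are chosen in the "positive" direction and $d_n$ in the opposite one, and vice versa; these form a complementary pair. The other five inclusions are the same relation, so each of them removes the same pair. By the necessity of $k$-SI and the sufficiency part of Theorem \ref{thm:EU} (Proposition \ref{thm_nSI}), we get $Y_3=\{0,1\}^3\setminus\{y^*,\mathbf 1-y^*\}$. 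This holds whenever the rank is $2$, the 2-SI premise fails, and some 3-SI premise holds; if no 3-SI premise held, then $Y_3=\{0,1\}^3$.

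The hypothesis $(1,1,0)\notin Y_3$ forces a removed pair to exist and to contain $(1,1,0)$. Hence $\{y^*,\mathbf 1-y^*\}=\{(1,1,0),(0,0,1)\}$, which gives the claim.

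I expect the main obstacle to be the bookkeeping showing that all inclusions correspond to the same complementary pair. This requires carefully tracking how Lemma \ref{lemma_prelim_SI1}'s $J^+$ and $J^-$ transform under the relabelings in Lemma \ref{lemma_signs}. I would handle it by verifying the single canonical case $d_3\in\cones\{d_1,d_2\}$, which removes $(1,1,0)$ and $(0,0,1)$, and then arguing by sign symmetry.
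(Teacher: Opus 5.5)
Your proposal is correct and follows essentially the same route as the paper: the failure of the 2-SI premise forces the removal of $(1,1,0)$ to come from a strict conic inclusion $d_3\in\cones\{d_1,d_2\}$. Lemma \ref{lemma_signs} then shows that this is the only linear relation, so every 3-SI premise removes only $(1,1,0)$ or $(0,0,1)$. Your repackaging of the paper's six-case list as a single complementary pair $\{y^*,\mathbf 1-y^*\}$, obtained via Lemma \ref{lemma_prelim_SI1}, is a tidier way of doing the same bookkeeping.
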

\begin{proof}
   Since the premise of 2-SI does not hold,   $(1,1,0)$ is removed from $\{0,1\}^3$ by 3-SI with strictly positive coefficients, i.e., $d_3 \in \cones\{d_1,d_2\}$. Lemma \ref{lemma_signs} and  3-SI then implies
\begin{align*} 
d_1 \in \cones\{-d_2, d_3\} \spimp (0,0,1) \not \in Y_3, \hspace{20pt} - d_1 \in \cones\{d_2, -d_3\} & \spimp (1,1,0) \not \in Y_3, \\ 
d_2 \in \cones\{-d_1, d_3\} \spimp (0,0,1) \not \in Y_3, \hspace{20pt} -d_2 \in \cones\{d_1, -d_3\} & \spimp (1,1,0) \not \in Y_3, \\ 
\hspace{20pt} - d_3 \in \cones\{-d_1, -d_2\} & \spimp (0,0,1) \not \in Y_3. 
\end{align*}    
    Since no other conic inclusion relations hold, no other instances of the premise of Strong Independence hold, in which case we can conclude that $Y_3 = \{0,1\}^3 \setminus \{  (1,1,0) ,  (0,0,1) \}$, as desired.
\end{proof}

Likewise, here is a corollary that will be useful in the $n=4$ case.

\begin{corollary}\label{cor_4SI}
    If $\rank(d_1,d_2,d_3,d_4) = 3$ and the premise of $2,3$-Strong Independence does not hold, then
    \begin{align*}
        (1,1,1,0) \not \in Y_4 \spimp Y_4 = \{0,1\}^4 \setminus \{ (1,1,1,0), (0,0,0,1) \}. 
    \end{align*}
\end{corollary}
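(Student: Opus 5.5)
The plan mirrors the proof of Corollary \ref{cor_3SI}, with Lemma \ref{lemma_signs} (for $n=4$) in the role of the case analysis.

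\emph{Step 1: identify the single linear relation.} Since $(1,1,1,0)\notin Y_4$ while the premise of $2$- and $3$-Strong Independence fails for every enumeration, the removal of $(1,1,1,0)$ must come from an application of $4$-SI. Such an application uses all four differences, so by Lemma \ref{lemma_prelim_SI1} some relation $\pm d_i=\sum_{j\neq i}\lambda_j(\pm d_j)$ holds with nonnegative coefficients that are compatible with the choice vector $(1,1,1,0)$. All of the $\lambda_j$ must be strictly positive. If one vanished, the relation would involve only three vectors and would be an instance of the premise of $3$-SI (or $2$-SI), which is excluded.

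Write the resulting conic inclusion in the form $\epsilon_i d_i\in\cones\{\epsilon_j d_j\}_{j\neq i}$ with signs $\epsilon$. We need $(1,1,1,0)$ to violate it, so $\epsilon_j=2y_j-1$ for $j\neq i$ and $y_i$ is opposite to $\epsilon_i$. A short check of the four possible positions of $i$ shows that, after the reformulations in Lemma \ref{lemma_signs}, this is equivalent to $d_4\in\cones\{d_1,d_2,d_3\}$. For example, if $i=4$ the inclusion reads $d_4\in\cones\{d_1,d_2,d_3\}$ directly. If $i=1$ it reads $-d_1\in\cones\{d_2,d_3,-d_4\}$, which rearranges to the same statement.

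\emph{Step 2: enumerate all conic inclusions.} Together with $\rank=3$, Step 1 puts us exactly in the setting of Lemma \ref{lemma_signs} with $n=4$. That lemma says the only conic inclusions $\pm d_i\in\cones\{\pm d_j\}_{j\ne i}$ are the eight listed ones:
\[
d_4\in\cones\{d_1,d_2,d_3\},\qquad -d_4\in\cones\{-d_1,-d_2,-d_3\},
\]
and, for $i\le 3$,
\[
d_i\in\cones\{(-d_j)_{j\ne 4,i},\,d_4\},\qquad -d_i\in\cones\{(d_j)_{j\ne 4,i},\,-d_4\}.
\]
Apply $4$-SI to each one through Lemma \ref{lemma_prelim_SI1} and compute which vector it removes.
\begin{itemize}
\item $d_4\in\cones\{d_1,d_2,d_3\}$ removes $(1,1,1,0)$, and $-d_4\in\cones\{-d_1,-d_2,-d_3\}$ removes $(0,0,0,1)$.
\item $d_i\in\cones\{(-d_j)_{j\neq 4,i},d_4\}$ removes the vector with $y_j=0$ for $j\ne 4,i$, $y_4=1$ and $y_i=0$, which is $(0,0,0,1)$.
\item $-d_i\in\cones\{(d_j)_{j\ne 4,i},-d_4\}$ removes the vector with $y_j=1$ for $j\neq 4,i$, $y_4=0$ and $y_i=1$, which is $(1,1,1,0)$.
\end{itemize}
So every applicable instance removes either $(1,1,1,0)$ or $(0,0,0,1)$, and both are in fact removed.

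\emph{Step 3: conclude.} No relation of $2$- or $3$-SI type holds, and relations with some zero coefficient reduce to those. Hence the eight inclusions above exhaust all premises of $k$-SI for every $k\le4$ over all enumerations. By Theorem \ref{thm:EU} (Proposition \ref{thm_nSI}), $Y_4$ is $\{0,1\}^4$ minus exactly the removed points, which gives $Y_4=\{0,1\}^4\setminus\{(1,1,1,0),(0,0,0,1)\}$.

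The main obstacle is Step 1: arguing carefully that removal of $(1,1,1,0)$ forces the strict relation $d_4\in\cones\{d_1,d_2,d_3\}$ rather than some other sign pattern. This comes down to the sign bookkeeping of Lemma \ref{lemma_prelim_SI1}, where any pattern that eliminates $(1,1,1,0)$ is one of the two families containing $d_4\in\cones\{d_1,d_2,d_3\}$ or its equivalent rewritings.
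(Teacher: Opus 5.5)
Your proposal is correct and takes essentially the same route as the paper. You use the absence of $2$- and $3$-SI premises to force the strict inclusion $d_4\in\cones\{d_1,d_2,d_3\}$, apply Lemma \ref{lemma_signs} to list all eight conic inclusions, and check that each one removes only $(1,1,1,0)$ or $(0,0,0,1)$; your explicit sign check in Step 1 simply spells out what the paper asserts in one line.
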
 

\begin{proof}
    Since the premise of 2,3-SI does not hold, $(1,1,1,0)$ is removed from $\{0,1\}^4$ by 4-SI with strictly positive coefficients, i.e., $d_4 \in \cones\{d_1, d_2,d_3\}$. Lemma \ref{lemma_signs} and 4-SI then implies
    \begin{align*} 
        d_1 \in \cones\{ -d_2, -d_3, d_4\} &\spimp (0,0,0,1) \not \in Y_4, \\ 
        -d_1 \in \cones\{ d_2, d_3, -d_4\} &\spimp (1,1,1,0) \not \in Y_4, 
    \end{align*}    
    and likewise by symmetry for $d_2,d_3$ as well. Since no other conic inclusion relations hold, no other instances of the premise of Strong Independence hold, in which case we can conclude that $Y_4 = \{0,1\}^4 \setminus \{  (1,1,1,0), (0,0,0,1) \}$, as desired.
\end{proof}

\subsubsection{Proof of Sufficiency for  $n=3$}\label{sec:suff-3}
There are three cases to consider:

\paragraph{Case 1} The premise of 3-SI does not hold for any enumeration of the dataset. Then, by Theorem \ref{thm:EU}, $Y_3 = \{0,1\}^3$ and $\varrho \in \conv\{Y_3\} = [0,1]^3$, satisfying the convex hull condition.

 \paragraph{Case 2} There is an enumeration $\{(p^1, q^1, \varrho^1), (p^2, q^2, \varrho^2)\}$ such that the premise of 2-SI holds, i.e., $d^2=\lambda d^1$ for some $\lambda>0$. To complete the enumeration, let $(p^3, q^3, \varrho^3)$ be the third element in the dataset. Consider the equivalent dataset $\hat D$ given by this exact enumeration. 2-SI requires $y \in \hat Y_3$ (implies $y^1 = y^2$) and 2-triangulation requires $\varrho^1 = \varrho^2$. Consider now the  dataset $\ddot D=\left((p^2, q^2, \varrho^2), (p^3, q^3, \varrho^3)  \right)$. Note that the corresponding set $\ddot Y_2$ equals  $\hat Y_3$ with the first dimension removed. It follows that $\varrho \in \conv(\hat Y_3)$ iff $(\varrho^2, \varrho^3) \in \conv(\ddot Y_2)$.  Thus, by Proposition \ref{prop:1}, $\hat D$ is REU-consistent iff  $\ddot D$ is REU-consistent. The result from Section \ref{sec:suff-2} guarantee that 2-triangulation is sufficient for $\ddot D$ to be REU-consistent.

 \paragraph{Case 3}  The premise of 2-SI does not hold for any enumeration and there is an enumeration such that the premise of 3-SI \begin{equation}
    d^3=\lambda^1 d^1+\lambda^2 d^2\label{eq:locnum1}
\end{equation} 
 holds with  $\lambda^1, \lambda^2>0$. This implies that $\rank(d^1, d^2, d^3) = 2$.  Consider the equivalent dataset $\hat D=\left((p^1, q^1, \varrho^1), \ldots, (p^3, q^3, \varrho^3)  \right)$. Equation \eqref{eq:locnum1} implies that $(1,1,0) \not \in \hat Y_3$.  Corollary \ref{cor_3SI} then implies  $\hat Y_3 = \{0,1\}^3 \setminus \{(1,1,0), (0,0,1) \}$,  and 3-triangulation requires $\varrho_1 + \varrho_2 - 1 \leq \varrho_3 \leq \varrho_1 + \varrho_2 \iff 0 \leq \varrho_1 + \varrho_2 - \varrho_3 \leq 1$. It can be verified directly (or seen as a consequence of our Lemma \ref{lemma_convex_closure-strong}) that
\begin{align*}
    \conv(\hat Y_3) = \{ r \in [0,1]^3 : 0 \leq r_1 + r_2 - r_3 \leq 1 \},
\end{align*}
so $\varrho \in \conv(\hat Y_3)$ and $\hat D$ is REU-consistent, so $D$ is REU-consistent. \qed

\subsubsection{Proof of Sufficiency for  $n = 4$}\label{sec:suff-4}
There are four cases to consider:

\paragraph{Case 1} The premise of 4-SI does not hold for any enumeration. Then, $Y_4 = \{0,1\}^4$ and $\varrho \in \conv\{Y_4\} = [0,1]^4$, satisfying the convex hull condition.

\paragraph{Case 2} There is an  enumeration $\{(p^1, q^1, \varrho^1), (p^2, q^2, \varrho^2)\}$ such that the premise of 2-SI holds.   To complete the enumeration, let $(p^3, q^3, \varrho^3)$ and $(p^4, q^4, \varrho^4)$ be the third and fourth items in the dataset. Consider the equivalent dataset $\hat D$ given by this enumeration. We have $y \in \hat Y_4$ implies $y_1 = y_2$,  and 2-triangulation requires $\varrho^1 = \varrho^2$. 

Consider the dataset $\ddot D:=\{(p^2, q^2, \varrho^2), (p^3, q^3, \varrho^3), (p^4, q^4, \varrho^4)\}$. The associated set $\ddot Y_3$  equals to  $\hat Y_4$ with the first dimension removed. It follows that $\varrho \in \text{conv}(\hat Y_4)$ iff $(\varrho^2, \varrho^3, \varrho^4) \in \conv(\ddot Y_3)$. Thus, by Proposition \ref{prop:1}, $\hat D$ is REU-consistent iff $\ddot D$ is REU-consistent. The result from section \ref{sec:suff-3} guarantees that 3-triangulation is sufficient for $\ddot D$ to be REU-consistent.

\paragraph{Case 3} The premise of 2-SI does not hold for any enumeration and there is an enumeration such that the premise of 3-SI \eqref{eq:locnum1} 
 holds with  $\lambda^1, \lambda^2>0$. Thus, $d^3\in \cones \{d^1, d^2\}$ and $\rank(d^1, d^2, d^3)=2$. To complete the enumeration, let $(p^4, q^4, \varrho^4)$ be the fourth item in the dataset and consider the equivalent dataset $\hat D$. We have $\rank(d^1, d^2, d^3, d^4) \in \{2,3\}$. Thus, we have two sub-cases:

\paragraph{Case 3a} Suppose $\rank(d^1, d^2, d^3, d^4) = 3$. Since no two $d^i$ are collinear, it follows that $d^4 \not \in \spann\{d^1,d^2\} =:E$, while $d^1,d^2,d^3 \in E$. Thus,   $4 \in \{i,j,k\}$ implies
\begin{align*}
    \epsilon_i d^i  \not \in \cones\{ \epsilon_j d^j, \epsilon_k d^k\}
\end{align*}
for any $\epsilon_i, \epsilon_j, \epsilon_k \in \{-1,1\}$.\footnote{If $i = 4$, this would imply $d^4 \in E$, a contradiction. If $k = 4$, this would imply $d^i \in \spann\{d^j, d^4\}$ and thus $d^4 \in E$ because $d^i, d^j \in E$, a contradiction. The same argument holds for $j = 4$. } That is, the premise of 3-SI never holds with the fourth lottery pair  $\{p^4, q^4\}$ as one of the arguments.

By the same logic, the premise of 4-SI never holds with the fourth lottery pair as one of the arguments because $4 \in \{c,i,j,k\}$ implies
\begin{align*}
    \epsilon_c d^c \not \in \cones\{ \epsilon_i d^i,  \epsilon_j d^j, \epsilon_k d^k\}
\end{align*}
for any $\epsilon_c, \epsilon_i, \epsilon_j, \epsilon_k \in \{-1,1\}$.\footnote{If $c= 4$, this would imply $d_4 \in E$, a contradiction. If $i = 4$, this would imply $d^c \in \spann\{d^4, d^j, d^k\}$ and thus $d_4 \in E$ because $d^j, d^k, d^c \in E$, a contradiction. The same argument holds for $j,k = 4$.}  This allows us to conclude that the fourth lottery pair plays no role in the construction of $\hat Y_4$.

It then follows from Case 3 in Section \ref{sec:suff-3} that $\hat Y_4 = \{0,1\}^4 \setminus \{ (1,1,0, \cdot) , (0,0,1, \cdot) \}$, such that
\begin{align*}
    \conv(\hat Y_4) = \{ r \in [0,1]^4 : 0 \leq r_1 + r_2 - r_3 \leq 1 \}.
\end{align*}
Since 3-triangulation requires $\varrho_1 + \varrho_2 - 1 \leq \varrho_3 \leq \varrho_1 + \varrho_2 \iff 0 \leq \varrho_1 + \varrho_2 - \varrho_3 \leq 1$, we have that $\varrho \in \conv(\hat Y_4)$, as desired.

\paragraph{Case 3b} Suppose $\rank(d^1,d^2,d^3,d^4) = 2$. As in Figure \ref{fig:vec-4}, we can  relabel menus and lotteries without loss, such that $d^2, d^3 \in \cones \{d^1, d^4\}$.\footnote{Since $\rank(d^1,d^2,d^3,d^4) = 2$, all vectors lie on a 2D plane. Take any $u \in \spann(d^1,d^2,d^3,d^4)$ satisfying $\langle u, d^i \rangle \neq 0$, and flip signs (swap $p$s and $q$s) for any $d^i$ such that $\langle u , d^i \rangle> 0$. This leads to all $d^i$ lying on the same halfspace. Order these by the value of the normalized inner product with vector $(1, 0)$), and re-define $d^1$ and $d^4$ as the vectors with the smallest and largest angles and $d^2,d^3$ as the vectors with the second and third smallest angles, such that $d^2,d^3 \in \cones\{d^1,d^4\}$.}

\begin{figure}[h!]
\begin{center}
\begin{tikzpicture}[
    >=latex,
    scale=1.1,
    vec/.style={->, thick, blue!55!black},
    axis/.style={thick},
    redvec/.style={->, thick, red!70!black}
]

\def\r{1.55}

\begin{scope}[shift={(0,0)}]
    \draw[axis] (-2.1,0) -- (2.3,0);
    \draw[axis] (0,-1.5) -- (0,2.0);

    \draw[vec] (0,0) -- (21.5:\r) node[right] {$d^1$};
    \draw[vec] (0,0) -- (112.8:\r) node[above left] {$d^3$};
    \draw[vec] (0,0) -- (176.1:\r) node[left] {$d^2$};
    \draw[vec] (0,0) -- (253.1:\r) node[below left] {$d^4$};

    \draw[redvec] (0,0) -- (0,1.05) node[right] {$u$};

    \draw[red!70!black, thick] (0.18,0) arc[start angle=0,end angle=90,radius=0.18];
\end{scope}

\node at (3.25,0.70) {Relabel};
\draw[->, thick] (2.80,0.40) -- (3.70,0.40);

\begin{scope}[shift={(6.5,0)}]
    \draw[axis] (-2.1,0) -- (2.3,0);
    \draw[axis] (0,-1.5) -- (0,2.0);

    \draw[vec] (0,0) -- (21.5:\r) node[right] {$d^1$};
    \draw[vec] (0,0) -- (112.8:\r) node[above left] {$d^3$};
    \draw[vec] (0,0) -- (176.1:\r) node[left] {$d^4$};
    \draw[vec] (0,0) -- (73.1:\r)
        node[pos=1, above, xshift=3pt, yshift=0pt] {$d^2$};
\end{scope}

\end{tikzpicture}

\caption{Lottery differences in case 3b of the $n=4$ sufficiency proof.} \label{fig:vec-4}
	\end{center}
\end{figure}

 3-SI then requires $p^1 \succ q^1, p^4 \succ q^4 \implies p^2 \succ q^2$ and  also $p^1 \succ q^1, p^4 \succ q^4 \implies p^3 \succ q^3$.  With datasets $\ddot D = \{ (p^1, q^1, \varrho^1), (p^2, q^2, \varrho^2), (p^4, q^4, \varrho^4)\}$,  $\ddot D' = \{ (p^1, q^1, \varrho^1), (p^3, q^3, \varrho^3), (p^4, q^4, \varrho^4)\}$ and corresponding $\ddot Y_3, \ddot Y_3'$, Corollary \ref{cor_3SI} (with the $d_i$'s appropriately relabeled)  then implies
\begin{align*}
    \ddot Y_3 = \ddot Y_3' = \{0,1\}^3 \setminus \{(1,0,1), (0,1,0)\}.
\end{align*}
We also see from Figure \ref{fig:vec-4} that $d^3 \in \cones\{d^2, d^4\}$ and $d^2 \in \cones\{d^1, d^3\}$, in which case 3-SI requires $p^2 \succ q^2, p^4 \succ q^4 \implies p^3 \succ q^3$ and also $p^1 \succ q^1, p^3 \succ q^3 \implies p^2 \succ q^2$. With datasets $\ddot D'' = \{ (p^2, q^2, \varrho^2), (p^3, q^3, \varrho^3), (p^4, q^4, \varrho^4) \}$, $\ddot D''' = \{ (p^1, q^1, \varrho^1), (p^2, q^2, \varrho^2), (p^3, q^3, \varrho^3) \}$ and corresponding $\ddot Y_4'', \ddot Y_4'''$, Corollary \ref{cor_3SI} (with the $d_i$'s appropriately relabeled) then implies
\begin{align*}
    \ddot Y_3'' = \ddot Y_3''' = \{0,1\}^3 \setminus \{(1,0,1), (0,1,0)\}.
\end{align*}

Importantly, observe that because the $d^i$ lie on a 2D plane, all instances of 4-SI are nested by 3-SI. This means that $\ddot Y_3$ corresponds to $Y_4$ with the third dimension removed, and likewise for $\ddot Y_3'$, $\ddot Y_3''$,  $\ddot Y_4''$  with the second, first, and fourth dimensions. In other words, we have that 
\begin{align*}
    Y_4 = \{0,1\}^4 \setminus \bigg \{ (1,0, \cdot, 1), (0,1, \cdot, 0),  (1, \cdot, 0, 1), (0, \cdot, 1, 0),  (\cdot, 1,0, 1), (\cdot,  0,1, 0),  (1,0, 1, \cdot), (0,1, 0, \cdot) \bigg \}.
\end{align*}

It can be verified that
\begin{align*} 
\conv(Y_4) & = \left\{ r \in [0,1]^4: 0 \leq \begin{array}{c} r_1 + r_4 - r_2 \\ r_1 + r_4 - r_3 \\ r_2 + r_4 - r_3 \\ r_1 + r_3 - r_2 \end{array} \leq 1 \right\}, \end{align*}
by directly enumerating the vertices of the polytope given by the linear inequalities.\footnote{Since this polytope is  in $\mathbb{R}^4$, each vertex is determined by a collection of four linearly independent active inequalities. Enumerating these collections, solving the resulting systems of equalities, and retaining only the feasible solutions yields precisely the points in $Y_4$. } 

On the other hand, 3-triangulation requires 
\begin{align*}
    \varrho^1 + \varrho^4 -1 \leq \varrho^2 \leq \varrho^1 + \varrho^4 & \spiff 0 \leq \varrho^1 + \varrho^4 - \varrho^2 \leq 1 \\
    \varrho^1 + \varrho^4 -1 \leq \varrho^3 \leq \varrho^1 + \varrho^4 & \spiff 0 \leq \varrho^1 + \varrho^4 - \varrho^3 \leq 1 \\ 
    \varrho^2 + \varrho^4 -1 \leq \varrho^3 \leq \varrho^2 + \varrho^4 & \spiff 0 \leq \varrho^2 + \varrho^4 - \varrho^3 \leq 1 \\
    \varrho^1 + \varrho^3 -1 \leq \varrho^2 \leq \varrho^1 + \varrho^3 & \spiff 0 \leq \varrho^1 + \varrho^3 - \varrho^2\leq 1.
\end{align*}
We can thus conclude that $\varrho \in \conv(Y_4)$, as desired.\\[-1em]

\paragraph{Case 4} The remaining scenario is when the premise of 4-SI holds only with strictly positive coefficients, in which case $\rank(d_1,d_2,d_3,d_4) = 3$. WLOG, relabel menus and lotteries such that 4-SI requires $p_1 \succ q_1$, $p_2 \succ q_2$, $p_3 \succ q_3 \implies p_4 \succ q_4$, which means that $(1,1,1,0) \not \in Y_4$. Corollary \ref{cor_4SI} then implies $Y_4 = \{0,1\}^4 \setminus \{(1,1,1,0) , (0,0,0,1) \}$, and 4-triangulation requires $\varrho_1 + \varrho_2 + \varrho_3 - 2 \leq \varrho_4 \leq \varrho_1 + \varrho_2 + \varrho_3 \iff 0 \leq \varrho_1 + \varrho_2 + \varrho_3 - \varrho_4 \leq 2$. It can be verified directly (or seen as a consequence of Lemma \ref{lemma_convex_closure-strong}) that  
\begin{align*}
    \conv(Y_4) = \{ r \in [0,1]^4: 0 \leq r_1 + r_2 + r_3 - r_4 \leq 2 \},
\end{align*}
so $\varrho \in \conv(Y_4)$, as desired.

\subsection{Counterexample: Axiom \ref{ax:n-triang}  not sufficient when $n\geq 5$}

Let $|X| = 3$, and consider the lotteries given by
\begin{align*}
    p_1 = \frac{1}{12} \left( 4, 5, 3 \right) ,  \hspace{20pt}p_2 = \frac{1}{12} \left( 3, 6 ,3 \right), \hspace{20pt}  p_3 = \frac{1}{12} \left(6,2,4 \right),  \\
    p_4 = \frac{1}{12} \left( 5, 6, 1  \right), \hspace{20pt}   p_5 = \frac{1}{12} \left( 6,5,1 \right),   \hspace{20pt} q = \frac{1}{12} \left( 4, 4, 4 \right),
\end{align*}
such that the 5 binary menus are  $\{p_i, q\}_{i=1}^5$.   Define the lottery differences $d_i \equiv p_i - q$, with
\begin{align*}
     d_1 = \frac{1}{12} (0, 1, -1),  \hspace{20pt} d_2 = \frac{1}{12} (-1, 2, -1),  &  \hspace{20pt} d_3 = \frac{1}{12} ( 2, -2, 0), \\
     d_4 = \frac{1}{12} (1,2,-3), &  \hspace{20pt}  d_5  = \frac{1}{12}  ( 2, 1, -3).
\end{align*}

Note that no two $d_i$ are collinear, implying that the premise of $2$-triangulation never holds, i.e.,  $2$-triangulation holds vacuously. In particular,  this corresponds to Case 3b in the $n=4$ proof, where there are many more combinations of vectors for which the premise of $k$-SI holds, for $k=3,4,5$ because now there are 5 vectors on a 2D plane. In the notation of the proof of necessity,  each of these correspond to the set $\conv(G_m) \subseteq [0,1]^5$ of vectors that satisfy the inequality conditions given by $k$-triangulation for $k=3,4,5$. Recall that $\bigcap_m \conv(G_m)$ is then the collection of stochastic choice rules that satisfy $k$-triangulation for $k = 2,3,4,5$ in this example.

The proof of necessity showed $\conv(Y_n) \subseteq \bigcap_m \conv(G_m)$ for general $n$. Here, the reason  sufficiency fails is because $\conv(Y_5) \subsetneq \bigcap_m \conv(G_m)$. Since $\conv(Y_5)$ has integer vertices by construction, one way to prove the validity of our counterexample is to show that $\bigcap_m \conv(G_m)$ has fractional vertices. We do so below by showing that the fractional vector 
\begin{align*}
   \varrho \equiv  (\varrho_1, \varrho_2, \varrho_3, \varrho_4, \varrho_5) = \frac{1}{3} (1,2,2,2,1),
\end{align*}
where $\varrho_i$ is the probability $p_i$ is chosen over $q$,  corresponds to a vertex of $\bigcap_m \conv(G_m)$.

First, the following lemma proves that our dataset satisfies $k$-triangulation for $k= 3,4,5$. Combined with the previous observation that $2$-triangulation holds vacuously, this then implies  $\varrho \in \bigcap_m \conv(G_m)$.

\begin{lemma} \label{lemma_counterex_tri}
    Fix any $r_i \in \{\varrho_i, 1-\varrho_i\}$ for each $i \in \{1,2,3,4,5\}$. For any $j \in \{1,2,3,4,5\}$ and $I \subseteq \{1,2,3,4,5\} \setminus \{j\}$ with $|I| \geq 2$,  we have
    \begin{align*}
        \sum_{i \in I} r_i - (|I|-1) \leq r_j \leq \sum_{i \in I} r_i.
    \end{align*}
\end{lemma}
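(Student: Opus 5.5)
The claim is elementary, and I would prove it by exploiting only the values $\varrho=\frac13(1,2,2,2,1)$, so that every $r_i$ lies in $\{1/3,2/3\}$. The argument never uses the geometry of the $d_i$, so the inequality holds for every index set $I$ and every choice of signs, not only those where the premise of $k$-triangulation applies. I would split into two cases according to $|I|$.

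\emph{Upper bound.} I need $r_j \le \sum_{i\in I} r_i$. Since $|I|\ge 2$ and each $r_i\ge 1/3$, we get $\sum_{i\in I} r_i \ge 2/3 \ge r_j$, because $r_j\le 2/3$.

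\emph{Lower bound.} I need $\sum_{i\in I} r_i - (|I|-1)\le r_j$. Since $r_i\le 2/3$, the left side is at most $\frac{2|I|}{3}-|I|+1 = 1-\frac{|I|}{3}$.
\begin{itemize}
\item If $|I|\ge 2$, this is at most $1/3 \le r_j$, which already settles every case.
\item For $|I|=2$ specifically, the bound is exactly $1/3$ and is attained only when $r_i=2/3$ for both $i\in I$. It remains valid because $r_j\ge 1/3$.
\end{itemize}
Both inequalities therefore hold for all admissible $(r_i)$, $j$ and $I$.

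There is no real obstacle here. The only point needing care is to note that the bounds $1/3\le r_i\le 2/3$ hold regardless of whether $r_i=\varrho_i$ or $1-\varrho_i$, since $\{\varrho_i,1-\varrho_i\}=\{1/3,2/3\}$ for every $i$. That symmetry is what makes the lemma cover every enumeration, including those with flipped pairs, and hence all instances of $k$-triangulation for $k=3,4,5$.
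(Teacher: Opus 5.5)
Your proposal is correct and follows essentially the same argument as the paper, which scales by $3$ so that $3r_i\in\{1,2\}$ and then bounds both sides exactly as you do: the lower side is at most $3-|I|\le 1\le 3r_j$, and the upper side holds because any two of the $3r_i$ sum to at least $2$. The second bullet in your lower-bound case is redundant, since the first bullet already covers $|I|=2$, but it does no harm.
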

\begin{proof}
   It suffices to show that 
   \begin{align*}
       \sum_{i \in I} 3r_i - 3(|I|-1) \leq 3r_j \leq \sum_{i \in I} 3r_i. \tag{$\ast$}
   \end{align*}
   Observe that $3r_i \in \{1,2\}$ for all $i$, in which case 
   \begin{align*}
       \sum_{i \in I} 3r_i - 3(|I|-1) & \leq 2|I| - 3 |I| + 3 = 3 - |I|  \underbrace{\leq}_{|I| \geq 2}  1  \leq 3r_j,
   \end{align*}
   so the lower end of $(\ast)$ always holds. It is also easy to see that the upper end of $(\ast)$ always holds, since $3 r_c + 3 r_k \geq 2$ for any $c,k \in \{1,2,3,4,5\}$, and $3r_j \in \{1,2\}$.
\end{proof}

Next, observe that the lottery differences satisfy the following conic relations:
\begin{align*}
    d_1 \in \cone\{d_2, d_3\}, \hspace{20pt} d_1 & \in \cone\{d_2, d_4\}  \tag{$d_1 = d_2 + \frac{1}{2}d_3 = \frac{1}{4}d_2 + \frac{1}{4}d_4$} \\
    d_5 \in \cone\{d_2,d_3\}, \hspace{20pt} d_5 & \in \cone\{d_3, d_4\} \tag{$d_5 =  3d_2 + \frac{5}{2}d_3 = \frac{1}{2}d_3 + d_4$} \\
    d_4 &  \in \cone\{d_1, d_5\}. \tag{$d_4 = \frac{3}{2}d_1 + \frac{1}{2}d_5$}
\end{align*}
Any $\varrho \in \bigcap_m \conv(G_m)$ must then satisfy the following inequalities obtained from 3-triangulation:
\begin{align*}
  \varrho_2 + \varrho_3 - 1 \leq  \varrho_1 \leq \varrho_2 + \varrho_3, \hspace{20pt} \varrho_2 + \varrho_4 - 1 &  \leq  \varrho_1 \leq \varrho_2 + \varrho_4 \\
  \varrho_2 + \varrho_3 - 1 \leq \varrho_5 \leq \varrho_2 + \varrho_3, \hspace{20pt}   \varrho_3 + \varrho_4 - 1 & \leq \varrho_5 \leq \varrho_3 + \varrho_4 \\
  \varrho_1 + \varrho_5  -1 & \leq \varrho_4 \leq \varrho_1 + \varrho_5.
\end{align*}
Consider the system of equations that arise when the  first four inequalities bind at the lower end and the last inequality binds at the upper end, such that
$$ \begin{pmatrix}
        1 \\ 1 \\ 1 \\ 1 \\ 0
    \end{pmatrix}  = \underbrace{\begin{pmatrix}
        -1 & 1 & 1 & 0 & 0 \\
        -1 & 1 & 0 & 1 & 0 \\ 
        0 & 1 & 1 & 0 & -1 \\
        0 & 0 & 1 & 1 & -1 \\
        1  & 0 & 0 & -1 &  1
    \end{pmatrix}}_{\equiv C} \varrho.$$
This system is solved uniquely by our $\varrho$. Since $\rank(C) = 5$ and $\bigcap_m \conv(G_m) \subseteq \bR^5$, the rank characterization of the vertices of polytopes \citep[e.g., Theorem II.4.2 of][]{Barvinok} imply that $\varrho$ is indeed a vertex of $\bigcap_m \conv(G_m)$.

Since $\varrho$ is a vertex of $\bigcap_m \conv(G_m)$ and it is not a vertex of $\conv(Y_5)$, it must be that these two sets are different. In particular, $\bigcap_m \conv(G_m)$ contains $\rho$ and $\conv(Y_5)$ does not, i.e., $\rho$ satisfies $k$ triangularity for $k=2, 3, 4, 5$ but is not REU-consistent.

\begin{spacing}{1.25}
\bibliographystyle{apalike}
\bibliography{EU.bib}
\end{spacing}

\newpage
\setcounter{section}{18}



\section{Supplementary Appendix}\label{app:supp}



\subsection{Further Exploration of the \texttt{choices13k} Dataset}\label{app:13k}

\subsubsection{The Dataset}

In the \texttt{choices13k} dataset, each question is a choice between \emph{Gamble A} and \emph{Gamble B}. Gamble $A$ is a lottery with binary support. Gamble $B$, however, is potentially compound: in the first stage, it either yields a fixed payoff or leads to a second-stage lottery. In some questions, which \cite{13k} classifies as ``ambiguous,'' subjects are not shown the probabilities associated with these first-stage events. In some questions, there exists non-zero correlation between the payoffs of the two options. The dataset also varies whether subjects receive feedback after making a decision. Each subject answers each question five times.

\subsection{Methodology behind Table \ref{tab:zeroprob}}

We restrict attention to the subset of choice scenarios that correspond to our simpler domain of choosing between (simple) lotteries. Operationally, this means dropping all questions involving ambiguity, correlation, or feedback, as well as all questions in which the second-stage lottery in Gamble $B$ is non-degenerate. This leaves $4{,}741$ subjects who answered at least two questions, which corresponds to the sample in the ``stochastic preference'' column of Table \ref{tab:zeroprob}. We construct the deterministic preference sample by dropping all subject--question observations for which the subject did not give the same answer in all five repetitions of the question. This leaves $2{,}183$ subjects who answered at least two questions, corresponding to the sample in the ``deterministic preference'' column.

\subsubsection{Violations as a function of the number of questions answered}

In Table \ref{tab:zeroprob}, each subject is one observation, although subjects differ in the number of distinct questions they answered.  Tables \ref{tab:zeroprob-1}-\ref{tab:zeroprob-2} 
divide the sample of subjects by the number of distinct questions answered ($n=2, 3, 4, 5, 6$) and report the proportion of actual and maximal violations of EU, monotone EU, and concave and monotone EU. 
 
\begin{table}[htbp]
\centering

\begin{tabular}{r|r|rr|rr|rr}
\toprule
 &  & 
\multicolumn{2}{c|}{\makebox[3.5cm][c]{EU}} & 
\multicolumn{2}{c|}{\makebox[3.5cm][c]{Mon. EU}} & 
\multicolumn{2}{c}{\makebox[3.5cm][c]{Mon. \& Conc. EU}} \\ 
$n$ & Subjects & Actual & Max & Actual & Max & Actual & Max \\
\midrule
2 & 1679 & 0.00 & 0.00 & 0.04 & 0.43 & 0.27 & 0.90 \\
3 & 426  & 0.00 & 0.00 & 0.06 & 0.59 & 0.33 & 0.98 \\
4 & 70   & 0.00 & 0.00 & 0.03 & 0.67 & 0.34 & 1.00 \\
5 & 6    & 0.00 & 0.00 & 0.00 & 0.67 & 0.67 & 1.00 \\
6 & 2    & 0.00 & 0.00 & 0.00 & 1.00 & 0.50 & 1.00 \\
\bottomrule
\end{tabular}
\caption{Proportion of representation violations by number of questions answered in the deterministic preference sample.}
\label{tab:zeroprob-1}
\end{table}

\begin{table}[htbp]
\centering

\begin{tabular}{r|r|rr|rr|rr}
\toprule
 &  & 
\multicolumn{2}{c|}{\makebox[3.5cm][c]{EU}} & 
\multicolumn{2}{c|}{\makebox[3.5cm][c]{Mon. EU}} & 
\multicolumn{2}{c}{\makebox[3.5cm][c]{Mon. \& Conc. EU}} \\ 
$n$ & Subjects & Actual & Max & Actual & Max & Actual & Max \\
\midrule
2 & 3185 & 0.00 & 0.00 & 0.07 & 0.36 & 0.33 & 0.88 \\
3 & 1287 & 0.00 & 0.00 & 0.12 & 0.56 & 0.47 & 0.99 \\
4 & 240  & 0.00 & 0.00 & 0.15 & 0.61 & 0.52 & 1.00 \\
5 & 25   & 0.00 & 0.00 & 0.24 & 0.76 & 0.56 & 1.00 \\
6 & 4    & 0.00 & 0.00 & 0.25 & 1.00 & 0.50 & 1.00 \\
\bottomrule
\end{tabular}
\caption{Proportion of representation violations by number of questions answered in the stochastic preference sample.}
\label{tab:zeroprob-2}
\end{table}

\subsubsection{Results for size-2 datasets}

 In Table \ref{tab:zeroprob-robust}, we consider an alternative exercise that focuses on size-2 datasets. After applying the filtering exercise described above, $954$ questions remain, corresponding to $\binom{954}{2}=454{,}581$ possible size-2 datasets. Because subjects were not assigned to every question, many of these size-2 datasets do not contain any pair of questions answered by the same subject. Requiring that at least one subject answered both questions leaves $2{,}146$ size-2 datasets. This produces a final sample size of $N=8{,}796$ subject $\times$ size-2 dataset observations for the stochastic preference column in Table \ref{tab:zeroprob-robust}. The deterministic preference counterpart is constructed the same way as before. 

\begin{table}[!h]
    \centering
    \begin{tabular}{l|cc|cc}
\toprule
  & \multicolumn{2}{c}{deterministic preference} & \multicolumn{2}{c}{stochastic preference} \\
\hline
Violations of ...      & actual & max   & actual & max   \\
\hline
EU                     & 0      & 0     & 0      & 0     \\
Monotone EU            & 121    & 1,492 & 673    & 3,358 \\
Monotone \& concave EU & 838    & 3,102 & 2,902  & 7,817 \\
\hline
$N$ & \multicolumn{2}{c}{3,467} & \multicolumn{2}{c}{8,796} \\
\bottomrule
\end{tabular}
\caption{We take the binary lotteries in the  \texttt{choices13k} dataset and consider every possible size-2 dataset. The ``deterministic preference'' column  restricts attention to questions that subjects have had consistent answers to across all 5 trials. The ``stochastic preference'' column defines a subject's deterministic choice as the option chosen more often.  The ``actual" column reports the actual number of violations in the data, and the  ``max" column reports the maximum possible number of violations. }
\label{tab:zeroprob-robust}
\end{table}

\subsection{Details of our Grid Search and Alternative Grids}\label{app:MRanomaliestwo}

We perform a crude grid approximation of $\Delta(Z)$, taking
$Z=\{0.01,2.505,5.0,7.495,9.99\}$ and using the randomly generated probability grid
$\{0.339,0.359,0.467,0.793,0.939\}$. 
For each pair of distinct prizes in $Z$, we assign probabilities $(p,1-p)$ with $p$ from this grid, yielding $5\binom{5}{2}=50$ binary-support lotteries. We then search over all unordered binary menus from these lotteries, for a total of $\binom{50}{2}=1225$ menus. 

In addition to MR's calibration, we consider four additional parameter values $\theta_4=(0.1, 1.55)$, $\theta_5=(0.1,2.275)$, $\theta_6=(0.1, 3)$, $\theta_7=(3,0.1)$ that are obtained from maximizing the number of anomalies and/or counterexamples over the generated menus.\footnote{For computational simplicity, we optimize over the coarse parameter space $\theta=(\delta,\gamma) \in \{0.1, 0.825, 1.55, 2.275, 3\}^2$.}

Our findings are robust to alternative approaches to menu construction. We consider two alternatives. The first  (Table \ref{tab:extratab_combined}) has an even probability grid $\{ 0.01, 0.255, 0.5, 0.745, 0.99 \}$. The second  (Table \ref{tab:extratab3-4-combined}) implements the algorithm of \cite{Erevetal17} discussed in Section \ref{sec:13k}. In particular, the finding that that our coarser categories capture almost all anomalies is robust to the gridding procedure, since we see that the  counts for Axiom \ref{ax:strong-LMR}  and Axiom \ref{ax:m-2-cancellation} are almost identical across specifications.

Finally, we note that $\theta_4$ and $\theta_7$ maximize the number of counterexamples and anomalies in the even gridding procedure, respectively, while $\theta_5$ and $\theta_6$ maximizes both the number of counterexamples and anomalies in the random gridding and \cite{Erevetal17} procedures, respectively.

\begin{table}[h!]
\centering
\begin{tabular}{l|ccc|cccc}
\toprule
 & \multicolumn{3}{c|}{MR parameters} & \multicolumn{4}{c}{optimized parameters} \\
 & $\theta_1$ & $\theta_2$ & $\theta_3$ & $\theta_4$ & $\theta_5$ & $\theta_6$ & $\theta_7$ \\
\hline
Violations of Axiom \ref{ax:dom}              & 0   & 0   & 0   & 102 & 96  & 94  & 0    \\
Violations of Axiom \ref{ax:SI}               & 0   & 0   & 0   & 0   & 0   & 0   & 0    \\
\hline
Dominated Consequence                         & 0   & 0   & 0   & 62  & 35  & 25  & 0    \\
Reverse Dominated Consequence                 & 0   & 0   & 0   & 54  & 34  & 24  & 0    \\
Strict Dominance                              & 0   & 0   & 0   & 51  & 28  & 18  & 0    \\
\hline
Violations of Axiom \ref{ax:Left}             & 29  & 16  & 16  & 358 & 235 & 191 & 264  \\
Violations of Axiom \ref{ax:Right}            & 37  & 16  & 14  & 504 & 384 & 335 & 238  \\
Violations of Axiom \ref{ax:Middle}           & 0   & 0   & 0   & 283 & 130 & 82  & 0    \\
\hline
Violations of Axiom \ref{ax:LMR}              & 56  & 32  & 28  & 827 & 621 & 558 & 571  \\
Violations of Axiom \ref{ax:strong-LMR}       & 253 & 106 & 101 & 1041 & 798 & 728 & 3091 \\
Violations of Axiom \ref{ax:m-2-cancellation} & 258 & 106 & 101 & 1045 & 802 & 732 & 3173 \\
\hline
\# Anomalies                                  & 258 & 106 & 101 & 1045 & 802 & 732 & 3173 \\
\bottomrule
\end{tabular}
\caption{Anomalies with MR ground truths (parameters $\theta_1$--$\theta_3$) and optimized ground truths ($\theta_4$--$\theta_7$) under even gridding, for 749{,}700 pairs of lotteries.}
\label{tab:extratab_combined}
\end{table}

\begin{table}[h!]
\centering
\begin{tabular}{l|ccc|cccc}
\toprule
 & \multicolumn{3}{c|}{MR parameters} & \multicolumn{4}{c}{optimized parameters} \\
 & $\theta_1$ & $\theta_2$ & $\theta_3$ & $\theta_4$ & $\theta_5$ & $\theta_6$ & $\theta_7$ \\
\hline
Violations of Axiom \ref{ax:dom}              & 0 & 0 & 0 & 21 & 20 & 21 & 0 \\
Violations of Axiom \ref{ax:SI}               & 0 & 0 & 0 & 0  & 0  & 0  & 0 \\
\hline
Dominated Consequence                         & 0 & 0 & 0 & 0  & 0  & 0  & 0 \\
Reverse Dominated Consequence                 & 0 & 0 & 0 & 0  & 0  & 0  & 0 \\
Strict Dominance                              & 0 & 0 & 0 & 0  & 0  & 0  & 0 \\
\hline
Violations of Axiom \ref{ax:Left}             & 0 & 0 & 0 & 0  & 0  & 0  & 0 \\
Violations of Axiom \ref{ax:Right}            & 0 & 0 & 0 & 0  & 0  & 0  & 0 \\
Violations of Axiom \ref{ax:Middle}           & 0 & 0 & 0 & 5  & 4  & 1  & 0 \\
\hline
Violations of Axiom \ref{ax:LMR}              & 0 & 0 & 0 & 62 & 152 & 156 & 25 \\
Violations of Axiom \ref{ax:strong-LMR}       & 0 & 0 & 0 & 62 & 152 & 156 & 25 \\
Violations of Axiom \ref{ax:m-2-cancellation} & 3 & 3 & 2 & 104 & 228 & 231 & 94 \\
\hline
\# Anomalies                                  & 3 & 3 & 2 & 104 & 228 & 231 & 94 \\
\bottomrule
\end{tabular}
\caption{Anomalies with MR ground truths (parameters $\theta_1$--$\theta_3$) and optimized ground truths (parameters $\theta_4$--$\theta_7$) using the  \cite{Erevetal17} algorithm, for 749{,}700 pairs of lotteries.}
\label{tab:extratab3-4-combined}
\end{table}

\newpage

\subsection{When our test fails to detect non-EU}\label{app:fail-test}
All seven prospect theory ground truths we consider in the anomaly generation procedure in Section \ref{sec:MR} fail our test from Section \ref{sec:logit}. Figure \ref{fig:prospect_soln} plots the parameter values for which the test is satisfied.\footnote{As illustrated by the figure, there is a measure zero set of parameters for which our test fails to reject EU. To avoid this, one can repeat our test (tweaking the lotteries slightly) so that the corresponding curves in Figure \ref{fig:prospect_soln} shift.}

\begin{figure}[h!]
    \centering
    \includegraphics[width=10cm]{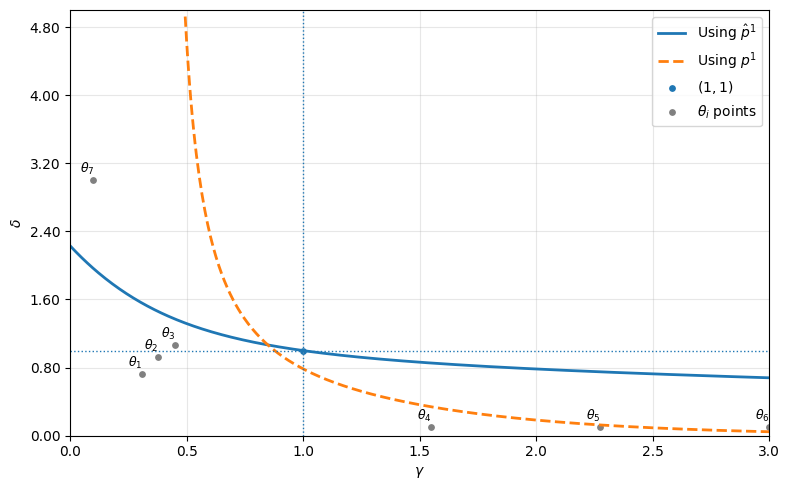}
    \caption{Values of $(\delta, \gamma)$ at which the  utility difference condition $U_{(\delta, \gamma)}(p) - U_{(\delta, \gamma)}(q^1) = U_{(\delta, \gamma)}(p^2) - U_{(\delta, \gamma)}(q^2)$ under the prospect theory ground truth. The blue line corresponds to $p = \hat p^1$, and the orange line corresponds to $p = p^1$.}
    \label{fig:prospect_soln}
\end{figure}

\subsection{A stronger version of Lemma \ref{lemma_convex_closure}}\label{app:proofsplus}

\begin{lemma} \label{lemma_prelim_SI2-strong}
    \begin{align}
     G = \left\{ y \in \{0,1\}^n :  -|J^-| \leq \sum_{j \in J^+} y_j - \sum_{j \in J^-} y_j - y_i \leq |J^+| - 1   \right\}. \nonumber
    \end{align}
\end{lemma}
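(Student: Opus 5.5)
Lemma \ref{lemma_prelim_SI2} already gives the inclusion $G\subseteq$ RHS, so only the reverse inclusion is needed. Since $G=\{0,1\}^n\setminus R$ and $R=R_1\cup R_2$ by Lemma \ref{lemma_prelim_SI1}, it suffices to show that every $y\in\{0,1\}^n$ in the RHS avoids $R_1$ and $R_2$. Equivalently, every $y\in R_1\cup R_2$ violates one of the two linear inequalities. Set
\[
S(y):=\sum_{j\in J^+}y_j-\sum_{j\in J^-}y_j-y_i .
\]

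For the reverse inclusion I will evaluate $S$ on $R_1$ and $R_2$ directly. If $y\in R_1$, then $y_j=1$ for $j\in J^+$, $y_j=0$ for $j\in J^-$, and $y_i=0$, so $S(y)=|J^+|>|J^+|-1$, which violates the upper bound. If $y\in R_2$, then $y_j=0$ for $j\in J^+$, $y_j=1$ for $j\in J^-$, and $y_i=1$, so $S(y)=-|J^-|-1<-|J^-|$, which violates the lower bound. Hence RHS $\cap R=\emptyset$, and so RHS $\subseteq G$.

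The small point to settle is that $i\notin J$, so that $y_i$ is a coordinate independent of those indexed by $J$. This holds because in the fixed enumeration the pair $(p_i,q_i)$ is the conclusion and $J$ indexes the $k-1$ premises, which are distinct entries of the dataset. With this, combining the reverse inclusion with Lemma \ref{lemma_prelim_SI2} gives equality.

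It is also worth noting why this is an exact characterization. Over $\{0,1\}^n$, $S$ ranges over $[-|J^-|-1,\,|J^+|]$, and its two extreme values are attained exactly on $R_2$ and $R_1$ respectively. The argument above is the converse half of the computation already done in Lemma \ref{lemma_prelim_SI2}, so I expect no real obstacle.
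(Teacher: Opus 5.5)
Your proof is correct and follows essentially the same route as the paper's: the paper also gets RHS $\subseteq G$ by evaluating the sum as $|J^+|$ on $R_1$ and as $-|J^-|-1$ on $R_2$. For $G\subseteq$ RHS it repeats the argument of Lemma~\ref{lemma_prelim_SI2} inline, where you simply cite that lemma; your remark that $i\notin J$ is a correct clarification that the paper leaves implicit.
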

\begin{proof}
Let $y \in \{0,1\}^n$ satisfy $-|J^-| \leq \sum_{j \in J^+} y_j - \sum_{j \in J^-} y_j - y_i \leq |J^+| - 1$.   Suppose, for a contradiction, that $y \in R$. We then see that
\begin{align}
    \sum_{j \in J^+} y_j - \sum_{j \in J^-} y_j - y_i = \begin{cases}
       |J^+|    & \text{if } y \in R_1 \\
        - |J^-| - 1  & \text{if } y \in R_2,  
    \end{cases} \nonumber
\end{align}
a contradiction of the inequalities in either case. This proves that RHS $\subseteq G$.

Now suppose $y \in G$, and assume, for a contradiction, that $y \not \in$ RHS. This implies that either
\begin{align}
    \sum_{j \in J^+} y_j - \sum_{j \in J^-} y_j - y_i \leq   -|J^-|-1 \hspace{10pt} \text{or} \hspace{10pt} \sum_{j \in J^+} y_j - \sum_{j \in J^-} y_j - y_i  \geq |J^+| \nonumber
\end{align}

The former scenario can only happen if $y_j = 0$ for all $j \in J^+$, $y_j = 1$ for all $j \in J^-$, and $y_i = 1$. This is a contradiction, since all such points are in $R_2$.  The latter scenario can only happen if $y_j = 1$ for all $j \in J^+$, $y_j = 0$ for all $j \in J^-$ and $y_i = 0$. This is a contradiction, since all such points are in $R_1$. This proves $G \subseteq RHS$, as desired. 
\end{proof}
 
Define the vector $ w_{i, J} \in \{-1,0,1\}^n$ such that each index is given by
\begin{align}
    (w_{i, J})_k = \begin{cases}
        -1 & \text{if } k \in J^- \cup \{i\} \\
        1 & \text{if } k \in J^+ \\
        0 & \text{else}
    \end{cases}. \nonumber
\end{align}
The set $G=\{0,1\}^n \setminus R$ in the previous lemma can be rewritten using matrix notation as
\begin{align}
    G & = \{ y \in \{0,1\}^n : -|J^-| \leq  \langle y, w_{i, J} \rangle \leq |J^+| - 1  \} \nonumber \\
    & = \Bigg\{ y \in \{0,1\}^n :  \underbrace{\begin{pmatrix}
        w_{i, J}^T \\ 
        - w_{i, J}^T
    \end{pmatrix}}_{\equiv A_{i, J}} y \geqq \begin{pmatrix}
        -|J^-| \\ 1- |J^+|
    \end{pmatrix}  \Bigg\},  \nonumber
\end{align}
where $\geqq$ refers to element-wise $\geq$ and $A_{i, J} \in \{-1,0, 1\}^{2 \times n}$.

We say that a matrix $A \in \bR^{k \times n}$ is totally unimodular if every square sub-matrix $A'$ of $A$  has $\det(A') \in \{-1,0,1\}$.

\begin{lemma}\label{lem:TU}
    $A_{i, J}$ is totally unimodular.
\end{lemma}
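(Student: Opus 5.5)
The matrix $A_{i,J}$ has exactly two rows, $w_{i,J}^T$ and $-w_{i,J}^T$, and every entry of $w_{i,J}$ lies in $\{-1,0,1\}$. I would verify total unimodularity directly from the definition. A square submatrix of a $2\times n$ matrix is either $1\times 1$ or $2\times 2$, so I would split into these two cases.

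\emph{Case of $1\times 1$ submatrices.} Such a submatrix is a single entry of $A_{i,J}$. Each entry is $\pm (w_{i,J})_k$, which lies in $\{-1,0,1\}$ by construction. So the determinant lies in $\{-1,0,1\}$.

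\emph{Case of $2\times 2$ submatrices.} Any $2\times 2$ submatrix must use both rows, since there are only two. It is determined by a pair of columns $k<l$ and has the form
\begin{equation*}
\begin{pmatrix} a & b \\ -a & -b \end{pmatrix}, \qquad a=(w_{i,J})_k,\; b=(w_{i,J})_l.
\end{equation*}
Its determinant is $-ab+ab=0$. Equivalently, the second row is a scalar multiple of the first, so the submatrix is singular.

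Combining the two cases, every square submatrix has determinant in $\{-1,0,1\}$, so $A_{i,J}$ is totally unimodular. There is no real obstacle here; the only point to keep straight is that every $2\times 2$ minor necessarily involves both rows, and these rows are negatives of each other.

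The lemma is presumably used together with integrality of the right-hand side $(-|J^-|,\,1-|J^+|)^T$ and the box constraints $0\le \phi\le 1$. Appending the identity and the negative identity preserves total unimodularity. By the Hoffman--Kruskal theorem, the polytope cut out by these constraints then has integral vertices. Those vertices lie in $G$ by Lemma \ref{lemma_prelim_SI2-strong}, which turns the inclusion in Lemma \ref{lemma_convex_closure} into an equality.
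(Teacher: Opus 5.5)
Your proof is correct and matches the paper's argument: the $1\times 1$ minors are entries of $\pm w_{i,J}$, and every $2\times 2$ minor is singular because the two rows are negatives of each other. You state correctly that the entries lie in $\{-1,0,1\}$, whereas the paper's proof writes $\{-1,1\}^{2\times n}$, a slip since $w_{i,J}$ has zero entries. Your remark on how the lemma combines with Hoffman--Kruskal also matches its use in Lemma \ref{lemma_convex_closure-strong}.
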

\begin{proof}
    Since  $A_{i, J} \in \{-1,1\}^{2 \times n}$, it only has $1 \times 1$ and $2 \times 2$ square sub-matrices. That the determinant of all $1 \times 1$ submatrices is in $\{-1,0,1\}$ is immediate. As for the $2 \times 2$ case, observe that for any $a,b \in \bR$,
    \begin{align}
        \det \begin{pmatrix}
            a & b \\
            -a & -b
        \end{pmatrix} =  -ab - (-ab) = 0, \nonumber 
    \end{align}
    as desired.
\end{proof}

We will make use of the following mathematical fact. For any set $X$ let $\ext(X)$ be the set of extreme points of $X$. 

\begin{lemma}[Hoffman-Kruskal]\label{lem:HK}
Suppose that the convex set $\Phi$ is given by     \begin{align}
         \Phi = \left\{ \phi \in \bR^n : A \phi \geqq b, \,\, c \leqq \phi \leqq d \right\}
    \end{align}
    for some     $b,c,d  \in \bZ^k$ and a totally unimodular matrix $A \in \bZ^{k \times n}$. Then $\ext(\Phi)\subseteq \bZ^n$. 
\end{lemma}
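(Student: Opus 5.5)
This is the Hoffman--Kruskal theorem. I would derive it from the rank characterization of vertices (Theorem II.4.2 of \cite{Barvinok}, already used in the counterexample section) together with Cramer's rule.

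First, write $\Phi$ as a single system $M\phi \geqq \beta$. Stack $A$, $I_n$ and $-I_n$ to get
\[
M=\begin{pmatrix} A\\ I_n\\ -I_n\end{pmatrix},\qquad \beta=\begin{pmatrix} b\\ c\\ -d\end{pmatrix}.
\]
The right-hand side $\beta$ is integral because $b,c,d$ are integral.

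The key step is showing that $M$ is totally unimodular. Take a square submatrix $M'$ of $M$ and expand its determinant along the rows coming from $\pm I_n$. Each such row has at most one nonzero entry, equal to $\pm1$. Expanding along it gives $\det M'=0$ or $\det M'=\pm\det M''$, where $M''$ is obtained by deleting that row and the corresponding column. Repeat until only rows of $A$ remain. The remaining minor lies in $\{-1,0,1\}$ by total unimodularity of $A$, so $\det M'\in\{-1,0,1\}$. If the same row index appears both from $I_n$ and from $-I_n$, the expansion still goes through, or those two rows are proportional and $\det M'=0$. In every case $\det M'\in\{-1,0,1\}$.

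Next, let $\phi^*$ be an extreme point of $\Phi$. Because of the box constraints $\Phi$ is a bounded polyhedron, so the vertex characterization applies. It gives $n$ linearly independent rows of $M$, forming a subsystem $M_B\phi^*=\beta_B$ that is active at $\phi^*$. The matrix $M_B$ is an $n\times n$ square submatrix of $M$ and is nonsingular, so by the previous step $\det M_B=\pm1$. Cramer's rule then gives $\phi^*_j=\det(M_B^{(j)})/\det M_B$, where $M_B^{(j)}$ is $M_B$ with column $j$ replaced by $\beta_B$. That numerator is an integer because all entries are integers, and the denominator is $\pm1$. Hence $\phi^*\in\bZ^n$.

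The main obstacle is the total unimodularity of the stacked matrix. The rest is a direct citation of the rank characterization plus Cramer's rule. Since only $\ext(\Phi)\subseteq\bZ^n$ is claimed, nonemptiness of $\Phi$ and existence of vertices need no separate argument: if $\Phi$ is empty the statement is vacuous.
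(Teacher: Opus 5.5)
Your proof is correct, but it takes a different route from the paper. The paper gives no argument of its own. It cites Theorem 4.5 of \cite{integer_programming} to conclude that $\Phi$ is an integral polyhedron, i.e.\ $\Phi=\conv(\Phi\cap\bZ^n)$. It then cites Theorem 4.1 of the same book to say that, for a rational polyhedron, integrality is equivalent to all vertices being integral.

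You prove directly the one direction of Hoffman--Kruskal that is needed. First, you show that stacking $\pm I_n$ under $A$ preserves total unimodularity. This is the step that absorbs the box constraints $c\leqq\phi\leqq d$, and the paper's citation leaves it implicit. Second, you combine the rank characterization of vertices with Cramer's rule to get $\det M_B=\pm1$ and hence an integral vertex.

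Your route is self-contained: it uses only linear algebra and the vertex characterization from \cite{Barvinok}, which the paper already uses in the counterexample section. It also avoids the notion of an integral polyhedron altogether. That notion is not needed downstream, since Lemma \ref{lemma_convex_closure-strong} uses only $\ext(\Phi)\subseteq\bZ^n$ and $\Phi=\conv(\ext\Phi)$ for compact $\Phi$. The paper's route is shorter and formally yields $\Phi=\conv(\Phi\cap\bZ^n)$, which is equivalent here because $\Phi$ is bounded.

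Two minor remarks. The rank characterization of extreme points holds for every polyhedron, so you do not need boundedness for that step. Also, as your stacking implicitly assumes, $c$ and $d$ should lie in $\bZ^n$, not in $\bZ^k$ as the statement says.
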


\begin{proof}
    Because $\Phi$ is a rational polyhedron, by Theorem 4.1 of \cite{integer_programming} the set of vertices of $\Phi$ equals to $\ext(\Phi)\subseteq \bZ^n$ if and only if $\Phi$ is \textit{integral}, meaning that $\Phi=\conv(\Phi\cap \bZ^n)$. By Theorem 4.5 of \cite{integer_programming}, the set $\Phi$ is integral. 
\end{proof}

\begin{lemma} \label{lemma_convex_closure-strong}
    \begin{align}
        \conv(G) = \underbrace{\left\{ \phi \in [0,1]^n : -|J^-| \leq \langle \phi , w_{i, J} \rangle\leq |J^+| -1     \right\}}_{:= \Phi}  \nonumber
    \end{align}
\end{lemma}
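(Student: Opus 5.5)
The inclusion $\conv(G)\subseteq\Phi$ is Lemma \ref{lemma_convex_closure}. Every $y\in G$ lies in $[0,1]^n$ and satisfies the two linear inequalities, and $\Phi$ is convex. So the work is the reverse inclusion $\Phi\subseteq\conv(G)$. I will show that $\Phi$ is an integral polytope whose integer points are exactly $G$.

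\textbf{Step 1: $\Phi$ is a polytope of the right form.} Write
\[
\Phi=\{\phi\in\bR^n: A_{i,J}\phi\geqq b,\ 0\leqq\phi\leqq \mathbf 1\},
\qquad b=\begin{pmatrix}-|J^-|\\ 1-|J^+|\end{pmatrix},
\]
with $b$, the lower bound $0$, and the upper bound $\mathbf 1$ all integral. Since $\Phi\subseteq[0,1]^n$, it is bounded, hence a polytope, and so $\Phi=\conv(\ext(\Phi))$.

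\textbf{Step 2: the vertices are integral.} By Lemma \ref{lem:TU}, $A_{i,J}$ is totally unimodular. Lemma \ref{lem:HK} then gives $\ext(\Phi)\subseteq\bZ^n$. Combined with $\Phi\subseteq[0,1]^n$, this yields $\ext(\Phi)\subseteq\{0,1\}^n\cap\Phi$.

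\textbf{Step 3: identify the integer points.} A point of $\{0,1\}^n\cap\Phi$ is exactly a binary $y$ with $-|J^-|\le\langle y,w_{i,J}\rangle\le|J^+|-1$. By Lemma \ref{lemma_prelim_SI2-strong}, this set is precisely $G$. Hence
\[
\Phi=\conv(\ext\Phi)\subseteq\conv(G),
\]
which completes the equality.

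The main obstacle is Step 2, specifically checking that Lemma \ref{lem:HK} applies as stated. One point is that the box constraints must be handled alongside $A_{i,J}$. In Hoffman–Kruskal form one uses the stacked matrix $[A;I;-I]$, which remains totally unimodular when $A$ is, and Lemma \ref{lem:HK} already allows for the box constraints. The other point is the proof of Lemma \ref{lem:TU}, which as written treats only $2\times2$ minors with rows $w$ and $-w$; those are indeed zero, and entries lie in $\{-1,0,1\}$, so total unimodularity holds. If there were any doubt, I would fall back on a direct argument. At a vertex, $n$ linearly independent constraints are active. At most one of the two rows $\pm w$ can be active unless they coincide as equalities, and even then they contribute only one independent row. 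So at least $n-1$ coordinates are fixed at $0$ or $1$, and the remaining active equation $\langle w,\phi\rangle=\text{integer}$ with coefficients $\pm1$ forces the last free coordinate to be an integer.
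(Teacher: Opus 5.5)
Your proof is correct and follows the paper's argument. The easy inclusion comes from Lemma \ref{lemma_convex_closure}; integrality of $\ext(\Phi)$ follows from total unimodularity of $A_{i,J}$ and Lemma \ref{lem:HK}; and binary points of $\Phi$ avoid $R$ (your appeal to Lemma \ref{lemma_prelim_SI2-strong}), so $\ext(\Phi)\subseteq G$. Your fallback vertex-counting argument is a valid elementary substitute for the Hoffman--Kruskal step, and you are right that $A_{i,J}$ has entries in $\{-1,0,1\}$ rather than $\{-1,1\}$ as stated in the proof of Lemma \ref{lem:TU}, which does not affect the conclusion.
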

\begin{proof}   
    That $\conv(G) \subseteq \Phi$ is immediate, since every $y \in G$ satisfies the constraint $-|J^-| \leq \langle y, w_{i, J} \rangle \leq |J^+| -1$, and the convex combination of any such $y$ continues to satisfy the constraint.  To prove $\Phi \subseteq \conv(G)$, it suffices to show  $\ext(\Phi) \subseteq G$.\footnote{This is because $\ext(\Phi) \subseteq G$ implies $\conv(\ext(\Phi)) \subseteq \conv(G)$, and  $\conv(\ext(\Phi)) = \Phi$ for any compact convex set  $\Phi$ in $\mathbb{R}^n$.} First, we show that $\ext(\Phi) \subseteq \{0,1\}^n$.  This follows from Lemmas \ref{lem:TU} and \ref{lem:HK}, since we can rewrite $\Phi$ as  
    \begin{align}
        \Phi & = \left \{ \phi \in \bR^n : A_{i,J} \phi \geqq \begin{pmatrix}
            -|J^-| \\ 1 - |J^+|
        \end{pmatrix}, \,\,  0 \leqq \phi \leqq 1  \right\},\nonumber 
    \end{align}
    so $\ext(\Phi)\subseteq \bZ^n$. Thus, $\ext(\Phi) \subseteq \bZ^n \cap [0,1]^n = \{0,1\}^n$.     It remains to show that $\ext(\Phi)\subseteq \{0, 1\}^n\setminus R$. This follows because any point in $R$ does not satisfy the inequality constraints (i.e., $\phi \in \ext(\Phi)$ implies $\phi \not \in R$).
\end{proof}

In other words, the facets of the polyhedron $\conv(G)$ are defined by the same linear constraints $-|J^-| \leq \langle \phi , w_{i, J}\rangle \leq |J^+| - 1$ that define the set $G$ itself (in addition to the box constraints $[0,1]^n$).

\end{document}